\documentclass[11pt]{article}

\usepackage[margin=1in]{geometry}
\usepackage{amsmath,amssymb,amsfonts}
\usepackage{amsthm}
\usepackage{mathtools}
\usepackage{graphicx}
\usepackage{enumitem}
\usepackage{natbib}
\usepackage{hyperref}
\usepackage{mathrsfs}

\hypersetup{
    colorlinks=true,
    linkcolor=blue,
    urlcolor=blue,
    citecolor=blue
}

\title{Superloop Equations and Minimal Surfaces I:\\Confining minimal surface in \(4D,\mathcal N=1\) SYM 
}
\author{Alexander Migdal}
\date{July 17, 2026}

\newcommand{\Tr}{\operatorname{Tr}}

\newcommand{\Mathematica}{Wolfram Mathematica}
\newtheorem{theorem}{\bf Theorem}[section]

\newtheorem{lemma}{\bf Lemma}[section]
\newtheorem{remark}{\bf Remark}[section]
\newtheorem{proposition}{\bf Proposition}[section]
\newtheorem{definition}{\bf Definition}[section]

\begin{document}

\maketitle

\begin{abstract}
\begin{abstract}

We formulate the loop equations of pure
\(4d,\mathcal N=1\) super Yang--Mills (SYM) theory in a finite geometric
form.  The usual equal-point loop derivatives are singular because the
order of gauge-field insertions is lost when two points of the contour
are identified.  Our path-ordered operator calculus (POOC) keeps these
insertions in distinct ordered slots, forms the required graded
commutators and Jacobi combinations, and only then takes the coincidence
limit.  This removes the spurious kinematical singularities without
introducing a cutoff or an auxiliary scale.  The genuine ultraviolet
contact term remains a separate physical distribution.

We apply POOC to the exact Itoyama--Takashino superloop hierarchy and
construct a Lorentzian supersymmetric Hodge-dual (SHD) surface
functional.  Its Hodge-resolved area derivative is annihilated by the
local chiral and anti-chiral loop operators.  Exact additivity and
surface-geodesic sewing show that its exponential multiplies any
solution of the complete finite-\(N\) hierarchy without changing the
equations.  For planar contours the SHD functional reduces to the
geometric area.  In particular, for a long rectangular Wilson loop,
\[
    W[C_{T,L}]
    \sim
    \exp[-i\sigma_kLT],
    \qquad
    E_k(L)=\sigma_kL>0.
\]
The construction therefore gives an exact nonperturbative confining
area-law factor in \(\mathcal N=1\) SYM.

After Euclidean continuation, the same SHD factor is an exact fixed
point of the deterministic zero-noise SYM gradient flow.  Its
interpretation as a dynamically selected equilibrium still requires
stability in the long-flow-time, large-volume, and zero-noise limits.
This remaining dynamical question does not affect the exact POOC
zero-mode and finite-\(N\) dressing results.  The complete planar
Wilson-loop solution, including the undressed fluctuation factor and
the spectrum of excitations, will be developed in the next papers of
this series.

\end{abstract}
\end{abstract}

\section{Introduction}
\label{sec:Introduction}

Loop space is the natural language of a gauge theory. The Makeenko--Migdal equation \cite{MMEq79,MM1981NPB,Mig83} is the gauge-invariant Schwinger--Dyson equation written directly for Wilson loops. At finite \(N\) it is an infinite hierarchy: one loop may split at a self-contact, two different loops may join at a common point, and the finite-\(N\) color algebra reconnects the traces. In the planar limit the normalized multiloop correlators factorize and the hierarchy closes on the one-loop expectation value. This is why confinement has to be tested in the complete loop hierarchy. An area-law factor must survive every splitting, joining, and finite-\(N\) reconnection, not merely the planar one-loop equation.

There is, however, an old ambiguity in the formal loop calculus. Equal-point functional derivatives mix two different things: the physical ultraviolet contact and the kinematical ordering of operator insertions. The ordering must be fixed before the points are brought together. Treating this ambiguity as an ultraviolet divergence obscures the simple operator algebra of the loop equation.

The first purpose is to supersymmetrize the \emph{path-ordered operator calculus} (POOC). Its bosonic version was developed for the Makeenko--Migdal hierarchy in the Geometric QCD series \cite{migdal2025geometric,Migdal2026GeometricQCDII,Migdal2026GeometricQCDIII}. We place the tangent derivatives in distinct ordered slots, form their commutators, nested commutators, and Jacobi combinations at finite separation, and only then take the ordered coincidence limit. No auxiliary cutoff is introduced and no scale remains. This gives a direct operator definition of coincident loop differentiation and separates the loop algebra from the physical ultraviolet regularization.

We apply this finite calculus to the exact Schwinger--Dyson hierarchy for pure \(4d,\mathcal N=1\) super Yang--Mills theory derived by Itoyama and Takashino in 1996 \cite{ItoyamaTakashino1996,ItoyamaTakashino1997}. Their basic variable is the superspace Wilson loop, and their finite-\(N\) equations have the same splitting-and-joining structure as the ordinary Makeenko--Migdal hierarchy. A related superloop formulation of ordinary large-\(N\) QCD was developed at the same time in Ref.~\cite{Mig98Hidden}. There a local one-dimensional supersymmetry of a Grassmann-valued momentum loop combined the component loop equations into one object. That theory was QCD rather than a supersymmetric gauge theory, but the kinematical idea is the same and is used below.

The second purpose is to construct a Lorentzian supersymmetric Hodge-dual (SHD) finite-Stokes zero mode and prove that its phase is compatible with the complete finite-\(N\) hierarchy. This gives an exact four-dimensional example of gauge-string duality. The confining vacuum is described not by a fluctuating Polyakov worldsheet, but by a rigid master surface selected by the loop equation. Lorentzian signature is essential. Since \((*_4)^2=-1\), the two Hodge sectors are complex, while their physical combination must be a real Wilson-loop phase. For a long timelike rectangle of spatial width \(L\) and duration \(T\), the physical branch must obey
\begin{equation}
    W(T,L)
    \sim
    \exp[-iT E(L)],
    \qquad
    E(L)=\sigma L,
    \qquad
    \sigma>0.
    \label{eq:IntroductionStaticEnergyRequirement}
\end{equation}
Thus positivity refers to the Hamiltonian energy extracted from the Lorentzian Wilson loop, not to either complex Hodge component separately.  The corresponding string tension characterizes the selected nonperturbative vacuum branch.

\begin{remark}{\textbf{AI and \Mathematica{} as technical assistants}}

The equations of SUSY field theories are geometrically transparent and beautiful, but they are bulky when explicitly written in components. A human reader would be distracted when forced to follow the lines of Grassmann algebra with multiple tensor indices and implied conventions. The author could easily make mistakes in signs and coefficients of these equations. This is why the author employed AI assistants (GPT 5.6 Pro, and GPT 5.6 Sol Codex) to fix the coefficients and signs in the algebraic equations and also to verify all the main equations by creating \Mathematica{} notebooks where the Grassmann algebra was implemented and equations analytically verified. With this two-layer workflow, there is now a documented verification of the equations of this paper. While the author is responsible for the physical and geometric ideas and overall architecture, the analytic computations were outsourced to AI + \Mathematica{} and presented in Supplementary material. The readers are welcome to employ their AI to read and validate the supplementary material and \Mathematica{} files.
\end{remark}

\subsection*{Path-ordered superloop operators}

We use throughout the Lorentzian superspace notation and conventions
of Itoyama and Takashino. A closed superloop is a periodic map
\begin{equation}
    C:
    s\longmapsto
    z^M(s)
    =
    \left(
        x^a(s),
        \eta^\alpha(s),
        \bar\eta^{\dot\alpha}(s)
    \right),
    \qquad
    z^M(s+2\pi)=z^M(s).
    \label{eq:IntroductionSuperloop}
\end{equation}
The coordinate superloop is continuous and closed.  Its coordinate
and invariant tangents are allowed to be periodic functions of bounded
variation on the oriented circle.  Periodicity in this statement is
periodicity of the BV distribution; after choosing a marked cut it
does \emph{not} identify the two one-sided traces across that cut.  In
particular, \(E^A(2\pi-0)\) and \(E^A(0+0)\) may be different.  These
two traces are the ordered data used by POOC.  The BV loop class and
its integer Fourier representation are stated precisely in
Subsection~\ref{subsec:FlatN1SuperspaceAndSuperloops}.

The invariant one-form basis is denoted by \(e^A\), and \(D_A\) is
the dual flat superspace derivative.  For homogeneous superspace
indices \(A,B\), the constant flat-superspace torsion coefficients
\(T_{AB}{}^{C}\) are defined by the graded derivative algebra
\begin{equation}
    [D_A,D_B\}
    \equiv
    D_AD_B
    -
    (-1)^{|A||B|}
    D_BD_A
    =
    T_{AB}{}^{C}D_C .
    \label{eq:IntroFlatSuperspaceAlgebra}
\end{equation}
They obey
\[
    T_{AB}{}^{C}
    =
    -
    (-1)^{|A||B|}
    T_{BA}{}^{C}.
\]
In flat \(4d,\mathcal N=1\) superspace the only nonvanishing
components are
\[
    T_{\alpha\dot\beta}{}^{a}
    =
    T_{\dot\beta\alpha}{}^{a}
    =
    -2i\sigma^{a}_{\alpha\dot\beta}.
\]
Its pullback defines the invariant supertangent \(E^A(s)\):
\begin{equation}
    e^A\big|_C
    =
    ds\,E^A(s),
    \qquad
    \boxed{
    E^A(s)D_A
    =
    \frac{d}{ds}
    }.
    \label{eq:IntroductionInvariantTangentIdentity}
\end{equation}
The vector prepotential is denoted by \(V\); the symbol \(E^A\) is
reserved for the contour tangent.

With the Itoyama--Takashino connection convention,
\begin{equation}
    \nabla_A=D_A-A_A,
    \qquad
    [\nabla_A,\nabla_B\}
    =
    T_{AB}{}^C\nabla_C-F_{AB}.
    \label{eq:IntroductionCovariantDerivativeAlgebra}
\end{equation}
The ordinary Wilson transporter performs gauge parallel transport.
For the operator calculus we lift it by a flat backward translation,
so that the infinitesimal generator of the lifted transporter is
\(-E^A\nabla_A\). It is therefore natural to define the tangent
operator by
\begin{equation}
    \boxed{
    \delta_A(s)
    \equiv
    -\frac{\delta}{\delta E^A(s)}
    }.
    \label{eq:IntroductionTangentDerivative}
\end{equation}
This sign makes \(\delta_A(s)\) insert \(+\nabla_A\) at the marked
point.

For homogeneous tangent operators, the path-ordered graded
commutator is
\begin{equation}
\begin{aligned}
    [\delta_A,\delta_B\}_o(s)
    \equiv
    \lim_{\epsilon\downarrow0}
    \Bigl[
        &\delta_A(s-\epsilon)\delta_B(s+\epsilon)
        \\
        &-
        (-1)^{|A||B|}
        \delta_B(s-\epsilon)\delta_A(s+\epsilon)
    \Bigr].
\end{aligned}
    \label{eq:IntroductionOrderedCommutator}
\end{equation}
This ordered commutator inserts the complete covariant derivative
commutator, which contains both the known flat superspace torsion and
the gauge supercurvature. With the native I--T signs, the genuine
area derivative is therefore  defined as
\begin{equation}
    \boxed{
    \delta^\Sigma_{AB}(s)
    =
    T_{AB}{}^C\delta_C(s)
    -
    [\delta_A,\delta_B\}_o(s).
    }
    \label{eq:IntroductionAreaDerivative}
\end{equation}
Acting on the lifted transporter, this finite path-ordered operation
inserts \(F_{AB}(z(s))\) exactly.

For a homogeneous adjoint superfield \(X\), define its marked,
parallel-transported insertion by
\[
    \mathfrak I_s(X)
    =
    U(2\pi,s)\,
    X\bigl(z(s)\bigr)\,
    U(s,0).
\]
The ordered adjoint derivative is defined by placing the tangent
operation immediately before and immediately after the marked
insertion:
\begin{equation}
\begin{aligned}
    \mathfrak D_A^{\,o}(s)\,
    \mathfrak I_s(X)
    &\equiv
    \delta_A(s-0)\,
    \mathfrak I_s(X)
    -
    (-1)^{|A||X|}
    \mathfrak I_s(X)\,
    \delta_A(s+0)
    \\
    &=
    \mathfrak I_s\!\left(\mathcal D_AX\right),
    \qquad
    \mathcal D_AX
    =
    [\nabla_A,X\}
    =
    D_AX-[A_A,X\}.
    \label{eq:IntroOrderedAdjointDerivative}
\end{aligned}
\end{equation}
Here the two occurrences of \(\delta_A\) occupy distinct ordered
slots on the two sides of \(X\); the ordered coincidence limit is
taken only after their graded difference has been formed.
\begin{equation}
    \mathfrak D_A^{o}(s)\mathfrak I_s(X)
    =
    \mathfrak I_s(\mathcal D_A X),
    \qquad
    \mathcal D_A X=[\nabla_A,X\}.
    \label{eq:IntroductionOrderedAdjointDerivative}
\end{equation}
Thus POOC represents both the curvature insertion and its covariant
superspace derivative directly in operator form. Three tangent operations are kept in three
distinct ordered slots until their graded Jacobi combination has been
formed. Associativity at finite separation gives the exact graded
Jacobi identity; after torsion subtraction, this is the torsionful
superspace Bianchi identity. This identity is what gives the zero mode.

\subsection*{Parametric invariance}

The traditional notation \(W[C(s)]\) hides a simple fact: the
parameter \(s\) is not physical.  It is only a coordinate on the
oriented boundary circle.  Let
\begin{equation}
    \mathcal P_{\rm BV}
    =
    \left\{
        z:S^1\to\mathbb R^{4|4}
        \ \middle|\
        z\in C^0(S^1),\quad
        \dot z^M,\ E^A\in BV(S^1),\quad
        z(s+2\pi)=z(s)
    \right\}
    \label{eq:IntroductionParametrizedBVLoopSpace}
\end{equation}
be the computational space of parametrized superloops.  The physical
loop space is the quotient
\begin{equation}
    \boxed{
    \mathcal L_{\rm BV}
    =
    \mathcal P_{\rm BV}/\operatorname{Diff}^{+}(S^1).
    }
    \label{eq:IntroductionPhysicalBVLoopSpace}
\end{equation}
A point of this quotient retains the orientation, cyclic ordering,
multiple covering, winding number, and the distinct preimages of a
self-intersection.  It does not retain the numerical parameter used to
label these data.  A Wilson functional is therefore a function of the
orbit \([C]\), or equivalently an invariant functional on the
parametrized loop space,
\begin{equation}
    W[C\circ\varphi]=W[C],
    \qquad
    \varphi\in\operatorname{Diff}^{+}(S^1).
    \label{eq:IntroductionWilsonReparametrization}
\end{equation}

For the BV loop class, the natural reparametrization group is the
BV completion of the smooth orientation-preserving diffeomorphisms.
By a BV reparametrization we mean that the map itself is continuous,
strictly increasing, absolutely continuous, and bi-Lipschitz, while
its derivative may be a BV function:
\begin{equation}
\begin{aligned}
    \operatorname{Diff}^{+}_{\rm BV}(S^1)
    =\Bigl\{\varphi:\;&
    \varphi(u+2\pi)=\varphi(u)+2\pi,\quad
    \varphi\in W^{1,\infty},\quad
    \varphi'\in BV(S^1),
    \\
    &0<m\leq\varphi'(u)\leq M<\infty
    \quad\text{a.e.}\Bigr\}.
    \label{eq:IntroductionBVReparametrizationGroup}
\end{aligned}
\end{equation}
Thus ``BV'' refers to \(\varphi'\), not to a discontinuous map
\(\varphi\).  Smooth diffeomorphisms form a dense subgroup, and the
POOC identities extend to this completion by continuity.  More
singular monotone maps, with a singular measure in \(d\varphi\), are
not included because they would turn a BV tangent into a measure.

This larger group preserves the geometric meaning of a cusp.  If
\(s_0=\varphi(u_0)\) and the loop has nonzero one-sided tangent traces
\(v_\pm=C'(s_0\pm0)\), then
\begin{equation}
    \widetilde v_\pm
    =
    \varphi'(u_0\pm0)v_\pm.
    \label{eq:IntroductionCuspTraceTransformation}
\end{equation}
The two factors \(\varphi'(u_0\pm0)\) may be different, but they are
positive.  Hence the directions of the two one-sided tangents, and therefore
the Euclidean cusp angle or Lorentzian cusp rapidity, are unchanged.
A jump of \(\varphi'\) at a geometrically smooth point creates only a
jump of speed, not a cusp.  A geometric cusp is characterized by
\begin{equation}
    v_+\notin\mathbb R_{>0}v_-,
    \label{eq:IntroductionGeometricCuspCriterion}
\end{equation}
not merely by \(v_+\neq v_-\).

The advantage of tangent differentiation is now manifest.  Let
\begin{equation}
    s=\varphi(u),
    \qquad
    \varphi'(u)>0,
    \label{eq:IntroductionOrientationPreservingReparametrization}
\end{equation}
and write the same superloop in the new parameter as
\begin{equation}
    \widetilde z^M(u)=z^M(\varphi(u)).
    \label{eq:IntroductionReparametrizedSuperloop}
\end{equation}
Since \(e^A|_C\) is an invariant one-form,
\begin{equation}
    du\,\widetilde E^A(u)
    =
    ds\,E^A(s),
    \qquad
    \widetilde E^A(u)
    =
    \varphi'(u)E^A(\varphi(u)).
    \label{eq:IntroductionTangentReparametrization}
\end{equation}
The tangent has reparametrization weight one.  In the functional chain
rule, however, this Jacobian is canceled by the Jacobian of the
functional delta-function.  For every loop functional \(F\),
\begin{equation}
    \boxed{
    \frac{\delta F}
         {\delta\widetilde E^A(u)}
    =
    \left.
    \frac{\delta F}
         {\delta E^A(s)}
    \right|_{s=\varphi(u)} .
    }
    \label{eq:IntroductionTangentDerivativeWeightZero}
\end{equation}
Consequently, the POOC tangent operation has weight zero,
\begin{equation}
    \boxed{
    \widetilde\delta_A(u)
    =
    -\frac{\delta}{\delta\widetilde E^A(u)}
    =
    \delta_A(\varphi(u)).
    }
    \label{eq:IntroductionPOOCWeightZero}
\end{equation}
This is very different from the ordinary coordinate-path derivative.
Since \(z^M\) is a scalar,
\begin{equation}
    \frac{\delta F}
         {\delta\widetilde z^M(u)}
    =
    \varphi'(u)
    \left.
    \frac{\delta F}
         {\delta z^M(s)}
    \right|_{s=\varphi(u)} ,
    \label{eq:IntroductionCoordinateDerivativeWeightOne}
\end{equation}
so the coordinate derivative is a one-dimensional density of weight
one.

An orientation-preserving reparametrization preserves the cyclic order
and the two one-sided traces,
\[
    u-0\longmapsto \varphi(u)-0,
    \qquad
    u+0\longmapsto \varphi(u)+0.
\]
It follows that every local POOC operation is a reparametrization
scalar before any contour integration is performed:
\begin{equation}
\begin{aligned}
    [\widetilde\delta_A,\widetilde\delta_B\}_o(u)
    &=
    [\delta_A,\delta_B\}_o(\varphi(u)),
    \\
    \widetilde\delta^\Sigma_{AB}(u)
    &=
    \delta^\Sigma_{AB}(\varphi(u)),
    \\
    \widetilde{\mathfrak D}^{\,o}_A(u)
    &=
    \mathfrak D^{\,o}_A(\varphi(u)),
    \\
    \widetilde{\widehat{\mathbb L}}_\pm(u)
    &=
    \widehat{\mathbb L}_\pm(\varphi(u)).
\end{aligned}
    \label{eq:IntroductionPOOCParametricInvariance}
\end{equation}
Thus POOC is not merely covariant under a change of parameter.  Its
local operators are already defined on
\(\mathcal L_{\rm BV}\).

This observation also explains the old singularity of the coordinate
loop-space Laplacian.  A second coordinate derivative carries one
reparametrization weight at each of its two points.
For a local functional its diagonal singularity has the schematic form
\begin{equation}
    \frac{\delta^2F}
    {\delta z^M(s)\delta z_M(t)}
    =
    \delta(s-t)\,\mathcal K_F(s)
    +
    \partial_s\delta(s-t)\,\mathcal K_F^{(1)}(s)
    +\cdots .
    \label{eq:IntroductionCoordinateHessianDistribution}
\end{equation}
The coefficient \(\mathcal K_F(s)\) has weight one, not weight zero.
It may have a finite component in a chosen parametrization, but it has
no finite reparametrization-independent value at a point of loop space.  Its
meaning is distributional and appears only after pairing with an
object of complementary weight or after introducing an auxiliary
one-dimensional metric.  The formal equal-point loop Laplacian
therefore mixed a genuine ultraviolet contact with this purely
kinematical density-weight obstruction.  POOC removes the latter.  The
only distributional object left in the loop equation is the physical
ultraviolet contact kernel on its right-hand side.

A solution is therefore defined on the same space of loops modulo reparametrization.  The exact
hierarchy is an equation for
\begin{equation}
    W_n:
    \mathcal L_{\rm BV}^{\,n}\longrightarrow\mathbb C,
    \label{eq:IntroductionSolutionOnLoopQuotient}
\end{equation}
although any convenient parametrized representative may be used in a
calculation.  The integrated contact is parametrically invariant:
\(D_f(t)\) has weight one, \(dt\,D_f(t)\) is a
one-form, and the target-superspace delta-function depends only on the
positions.  At a self-contact, the numbers \(s,t\) merely label two
ordered preimages; after splitting, each daughter loop may be placed
on its own standard boundary circle.

Physical quark-loop observables have precisely the corresponding
quotient measure,
\begin{equation}
    \int
    \frac{\mathcal DC\,\mathcal DP}
         {\operatorname{Vol}(\operatorname{Diff}^{+}(S^1))}
    \,\cdots
    =
    \int_{\mathcal L_{\rm BV}}d\mu([C])\,\cdots ,
    \label{eq:IntroductionPathIntegralDiffQuotient}
\end{equation}
see Refs.~\cite{Migdal2026GeometricQCDII,Migdal2026GeometricQCDIII}.
It is therefore sufficient to construct one representative of every
loop orbit used in the solution.  The Super--Weierstrass variables do
this directly,
\begin{equation}
    (\lambda,\widetilde\lambda,\rho,\widetilde\rho)
    \longrightarrow
    (f',\widetilde f')
    \longrightarrow
    C'(\theta)
    \longrightarrow
    C(\theta)
    \longrightarrow
    [C].
    \label{eq:IntroductionTwistorGaugeOnLoopSpace}
\end{equation}
No inverse Douglas problem and no minimization over an unknown boundary
reparametrization are required.  BV tangent jumps are admissible
ordered loop data, not ultraviolet divergences of the finite-Stokes
sector.  The detailed weight calculation, the invariance of the
contact pairing, and the relation to the twistor gauge are given in
Supplementary Material, Subsec.~S1.1.

\subsection*{The exact Itoyama--Takashino hierarchy in POOC form}

Itoyama and Takashino obtain their closed equation by a nontrivial
Wess--Zumino-gauge derivation. The Wess--Zumino prepotential is
Grassmann-nilpotent, so its component expansion terminates. They
construct a field-dependent even vector field on the component space
\((v_a,\lambda_\alpha,\bar\lambda_{\dot\alpha},D)\) whose action on
the action gives the complete covariant chiral equation-of-motion
superfield. Its triangular field dependence makes its local graded
functional divergence vanish. When the same variation acts on the
Wilson line, it produces a nontrivial connection projector. The
explicit component fields in this projector can be rewritten exactly as
endpoint derivatives and mixed superspace area derivatives on the two
open transporter segments. Finally, gauge-unfixing converts the
Wess--Zumino contact expression into a manifestly supersymmetric
kernel; their difference is a total derivative in the integrated
contour point and vanishes by the restricted-supergauge Ward identity.
The marked point remains arbitrary throughout. The complete
component derivation is reconstructed in Supplementary Material,
Sec.~S4.

The resulting local chiral and anti-chiral operators take the
POOC forms
\begin{equation}
    \boxed{
    \widehat{\mathbb L}_+(s)
    =
    \frac18
    \epsilon_{\alpha\beta}
    \bar\sigma^{a\dot\alpha\beta}
    \mathfrak D^{o\,\alpha}(s)
    \delta^\Sigma_{a\dot\alpha}(s),
    }
    \label{eq:IntroductionChiralOperator}
\end{equation}
\begin{equation}
    \boxed{
    \widehat{\mathbb L}_-(s)
    =
    \frac18
    \epsilon_{\dot\alpha\dot\beta}
    \sigma^{a\alpha\dot\beta}
    \mathfrak D^{o\,\dot\alpha}(s)
    \delta^\Sigma_{a\alpha}(s).
    }
    \label{eq:IntroductionAntiChiralOperator}
\end{equation}
Since vector--spinor torsion vanishes,
\begin{equation}
    \delta^\Sigma_{a\dot\alpha}
    =
    -[\delta_a,\delta_{\dot\alpha}\}_o,
    \qquad
    \delta^\Sigma_{a\alpha}
    =
    -[\delta_a,\delta_\alpha\}_o,
    \label{eq:IntroductionMixedAreaDerivative}
\end{equation}
so each local operator acts as a triple path-ordered graded commutator with
the sign fixed by the I--T connection convention.

Although all explicit Lorentz indices in
\(\widehat{\mathbb L}_\pm\) are contracted, the resulting loop
equations are not one-component scalar equations. They are
Grassmann-even superfield equations at the arbitrary marked point
\(z(s)\). Their finite expansions in \(\eta(s)\) and
\(\bar\eta(s)\) package the component Schwinger--Dyson equations of
the entire gauge supermultiplet, mirroring the  superloop
formulation of large-\(N\) QCD \cite{Mig98Hidden}.
The flat spinor derivatives appearing in the contact kernel act at
the integrated superspace point \(z(t)\).  To distinguish evaluation
at that point from differentiation with respect to the contour
parameter \(t\), we define
\begin{equation}
    D_\alpha[t]\,\Phi\bigl(z(t)\bigr)
    \equiv
    \left.
        D_\alpha\Phi(z)
    \right|_{z=z(t)},
    \qquad
    \bar D_{\dot\alpha}[t]\,\Phi\bigl(z(t)\bigr)
    \equiv
    \left.
        \bar D_{\dot\alpha}\Phi(z)
    \right|_{z=z(t)} .
    \label{eq:IntegratedPointSpinorDerivatives}
\end{equation}
In chiral coordinates these operators are
\begin{equation}
    D_\alpha[t]
    =
    \left.
    \left(
        \frac{\partial}{\partial\eta^\alpha}
        +
        2i
        (\sigma^a\bar\eta)_\alpha
        \frac{\partial}{\partial y^a}
    \right)
    \right|_{z=z(t)},
    \qquad
    \bar D_{\dot\alpha}[t]
    =
    -
    \left.
        \frac{\partial}{\partial\bar\eta^{\dot\alpha}}
    \right|_{z=z(t)} .
    \label{eq:IntegratedPointSpinorDerivativesChiral}
\end{equation}
The brackets \([t]\) denote evaluation at \(z(t)\); they do not denote
a derivative with respect to \(t\).
The spinor derivatives act distributionally on the complete contact
pairing to their right, including the endpoint dependence of the
split-loop factors.
The chiral contact distribution is built from the one-form-valued line
operator
\begin{equation}
    D_f^\alpha
    =
    e^\alpha
    -
    \frac{i}{4}
    e^a\bar\sigma_a^{\dot\alpha\alpha}
    \bar D_{\dot\alpha},
    \label{eq:IntroductionChiralLineOperator}
\end{equation}
and its pullback to the integrated point \(z(t)\). 
The chiral differentiated contact kernel is therefore
\begin{equation}
    K_+(s,t;C)
    =
    D_f^\alpha(t)\,
    D_\alpha[t]\,
    \delta^{4|4}\!
    \left(
        z(t)-z(s)
    \right),
    \label{eq:ChiralContactKernel}
\end{equation}
and the anti-chiral kernel is
\begin{equation}
    K_-(s,t;C)
    =
    \bar D_f^{\dot\alpha}(t)\,
    \bar D_{\dot\alpha}[t]\,
    \delta^{4|4}\!
    \left(
        z(t)-z(s)
    \right).
    \label{eq:AntiChiralContactKernel}
\end{equation}

For \(SU(N)\), the exact one-loop equation is
\begin{equation}
    \boxed{
    \widehat{\mathbb L}_\pm(s)W[C]
    =
    \lambda
    \int_s^{s+2\pi}dt\,
    K_\pm(s,t;C)
    \left[
        W_2(C_{st},C_{ts})
        -
        \frac1{N^2}W[C]
    \right],
    }
    \label{eq:IntroductionFiniteNOneLoopHierarchy}
\end{equation}
where \(\lambda=g^2N\). The complete finite-\(N\) hierarchy also
contains joining of distinct loops and the corresponding
traceless-generator subtraction terms. In the planar limit, joining
and all subtraction terms are suppressed and
\(W_2(C_{st},C_{ts})\) factorizes into the product of the two split
one-loop averages.

The differentiated contact on the right-hand side is a genuine
short-distance physical distribution and must be distinguished sharply from the
ordered separation used to define POOC. We introduce a
Lorentz-covariant finite-width superspace bi-kernel and short gauge
transporters joining the separated contour points. The complete near-diagonal region is retained, and the hierarchy is solved at
fixed ultraviolet scale. The Wilson-loop functional remains an intermediate
gauge-invariant object; renormalization is imposed only after this solution has
been inserted into the worldline, worldsheet, proper-time, or other path
integrals defining physical observables.
Consequently, independent perimeter or cusp terms are not arbitrarily added to the finite-Stokes
solution space of the hierarchy.

\subsection*{Finite Stokes functionals and the dressing theorem}

An even loop functional \(S[C]\) is called a finite Stokes functional
when its first variation has the local form
\begin{equation}
    \delta S[C]
    =
    \int_C ds\,
    \xi^A(s)E^B(s)\Omega_{BA}[C;s],
    \label{eq:IntroductionFiniteStokesVariation}
\end{equation}
with a finite local super two-form \(\Omega_{AB}\). Its
bounded-variation point derivative is the discontinuity of the tangent
variation across a marked point. For a finite Stokes functional this
discontinuity vanishes. The torsion-subtracted area derivative then
obeys the graded chain rule, and the local I--T operators satisfy
\begin{equation}
    \widehat{\mathbb L}_\pm f(S)
    =
    f'(S)\widehat{\mathbb L}_\pm S
    \label{eq:IntroductionStokesChainRule}
\end{equation}
for every ordinary function of an even finite Stokes functional.
This gives the product rule: an exponential finite-Stokes zero mode
factors through every local left-hand side of the hierarchy.

The differentiated contact kernels make the right-hand side more
subtle.  At fixed ultraviolet scale we choose the connector to be a
geodesic wire on the selected stationary surface.  The same wire enters
the gauge-covariant contact and closes the two cut boundary arcs.  It is
an internal cut of one primary surface, so the two daughter
surfaces are obtained by restriction of the same Weierstrass map rather
than by solving two new Dirichlet problems.  For same-loop splitting,
define
\begin{equation}
    \Delta_S^{\rm split}(s,t)
    =
    S\!\left[C_{st}^{(\Lambda)}\right]
    +
    S\!\left[C_{ts}^{(\Lambda)}\right]
    -
    S[C],
    \label{eq:IntroductionSplittingDefect}
\end{equation}
and define the joining defect analogously.  Locality of the surface
action gives an exact partition identity,
\(
\Delta_S^{\rm split}\equiv0
\),
while the joined branch is the corresponding disjoint union with the
same wire traversed twice in opposite directions.  The dressing ratio
is therefore identically one as a functional of the two endpoint data,
not merely equal to one on the support of the contact.  Exact additivity
identifies the sum of the daughter actions with the same parent
finite-Stokes functional.  Proposition~\ref{prop:MainFiniteStokesChainRules}
then gives the three conditions required by the differentiated kernel,
\(
\Delta_S=0
\),
\(
\mathfrak D_A(s)\Delta_S=\mathfrak D_A(t)\Delta_S=0
\), and
\(
P_\alpha^{(s)}Q_{(t)}^\alpha\Delta_S=0
\), together with the anti-chiral conjugate.  No separate continuity or
linearized-sewing hypothesis is needed.

The result is exact at finite \(N\). If
\(W_{n,\Lambda}(C_1,\ldots,C_n)\) solves the complete
ultraviolet-defined I--T hierarchy and \(S[C]\) satisfies the
finite-Stokes, zero-mode, additivity, and sewing conditions, then
\begin{equation}
    \boxed{
    \widetilde W_{n,\Lambda}(C_1,\ldots,C_n)
    =
    W_{n,\Lambda}(C_1,\ldots,C_n)
    \exp\!\left[
        \rho\sum_{j=1}^{n}S[C_j]
    \right]
    }
    \label{eq:IntroductionFiniteNDressing}
\end{equation}
solves exactly the same hierarchy for every even constant \(\rho\), real or
complex. The theorem includes same-loop splitting, different-loop
joining, and all finite-\(N\), \(SU(N)\) subtraction terms. The planar result is a corollary, not an assumption of the proof.

\subsection*{The Lorentzian supersymmetric Hodge-dual zero mode}

The geometric zero modes completing this correspondence are supplied by a Lorentzian supersymmetric
Hodge-dual surface. The physical superloop is first mapped to its
even invariant boundary curve
\begin{equation}
    \mathcal C^a(s)
    =
    \mathcal C^a(s_0)
    +
    \int_{s_0}^{s}du\,E^a(u),
    \qquad
    \oint_Ce^a=0.
    \label{eq:IntroductionInvariantDevelopment}
\end{equation}
The second condition restricts us to superloops satisfying this invariant
closure condition.

For a complex Lorentzian two-form,
\begin{equation}
    (*_4)^2=-1.
\end{equation}
The chiral and anti-chiral Hodge sectors are selected by
\begin{equation}
    \boxed{
    P_\chi
    =
    \frac12(1+i\chi *_4),
    \qquad
    *_4P_\chi=-i\chi P_\chi,
    \qquad
    \chi=\pm1.
    }
    \label{eq:IntroductionLorentzianHodgeProjector}
\end{equation}
Let \(Y^a(\xi^1,\xi^2)\) be a complexified stationary Minkowski
embedding with boundary \(Y^a|_{\partial\mathcal D}=\mathcal C^a\).
For its tangent bivector \(B_{ab}\), define
\begin{equation}
    \Sigma_{ab}^{(\chi)}
    =
    2(P_\chi B)_{ab}
    =
    B_{ab}+i\chi(*_4B)_{ab}.
    \label{eq:IntroductionProjectedBivector}
\end{equation}
The stationary Lorentzian functional is
\begin{equation}
    S_\chi^{\rm HD}[Y]
    =
    -\int_{\mathcal D}d^2\xi\,
    \sqrt{
        -\frac12
        \Sigma_{ab}^{(\chi)}
        \Sigma^{(\chi)ab}
    }.
    \label{eq:IntroductionLorentzianHDFunctional}
\end{equation}
On a real timelike worldsheet its density is
\(\sqrt2\sqrt{-\det h}\), with the Lorentzian action sign shown
above. Hodge chirality is not an additional branch assumption; it follows directly from the projector.

The stationary parent surface directly supplies a finite Stokes area
 derivative $\Omega_{ab}^{(\chi)}$ belonging to one Lorentzian Hodge
 sector. No independently postulated chiral field strength,
 superinstanton, or superspace two-form completion is introduced. The
 complete supersymmetric dependence is carried by the invariant superspace
 one-form $\Pi^a|_C$ and hence by the invariant boundary curve. On the regular
 Super--Weierstrass branch constructed below, the stationary value of
 the parent functional defines $S_\chi[C]$ directly.  Because the bulk
 equations are satisfied, its first variation is entirely the boundary
 Stokes term. No separate
 loop-space curl or functional-integrability condition is required. On
 a physical Lorentzian slice the two branches are chosen as conjugates,
\begin{equation}
    S_-[C]=\bigl(S_+[C]\bigr)^\dagger.
    \label{eq:IntroductionSHDReality}
\end{equation}

The zero-mode identity is operator-theoretic. The
 torsion-subtracted POOC Jacobi identity is the loop-space Bianchi
 identity. After the Itoyama--Takashino chiral or anti-chiral
 projection, four-dimensional Hodge duality converts this identity
 directly into the homogeneous local loop equation. Therefore
\begin{equation}
    \boxed{
    \widehat{\mathbb L}_\pm S_+[C]
    =
    \widehat{\mathbb L}_\pm S_-[C]
    =0.
    }
    \label{eq:IntroductionCommonZeroModes}
\end{equation}

At a genuine loop contact, the free-trace BV
Super--Weierstrass data are restricted to the two sides of a
geodesic cut on the parent surface.  The four one-sided twistor germs are
redistributed between the daughter loops, and each daughter is placed
on its own standard boundary circle by moving the cut on the BV circle.  The
holomorphic functions are not recomputed: both daughters are conformal
pullbacks of restrictions of the same parent map.  The local parent
action therefore splits exactly into the two daughter actions.  The
same statement holds for the joined branch, where the double wire has
zero two-dimensional measure.  Consequently the splitting and joining
defects of \(S_+\) and \(S_-\) obey, branch by branch,
\(
\Delta_\chi=0
\),
\(
\mathfrak D_A(s)\Delta_\chi=\mathfrak D_A(t)\Delta_\chi=0
\), and
\(
P_\alpha^{(s)}Q_{(t)}^\alpha\Delta_\chi=0
\), together with the anti-chiral conjugate.  These are the geometric
conditions required by the finite-\(N\) dressing theorem.

For pure \(SU(N)\), let \(\Lambda_k^3\) denote the holomorphic strong
scale on the \(k\)-th supersymmetric vacuum branch, and choose a branch
of \(\Lambda_k^2=(\Lambda_k^3)^{2/3}\). The two conjugate Hodge
functionals unite into the real Lorentzian phase
\begin{equation}
    \boxed{
    \Phi_{\rm SHD}^{(k)}[C]
    =
    \frac{c}{2}
    \left(
        \Lambda_k^2S_+[C]
        +
        \bar\Lambda_k^2S_-[C]
    \right)
    =
    c\,\Re\!\left[
        \Lambda_k^2S_+[C]
    \right],
    }
    \label{eq:IntroductionPhysicalSHDPhase}
\end{equation}
where \(c\) is a real dimensionless coefficient. The exact dressed
solution is
\begin{equation}
    \boxed{
    W_{\Lambda,k}[C]
    =
    W_{\Lambda,0}[C]
    \exp\!\left[
        i\Phi_{\rm SHD}^{(k)}[C]
    \right].
    }
    \label{eq:IntroductionSHDDressedSolution}
\end{equation}
While the hierarchy perfectly constraints the admissible geometric zero mode, it does not
determine \(c\), the fractional-power branch, or the selected vacuum.
These remain defining properties of nonperturbative vacuum data.

For planar contours, no additional Plateau-asymptotic assumption is
required.  The exact Virasoro--Weierstrass reduction gives
\(S_+[C_{\rm pl}]=S_-[C_{\rm pl}]=2\sqrt2\,D[\mathcal C]\), and for the
ordinary static-source rectangle the boundary fermions vanish and
\(D[\mathcal C]=LT_E\).  Defining the positive vacuum coefficient
\(\kappa_k\) and
\(\sigma_k=2\sqrt2\,\kappa_k\), analytic continuation gives
\begin{equation}
    \Phi_{\rm SHD}^{(k)}[C_{T,L}]
    =-\sigma_kLT,
    \qquad
    \sigma_k>0,
    \label{eq:IntroductionRectanglePhase}
\end{equation}
and therefore
\begin{equation}
    W_{\Lambda,k}[C_{T,L}]
    =W_{\Lambda,0}[C_{T,L}]\exp[-i\sigma_kLT],
    \qquad
    \boxed{
    E_{{\rm SHD},k}(L)=\sigma_kL>0.
    }
    \label{eq:IntroductionPositiveStaticEnergy}
\end{equation}
Thus the two complex Lorentzian Hodge sectors combine into a real confining
phase whose planar static-energy contribution is exactly linear and positive.

\subsection*{Interpretation and organization}

This exact zero mode has two physical interpretations. First, after
analytic continuation to the Euclidean parabolic flow, the SHD factor is an
exact fixed point of the zero-noise SYM loop flow.  The induced
loop-flow generator is the exact local equation-of-motion combination of
\(\widehat{\mathbb L}_+\) and \(\widehat{\mathbb L}_-\); because both local
operators annihilate the SHD factor at every marked point, its stationarity
requires no additional factorization assumption.  A nontrivial stationary
Wilson-loop state can survive removal of the explicit stochastic source used
in stochastic quantization~\cite{ParisiWu1981,DamgaardHueffel1987} provided
this fixed point is dynamically stable and selected in the appropriate
long-flow-time, large-volume, and zero-noise limits.  This gives a
supersymmetric realization of \emph{spontaneous quantization}
\cite{Migdal_1986_Stochastic,migdal2025SQYMflow}.  Exact stationarity is
proved here; stability and dynamical selection remain conditional.  The
algebraic POOC zero-mode and finite-\(N\) dressing theorems are independent of
this remaining dynamical question.

Second, at large \(N\), this fixed loop-to-surface map is a gauge-invariant
version of Witten's master field \cite{Witten:1980ez}. It is not a fluctuating
colored gauge configuration. It is the map
\begin{equation}
    C
    \longmapsto
    \mathcal C_C
    \longmapsto
    Y_C
    \longmapsto
    \bigl(S_+[C],S_-[C]\bigr)
    \longmapsto
    \Phi_{\rm SHD}^{(k)}[C].
    \label{eq:IntroductionMasterSurfaceMap}
\end{equation}
This master surface describes the confining vacuum geometry carried by
the exact zero mode, while the undressed factor \(W_{\Lambda,0}[C]\)
encodes the remaining gauge dynamics.

The paper is organized as follows. Section~\ref{sec:LoopCalculus}
develops POOC in the native Lorentzian Itoyama--Takashino conventions:
the invariant supertangent, the lifted transporter, the
path-ordered commutator, the torsion-subtracted area derivative, the
ordered adjoint derivative, and the three-point Jacobi identity.
Section~\ref{sec:SchwingerDysonContactTerm} establishes the geometric form of
the exact I--T hierarchy, its local chiral and anti-chiral operators,
the finite-\(N\) splitting-and-joining equations, and the separate
ultraviolet definition of the contact term. The full Wess--Zumino
component derivation is  reconstructed in Supplementary Material,
Sec.~S4. Section~\ref{sec:StokesFunctionalsLeibniz} proves the finite-Stokes
chain and product rules, the exact geodesic-wire sewing
proposition, and the finite-$N$ dressing theorem. Section
\ref{sec:SuperHodgeDualSurface} constructs the Lorentzian SHD
functionals, confirms their common zero-mode equations, and extracts the
physical confining phase and positive static energy under the stated
geometric branch assumptions. Section
\ref{sec:SpontaneousStochasticization} articulates the conditional
gradient-flow interpretation, Section
\ref{sec:SHDAsWittenMasterField} outlines the planar master-surface
interpretation, and Section~\ref{sec:Conclusions} summarizes the
results and their physical implications.  Appendix~\ref{app:POOCExistingLoopCalculus}
places POOC in the earlier mathematical theory of generalized loops,
curvature derivatives, and bounded-variation paths; Supplementary Material,
Sec.~S11 gives the detailed comparison.
\section{Path-ordered superloop operator calculus}
\label{sec:LoopCalculus}

This section develops the path-ordered operator calculus (POOC) used
throughout the paper.  The construction is kinematical: it applies to
an arbitrary superspace connection and does not use the
Schwinger--Dyson equation.  We adopt the Lorentzian superspace
notation, metric, spinor conventions, connection sign, and transporter
orientation of Itoyama and Takashino
\cite{ItoyamaTakashino1997}.  The same conventions will be used in the
main text and in the Supplementary Material.

The basic principle is that coincident loop differentiation is defined
by path ordering rather than by an auxiliary regulator.  Tangent
operations are first placed in distinct ordered slots on the contour.
Their graded commutators, nested commutators, and Jacobi combinations
are formed at finite separation; only then is the ordered coincidence
limit taken.  This produces finite operator insertions and eliminates
the spurious kinematical singularities of formal equal-point
functional differentiation.  A genuine ultraviolet cutoff enters only
later, in the distributional contact term of the quantum loop
equation.

\subsection{Lorentzian superspace, invariant tangent, and BV loop class}
\label{subsec:FlatN1SuperspaceAndSuperloops}

We work in flat \(4d,\mathcal N=1\) Lorentzian superspace with
\begin{equation}
    z^M=(x^a,\eta^\alpha,\bar\eta^{\dot\alpha}),
    \qquad
    \eta_{ab}=\operatorname{diag}(-1,+1,+1,+1),
    \qquad
    \epsilon^{0123}=+1.
    \label{eq:SuperspaceCoordinates}
\end{equation}
All spinor, sigma-matrix, raising/lowering, and left-Grassmann-derivative
conventions are those of Itoyama and Takashino and are collected in
Supplementary Material, Sec.~S1.  The flat
derivatives obey
\begin{equation}
    \{D_\alpha,\bar D_{\dot\alpha}\}
    =-2i\sigma^a_{\alpha\dot\alpha}D_a,
    \qquad
    \{D_\alpha,D_\beta\}
    =\{\bar D_{\dot\alpha},\bar D_{\dot\beta}\}=0,
    \label{eq:FlatSuperspaceDerivativeAlgebra}
\end{equation}
or, collectively,
\begin{equation}
    [D_A,D_B\}=T_{AB}{}^C D_C,
    \qquad
    T_{\alpha\dot\alpha}{}^a
    =T_{\dot\alpha\alpha}{}^a
    =-2i\sigma^a_{\alpha\dot\alpha}.
    \label{eq:FlatSuperspaceTorsionAlgebra}
\end{equation}

The invariant one-form basis is
\begin{equation}
\begin{aligned}
    e^a&=dx^a-i\,d\eta\sigma^a\bar\eta
                  +i\,\eta\sigma^a d\bar\eta,\\
    e^\alpha&=d\eta^\alpha,
    \qquad
    e^{\dot\alpha}=-d\bar\eta^{\dot\alpha},
\end{aligned}
    \label{eq:InvariantOneFormBasis}
\end{equation}
with \(d=e^A D_A\).  A closed superloop is a continuous periodic map
\begin{equation}
    C:\ s\mapsto z^M(s),
    \qquad z^M(s+2\pi)=z^M(s),
    \label{eq:SuperloopDefinition}
\end{equation}
and its invariant tangent is defined by
\begin{equation}
    e^A\big|_C=ds\,E^A(s).
    \label{eq:InvariantTangentDefinition}
\end{equation}
Explicitly,
\begin{equation}
\begin{aligned}
    E^a&=\dot x^a-i\dot\eta\sigma^a\bar\eta
                    +i\eta\sigma^a\dot{\bar\eta},\\
    E^\alpha&=\dot\eta^\alpha,
    \qquad
    E^{\dot\alpha}=-\dot{\bar\eta}^{\dot\alpha},
\end{aligned}
    \label{eq:InvariantVectorTangent}
\end{equation}
and hence
\begin{equation}
    \boxed{E^A(s)D_A=\frac{d}{ds}}
    \label{eq:InvariantTangentDerivative}
\end{equation}
on every superfield pulled back to the contour.

\paragraph{BV loop class and circle-ordered symbols.}
The loop space used below consists of continuous closed maps whose
coordinate and invariant tangents have bounded variation,
\begin{equation}
    z^M\in C^0(S^1),
    \qquad
    \dot z^M,\ E^A\in BV(S^1).
    \label{eq:BVSuperloopClass}
\end{equation}
A periodic BV tangent is a function, equivalently a distribution, on
the oriented circle.  At every marked point it has two finite traces,
\begin{equation}
    E^A(s-0),
    \qquad
    E^A(s+0).
    \label{eq:BVOneSidedTangentTraces}
\end{equation}
After cutting the circle at \(s=0\), periodicity does not imply
\begin{equation}
    E^A(2\pi-0)=E^A(0+0).
    \label{eq:NoBVTangentTraceMatching}
\end{equation}
These are the two traces of a possible jump at the marked cut.  Closure
of the coordinate loop constrains only the zero mode,
\begin{equation}
    z^M(2\pi)-z^M(0)
    =\int_0^{2\pi}ds\,\dot z^M(s)=0,
    \label{eq:CoordinateLoopClosureZeroMode}
\end{equation}
not the local tangent traces.

Every periodic BV symbol has the ordinary integer Fourier
representation
\begin{equation}
    E^A(s)\sim\sum_{n\in\mathbb Z}E_n^A e^{ins}.
    \label{eq:BVIntegerFourierSeries}
\end{equation}
At a jump, the symmetric Fourier sums converge in the
Dirichlet--Jordan sense to
\begin{equation}
    \frac12\bigl(E^A(s-0)+E^A(s+0)\bigr),
    \label{eq:BVFourierJumpAverage}
\end{equation}
while the two traces remain distinct.  The slow \(1/|n|\) tail is part
of the ordered data, not a failure of periodicity.

This is the natural regularity class for POOC.  In a circle-ordered
exponential the BV coefficient \(E^A(s)\) is a one-dimensional symbol
multiplying the operator \(\nabla_A\).  Functional differentiation
inserts that operator at the ordered point, and the two traces represent
the slots immediately before and after the mark.  Their ordered
difference therefore realizes the operator commutator.  The detailed
Dirichlet--Jordan and distributional statements are recorded in
Supplementary Material, Secs.~S1 and ~S5.

\subsection{Constrained connection, super Wilson line, and operator lift}
\label{subsec:GaugeCovariantDerivativesAndWilsonOperator}

Let \(A=e^AA_A\) and
\begin{equation}
    \nabla_A=D_A-A_A,
    \qquad
    [\nabla_A,\nabla_B\}
    =T_{AB}{}^C\nabla_C-F_{AB}.
    \label{eq:GaugeCovariantDerivative}
\end{equation}
The conventional \(4d,\mathcal N=1\) constraints are
\begin{equation}
    F_{\alpha\beta}
    =F_{\dot\alpha\dot\beta}
    =F_{\alpha\dot\beta}=0.
    \label{eq:ConventionalSYMConstraints}
\end{equation}
The chiral-prepotential solution and the complete component
conventions are given in Supplementary Material,
Secs.~S1 and ~S4.

For an oriented segment, the super Wilson line is
\begin{equation}
    W_S[C_{z(s_2)z(s_1)}]
    =P\exp\int_{s_1}^{s_2}du\,E^A(u)A_A(z(u)).
    \label{eq:SuperWilsonLineParameterForm}
\end{equation}
Introduce the flat backward translation \(\mathsf T(s_1,s_2)\), which
returns a test superfield from \(z(s_2)\) to \(z(s_1)\), and define
\begin{equation}
    \boxed{
    \mathbb U(s_2,s_1)
    =W_S[C_{z(s_2)z(s_1)}]\,\mathsf T(s_1,s_2)
    =P\exp\!\left[-\int_{s_1}^{s_2}du\,E^A\nabla_A\right].
    }
    \label{eq:OperatorTransporter}
\end{equation}
For a closed continuous contour the accumulated translation is the
identity because \(z(2\pi)=z(0)\); no equality of the two tangent traces
at the cut is required.  Thus
\begin{equation}
    \frac1N\Tr\mathbb U(2\pi,0)
    =\frac1N\Tr W_S[C_{z_0z_0}].
    \label{eq:OperatorTransporterClosedTrace}
\end{equation}
The transporter composes as
\begin{equation}
    \mathbb U(s_3,s_1)
    =\mathbb U(s_3,s_2)\mathbb U(s_2,s_1).
    \label{eq:TransporterComposition}
\end{equation}

Because the exponent contains \(-E^A\nabla_A\), define
\begin{equation}
    \boxed{\delta_A(s)\equiv-\frac{\delta}{\delta E^A(s)}}.
    \label{eq:TangentFunctionalDerivative}
\end{equation}
Then
\begin{equation}
    \boxed{
    \delta_A(s)\mathbb U_C
    =\mathbb U(2\pi,s)\nabla_A(z(s))\mathbb U(s,0).
    }
    \label{eq:TangentDerivativeInsertion}
\end{equation}
The infinitesimal derivation, endpoint equations, and backward-
translation proof are given in Supplementary Material,
Sec.~S2.

\subsection{Path-ordered commutator and torsion-subtracted area derivative}
\label{subsec:OrderedCommutatorAndAreaDerivative}

Let \(\mathcal O_A(s)\) and \(\mathcal O_B(s)\) be homogeneous loop
insertions of parities \(|A|\) and \(|B|\).  Their path-ordered graded
commutator at a marked point is
\begin{equation}
\begin{aligned}
    [\mathcal O_A,\mathcal O_B\}_o(s)
    \equiv
    \lim_{\epsilon\downarrow0}
    \Big[
        &\mathcal O_A(s-\epsilon)
        \mathcal O_B(s+\epsilon)
        \\
        &-
        (-1)^{|A||B|}
        \mathcal O_B(s-\epsilon)
        \mathcal O_A(s+\epsilon)
    \Big].
\end{aligned}
    \label{eq:OrderedGradedCommutator}
\end{equation}
The ordered slots are part of the operator definition.  The parameter
\(\epsilon\) is not an ultraviolet cutoff and no scale survives the
coincidence limit.

For tangent operators,
\begin{equation}
\begin{aligned}
    [\delta_A,\delta_B\}_o(s)
    =
    \delta_A(s-0)\delta_B(s+0)
    -
    (-1)^{|A||B|}
    \delta_B(s-0)\delta_A(s+0).
\end{aligned}
    \label{eq:OrderedTangentDerivativeCommutator}
\end{equation}
Using \eqref{eq:TangentDerivativeInsertion} and the transporter
composition law,
\begin{align}
    [\delta_A,\delta_B\}_o(s)\mathbb U_C
    & =
    \mathbb U(2\pi,s)
    [\nabla_A,\nabla_B\}(z(s))
    \mathbb U(s,0)
    \nonumber\\
    & =
    \mathbb U(2\pi,s)
    \left(
        T_{AB}{}^C\nabla_C-F_{AB}
    \right)(z(s))
    \mathbb U(s,0).
    \label{eq:OrderedCommutatorGivesFullCommutator}
\end{align}
The first term is the known flat-superspace torsion translation.  The
curvature insertion is therefore the torsion-subtracted combination
\begin{equation}
    \boxed{
    \delta^\Sigma_{AB}(s)
    =
    T_{AB}{}^C\delta_C(s)
    -
    [\delta_A,\delta_B\}_o(s).
    }
    \label{eq:TorsionCorrectedAreaDerivative}
\end{equation}
The order of the two terms, relative to the convention often used with
\(D_A+A_A\), is fixed here by the I--T definitions
\(\nabla_A=D_A-A_A\) and \(F=dA+A A\).  It follows exactly that
\begin{equation}
    \delta^\Sigma_{AB}(s)\mathbb U_C
    =
    \mathbb U(2\pi,s)
    F_{AB}(z(s))
    \mathbb U(s,0).
    \label{eq:AreaDerivativeCurvatureInsertion}
\end{equation}

For a homogeneous adjoint superfield \(X\), define the marked
insertion
\begin{equation}
    \mathfrak I_s(X)
    =
    \mathbb U(2\pi,s)
    X(z(s))
    \mathbb U(s,0).
    \label{eq:MarkedInsertionMap}
\end{equation}
Then
\begin{equation}
    \delta^\Sigma_{AB}(s)\mathbb U_C
    =
    \mathfrak I_s(F_{AB}).
    \label{eq:AreaDerivativeAsMarkedInsertion}
\end{equation}
After closing the trace, this is the superspace Mandelstam formula
\begin{equation}
    \frac{\delta W[C]}
         {\delta\Sigma^{AB}(s)}
    =
    \frac1N
    \left\langle
        \Tr P
        \left[
            F_{AB}(z(s))
            \exp\oint_C A
        \right]
    \right\rangle.
    \label{eq:SuperspaceMandelstamFormula}
\end{equation}
Thus the formal I--T area derivative is represented directly by
the finite path-ordered operation
\eqref{eq:TorsionCorrectedAreaDerivative}.

The conventional constraints
\eqref{eq:ConventionalSYMConstraints}
annihilate the pure-spinor and mixed undotted--dotted spinor area
derivatives.  For one vector and one spinor index the flat torsion
vanishes, so
\begin{equation}
    \delta^\Sigma_{a\dot\alpha}
    =
    -[\delta_a,\delta_{\dot\alpha}\}_o,
    \qquad
    \delta^\Sigma_{a\alpha}
    =
    -[\delta_a,\delta_\alpha\}_o.
    \label{eq:MixedAreaDerivativePureCommutator}
\end{equation}
These signs will be inherited by the chiral and anti-chiral I--T
operators when the closed hierarchy is translated into POOC.

\subsection{Ordered adjoint derivative and the graded Jacobi identity}
\label{subsec:OrderedAdjointDerivativeAndJacobi}

The covariant derivative of a marked insertion is itself represented
by a path-ordered tangent operation.  For a homogeneous adjoint
superfield \(X\) of parity \(|X|\), define
\begin{align}
    \mathfrak D_A^{\,o}(s)\,
    \mathfrak I_s(X)
    & =
    \delta_A(s-0)\mathfrak I_s(X)
    \nonumber\\
    &\quad
    -
    (-1)^{|A||X|}
    \mathfrak I_s(X)\delta_A(s+0).
    \label{eq:OrderedAdjointTangentDerivative}
\end{align}
The first tangent insertion lies immediately before \(X\), and the
second immediately after it.  Using the endpoint identities,
\begin{equation}
    \boxed{
    \mathfrak D_A^{\,o}(s)\,
    \mathfrak I_s(X)
    =
    \mathfrak I_s(\mathcal D_A X).
    }
    \label{eq:OrderedAdjointDerivativeLemma}
\end{equation}
Therefore
\begin{equation}
    \mathcal D_A
    \quad\longleftrightarrow\quad
    \mathfrak D_A^{\,o}
    =
    \operatorname{ad}_{\delta_A}^{o}
    \label{eq:CovariantDerivativeLoopDictionary}
\end{equation}
at the marked-insertion level.

Applying this identity to the curvature insertion gives
\begin{equation}
    \mathfrak D_C^{\,o}
    \delta^\Sigma_{AB}(s)\mathbb U_C
    =
    \mathfrak I_s
    \left(
        \mathcal D_C F_{AB}
    \right).
    \label{eq:OrderedDerivativeOfAreaInsertion}
\end{equation}
Using
\eqref{eq:TorsionCorrectedAreaDerivative}, the same operator is
\begin{align}
    \mathfrak D_C^{\,o}
    \delta^\Sigma_{AB}
    ={}&
    T_{AB}{}^D
    [\delta_C,\delta_D\}_o
    \nonumber\\
    &-
    [\delta_C,
        [\delta_A,\delta_B\}_o
    \}_o.
    \label{eq:TorsionCorrectedTripleCommutator}
\end{align}

Nested brackets are defined with three distinct ordered slots.  At
finite separation, transporter multiplication is associative, so the
ordered tangent operators satisfy the exact graded Jacobi identity
\begin{align}
    0
    ={}&
    (-1)^{|A||C|}
    [\delta_A,[\delta_B,\delta_C\}_o\}_o
    \nonumber\\
    &+
    (-1)^{|B||A|}
    [\delta_B,[\delta_C,\delta_A\}_o\}_o
    \nonumber\\
    &+
    (-1)^{|C||B|}
    [\delta_C,[\delta_A,\delta_B\}_o\}_o.
    \label{eq:OrderedGradedJacobiIdentity}
\end{align}
Only after this sum has been formed are the three ordered points taken
to coincidence.  This is an operator identity, not a regulated
approximation.

Separating the flat torsion and using
\eqref{eq:TorsionCorrectedAreaDerivative}, the Jacobi identity becomes
the torsionful loop-space Bianchi identity
\begin{equation}
    \boxed{
    \mathfrak D^{\,o}_{[A}
    \delta^\Sigma_{BC)}
    -
    T_{[AB|}{}^D
    \delta^\Sigma_{D|C)}
    =
    0.
    }
    \label{eq:LoopSpaceTorsionfulBianchiIdentity}
\end{equation}
Here \([ABC)\) denotes the graded cyclic antisymmetrization appropriate
to a super three-form.  Acting on the transporter gives the ordinary
superspace identity
\begin{equation}
    \mathcal D_{[A}F_{BC)}
    -
    T_{[AB|}{}^D
    F_{D|C)}
    =
    0.
    \label{eq:SuperspaceTorsionfulBianchiFromLoopJacobi}
\end{equation}
This Jacobi--Bianchi correspondence is the central kinematical result
of POOC.  Section~\ref{sec:SchwingerDysonContactTerm} will apply it to
the chiral and anti-chiral differential combinations in the exact
Itoyama--Takashino hierarchy.

\subsection{Bounded-variation point derivative}
\label{subsec:PointDerivativeAndClassicalSMMOperator}

The ordered adjoint derivative above acts on matrix-valued marked
insertions.  For an ordinary scalar loop functional \(F[C]\), define
instead the BV point derivative
\begin{equation}
    \mathfrak D_A(s)F[C]
    =\frac{\delta F}{\delta E^A(s-0)}
     -\frac{\delta F}{\delta E^A(s+0)}.
    \label{eq:PointDerivativeDefinition}
\end{equation}
This measures the jump of the conjugate tangent variation; it does not
assert equality of the tangent traces themselves.

For a covariant loop deformation
\begin{equation}
    \xi^A=\delta z^M e_M{}^A,
    \label{eq:CovariantLoopVariation}
\end{equation}
the invariant tangent varies as
\begin{equation}
    \boxed{
    \delta E^A=\partial_s\xi^A+E^B\xi^C T_{CB}{}^A.
    }
    \label{eq:VariationOfInvariantTangent}
\end{equation}
The derivative term generates the two endpoint traces; the torsion
term is an ordinary finite interval contribution.  Consequently every
even finite Stokes functional obeys
\begin{equation}
    \mathfrak D_A(s)S[C]=0.
    \label{eq:FiniteStokesPointDerivativePreview}
\end{equation}
The shrinking-interval BV proof is given in Supplementary Material,
Sec.~S5.

\section{Geometric form of the exact Itoyama--Takashino hierarchy}
\label{sec:SchwingerDysonContactTerm}

Itoyama and Takashino derived an exact finite-\(N\)
Schwinger--Dyson hierarchy for the \(4d,\mathcal N=1\) superspace
Wilson loop \cite{ItoyamaTakashino1996,ItoyamaTakashino1997}.  Their
calculation is performed in the Lorentzian superspace conventions
adopted in Section~\ref{sec:LoopCalculus}.  It passes through
Wess--Zumino gauge, where the functional variation can be evaluated in
components, and ends with a manifestly supersymmetric and
supergauge-invariant equation written entirely in super Wilson-loop
variables.

The complete Wess--Zumino-gauge calculation, the modified functional
variation, its graded functional divergence, the induced connection
projector, the gauge-unfixing Ward identity, and the finite-\(N\) color
algebra are reconstructed in Supplementary Material,
Sec.~S4.  The purpose of the present section is to state this geometry and
rewrite the final closed hierarchy in the path-ordered operator calculus
developed above.

All contour coefficients in this hierarchy are understood as periodic
BV symbols in the sense of
\eqref{eq:BVSuperloopClass}--\eqref{eq:BVFourierJumpAverage}.  A
contact condition identifies superspace positions, not tangent germs.
Thus, at a genuine self-contact \(z(s)=z(t)\), the four one-sided
values
\begin{equation}
    E^A(s-0),\quad E^A(s+0),\quad
    E^A(t-0),\quad E^A(t+0)
    \label{eq:FourBVContactGerms}
\end{equation}
remain independent ordered data.  The splitting operation redistributes
these four germs between the two daughter loops; it does not impose a
periodic, anti-periodic, or smooth matching condition on them.

\subsection{From Wess--Zumino gauge to a closed superloop equation}
\label{subsec:ITGeometricOrigin}

Itoyama and Takashino evaluate the Schwinger--Dyson variation in
Wess--Zumino gauge and then restore the supergauge volume.  The full
component calculation, including the modified divergence-free
functional vector field, its action on the constrained connection,
and the finite-\(N\) color algebra, is reconstructed in Supplementary
Material, Sec.~S4.  Its local insertion is the
covariant chiral equation-of-motion superfield
\begin{equation}
    \mathcal E_+(z)
    =\frac18\epsilon_{\alpha\beta}
      \bar\sigma^{a\dot\alpha\beta}
      \mathcal D^\alpha F_{a\dot\alpha}(z),
    \qquad
    \mathcal E_+\propto\mathcal D^\alpha W_\alpha.
    \label{eq:ITChiralEOMSuperfieldMain}
\end{equation}
The Lorentzian conjugate insertion is
\(\mathcal E_-\propto
\mathcal D_{\dot\alpha}\bar W^{\dot\alpha}\).

Gauge unfixing converts the residual difference between the WZ-gauge
contact expression and the manifest superspace kernel into a total
derivative in the integrated contour point,
\begin{equation}
    \mathscr I_d(s;C)
    =\oint_C dz^M(t)\,\partial_M^{(t)}
      \left[
        \delta^{4|4}\!\left(z(t)-z(s)\right)
        \mathscr F(t,s;C)
      \right].
    \label{eq:ITGaugeUnfixingTotalDerivativeMain}
\end{equation}
The marked point is not integrated.  The contour integral vanishes by
the restricted-supergauge Ward identity.  In the BV formulation this
requires no trace matching: after the circle is cut, the endpoint
difference of the absolutely continuous part is canceled by the atomic
jump in the distributional derivative of the periodic BV primitive.
The resulting equation is therefore closed in super Wilson-loop
variables while retaining the two ordered germs at every marked point.

\subsection{Path-ordered form of the local chiral and anti-chiral operators}
\label{subsec:ITClosedPOOCHierarchy}

The local differential operator in the manifest Itoyama--Takashino
equation is
\begin{equation}
    \frac18
    \epsilon_{\alpha\beta}
    \bar\sigma^{a\dot\alpha\beta}
    D^{\alpha}
    \frac{\delta}{\delta\Sigma^{a\dot\alpha}(z')}.
    \label{eq:ITFormalChiralLoopOperator}
\end{equation}
The formal area derivative is represented by
\(\delta^\Sigma_{a\dot\alpha}\), while the covariant differentiation
of the marked curvature insertion is represented by
\(\mathfrak D^{o\,\alpha}\).  Hence its POOC form is
\begin{equation}
    \boxed{
    \widehat{\mathbb L}_+(s)
    =
    \frac18
    \epsilon_{\alpha\beta}
    \bar\sigma^{a\dot\alpha\beta}
    \mathfrak D^{o\,\alpha}(s)
    \delta^\Sigma_{a\dot\alpha}(s).
    }
    \label{eq:ProjectedChiralTripleCommutator}
\end{equation}
Using the mixed-area identity
\eqref{eq:MixedAreaDerivativePureCommutator}, this becomes
\begin{equation}
    \boxed{
    \widehat{\mathbb L}_+(s)
    =
    -\frac18
    \epsilon_{\alpha\beta}
    \bar\sigma^{a\dot\alpha\beta}
    \left[
        \delta^{\alpha},
        [\delta_a,\delta_{\dot\alpha}\}_o
    \right\}_o(s).
    }
    \label{eq:ProjectedChiralTripleCommutatorExpanded}
\end{equation}
The minus sign is fixed by the native I--T conventions
\(\nabla_A=D_A-A_A\), \(F=dA+AA\), and
\begin{equation}
    \delta^\Sigma_{a\dot\alpha}
    =
    -[\delta_a,\delta_{\dot\alpha}\}_o.
\end{equation}
Acting on the normalized Wilson-loop average,
\begin{equation}
    \widehat{\mathbb L}_+(s)W[C]
    =
    \frac1N
    \left\langle
        \Tr\!\left[
            \mathcal E_+\bigl(z(s)\bigr)
            W_S[C_{z(s)z(s)}]
        \right]
    \right\rangle.
    \label{eq:ITChiralEOMInsertionMain}
\end{equation}

The Lorentzian conjugate equation is represented by
\begin{equation}
    \boxed{
    \widehat{\mathbb L}_-(s)
    =
    \frac18
    \epsilon_{\dot\alpha\dot\beta}
    \sigma^{a\alpha\dot\beta}
    \mathfrak D^{o\,\dot\alpha}(s)
    \delta^\Sigma_{a\alpha}(s),
    }
    \label{eq:ProjectedAntiChiralTripleCommutator}
\end{equation}
that is,
\begin{equation}
    \boxed{
    \widehat{\mathbb L}_-(s)
    =
    -\frac18
    \epsilon_{\dot\alpha\dot\beta}
    \sigma^{a\alpha\dot\beta}
    \left[
        \delta^{\dot\alpha},
        [\delta_a,\delta_\alpha\}_o
    \right\}_o(s).
    }
    \label{eq:ProjectedAntiChiralTripleCommutatorExpanded}
\end{equation}
It inserts
\begin{equation}
    \mathcal E_-(z)
    =
    \frac18
    \epsilon_{\dot\alpha\dot\beta}
    \sigma^{a\alpha\dot\beta}
    \mathcal D^{\dot\alpha}F_{a\alpha}(z),
    \label{eq:ITAntiChiralEOMSuperfieldMain}
\end{equation}
which is proportional to
\(\mathcal D_{\dot\alpha}\bar W^{\dot\alpha}\).
In Lorentzian signature the two equations are conjugate.

Although \(\widehat{\mathbb L}_+\) and
\(\widehat{\mathbb L}_-\) have no free Lorentz index, they are not
single-component scalar loop equations.  They are Grassmann-even
superfield equations at the marked point \(z(s)\).  Their finite
expansions in \(\eta(s)\) and \(\bar\eta(s)\) package the component
Schwinger--Dyson equations of the complete gauge supermultiplet.

\subsection{Manifest contact kernels and the exact finite-\texorpdfstring{\(N\)}{N} hierarchy}
\label{subsec:ITManifestHierarchy}

The pulled-back chiral and anti-chiral line operators are
\begin{equation}
    D_f^\alpha(t)
    =E^\alpha(t)
     -\frac{i}{4}E^a(t)
       \bar\sigma_a^{\dot\alpha\alpha}\bar D_{\dot\alpha}^{(t)},
    \label{eq:ITChiralLineDerivativeMain}
\end{equation}
\begin{equation}
    \bar{\mathcal D}_f^{\dot\alpha}(t)
    =E^{\dot\alpha}(t)
     +\frac{i}{4}E^a(t)
       \sigma_a^{\alpha\dot\alpha}D_\alpha^{(t)}.
    \label{eq:ITAntiChiralLineDerivativeMain}
\end{equation}
The manifest contact distributions are
\begin{equation}
    \boxed{
    K_+(s,t;C)
    =D_f^\alpha(t)D_\alpha^{(t)}
      \delta^{4|4}\!\left(z(t)-z(s)\right),
    }
    \label{eq:ProjectedContactKernel}
\end{equation}
\begin{equation}
    \boxed{
    K_-(s,t;C)
    =\bar{\mathcal D}_f^{\dot\alpha}(t)
      \bar D_{\dot\alpha}^{(t)}
      \delta^{4|4}\!\left(z(t)-z(s)\right).
    }
    \label{eq:ProjectedAntiChiralContactKernel}
\end{equation}
The derivatives initially act on the superspace delta-function and,
after graded integration by parts, on the endpoint dependence of the
split or joined loop factors.

Let
\begin{equation}
    W_n(\mathbf C)
    =\left\langle
      \prod_{\ell=1}^n\frac1N\Tr W_S[C_\ell]
      \right\rangle,
    \qquad \mathbf C=(C_1,\ldots,C_n),
    \label{eq:NormalizedMultiloopCorrelatorMain}
\end{equation}
and \(\lambda=g^2N\).  If the marked point is on \(C_j\), the exact
finite-\(N\), \(SU(N)\) hierarchy is
\begin{align}
    \widehat{\mathbb L}_{\pm,j}(s)W_n(\mathbf C)
    ={}&
    \lambda\int_{C_j}dt\,K_\pm^{jj}(s,t)
    \left[
      W_{n+1}\!\left(\mathbf C^{j\to(st,ts)}\right)
      -\frac1{N^2}W_n(\mathbf C)
    \right]
    \nonumber\\
    &+\frac{\lambda}{N^2}
    \sum_{k\ne j}\int_{C_k}dt\,K_\pm^{jk}(s,t)
    \left[
      W_{n-1}\!\left(\mathbf C^{jk\to j\# k}\right)
      -W_n(\mathbf C)
    \right].
    \label{eq:FiniteNMultiloopSMM}
\end{align}
The first line is same-loop splitting and the second is different-loop
joining.  The contour orientations, connector prescription, and color
completeness calculation are detailed in Supplementary Material,
Sec.~S4.

For a single loop this reduces to
\begin{equation}
    \boxed{
    \widehat{\mathbb L}_\pm(s)W[C]
    =\lambda\int_s^{s+2\pi}dt\,K_\pm(s,t;C)
      \left[W_2(C_{st},C_{ts})-\frac1{N^2}W[C]\right].
    }
    \label{eq:FiniteNSingleLoopSMM}
\end{equation}
At large \(N\), factorization closes the hierarchy,
\begin{equation}
    \boxed{
    \widehat{\mathbb L}_\pm(s)W[C]
    =\lambda\int_s^{s+2\pi}dt\,K_\pm(s,t;C)
      W(C_{st})W(C_{ts})+O(N^{-2}).
    }
    \label{eq:PlanarFullSuperspaceSMM}
\end{equation}

\subsection{Geodesic-wire ultraviolet definition of the contact}
\label{subsec:GaugeCovariantRegularization}

POOC removes only the kinematical ambiguity of ordered coincidence.
The contact on the right-hand side of the quantum hierarchy is a true
ultraviolet distribution and is defined at a finite physical scale.
For the SHD branch we use the freedom in the shape of the short gauge
connector to choose a geodesic wire on the selected Euclideanized
surface.  Denote this wire by
\(
\gamma_{ts}^{C}:C(t)\to C(s)
\),
its inverse by
\(
\gamma_{st}^{C}=(\gamma_{ts}^{C})^{-1}
\),
and its length by
\begin{equation}
    L_C(s,t)=\operatorname{Length}_{\Sigma_C}(\gamma_{ts}^{C}).
    \label{eq:SurfaceGeodesicWireLengthMain}
\end{equation}
Every surface path is also a target-space path, and therefore
\begin{equation}
    L_C(s,t)\geq \bigl|C(s)-C(t)\bigr|.
    \label{eq:SurfaceGeodesicChordBoundMain}
\end{equation}
Thus the large-\(\Lambda\) profile suppresses every pair away from the
target-space contact set.

The universal four-dimensional normalization of the exponential profile
is
\begin{equation}
    \boxed{
    \mathcal N(\Lambda)
    =\left[
      \int_{\mathbb R^4}d^4x\,e^{-\Lambda|x|}
      \right]^{-1}
    =\frac{\Lambda^4}{12\pi^2}.
    }
    \label{eq:SurfaceGeodesicNormalizationMain}
\end{equation}
Indeed,
\(
\int_{\mathbb R^4}d^4x\,e^{-\Lambda|x|}
=2\pi^2\int_0^\infty r^3e^{-\Lambda r}dr
=12\pi^2/\Lambda^4
\).
A convenient finite-width representative of the superspace contact is
therefore
\begin{equation}
    \boxed{
    \Delta_{\Lambda,\Sigma}^{4|4}(s,t)
    =\frac{\Lambda^4}{12\pi^2}
      e^{-\Lambda L_C(s,t)}
      \delta^{0|4}\!\left(\Theta(t)-\Theta(s)\right).
    }
    \label{eq:RegularizedSuperspaceDelta}
\end{equation}
Here \(\Theta=(\eta,\bar\eta)\), and the Berezin delta is kept exact.
On the localization scale \(L_C\sim\Lambda^{-1}\), the blown-up
surface is asymptotically flat: its dimensionless second fundamental
form is \(O(\Lambda^{-1})\), and the surface geodesic agrees with the
straight target-space chord up to corrections which change the
normalized profile only by \(O(\Lambda^{-2})\).  Consequently,
\begin{equation}
    \Delta_{\Lambda,\Sigma}^{4|4}(s,t)
    \xrightarrow[\Lambda\to\infty]{}
    \delta^{4|4}\!\left(z(t)-z(s)\right)
    \label{eq:SurfaceGeodesicDeltaLimitMain}
\end{equation}
in the loop-contact pairing.  Every stationary wire of strictly
positive limiting length is exponentially suppressed.  The detailed
Laplace-method proof is given in Supplementary Material, Sec.~S4.12.

The color index is transported along the same wire.  The resulting
adjoint bi-kernel is
\begin{equation}
    \boldsymbol\Delta_{\Lambda,\Sigma}(s,t)
    =\operatorname{Ad}\!\left[\mathcal U_{\gamma^C}(s,t)\right]
      \Delta_{\Lambda,\Sigma}^{4|4}(s,t),
    \label{eq:GaugeCovariantAdjointKernelMain}
\end{equation}
and the two ultraviolet contact kernels are
\begin{equation}
    \boxed{
    K_{+,\Lambda}
    =D_f^\alpha D_\alpha\boldsymbol\Delta_{\Lambda,\Sigma},
    \qquad
    K_{-,\Lambda}
    =\bar{\mathcal D}_f^{\dot\alpha}\bar D_{\dot\alpha}
      \boldsymbol\Delta_{\Lambda,\Sigma}.
    }
    \label{eq:RegularizedProjectedChiralKernel}
\end{equation}
The complementary wire closes the other cut segment with the opposite
orientation.  The profile, the transporter, and the wire may all depend
on the endpoint data; the derivatives in
\eqref{eq:RegularizedProjectedChiralKernel} act on the complete
contact pairing.

The hierarchy is first solved at fixed \(\Lambda\), and only then
inserted into the physical worldline, proper-time, or other path
integrals:
\begin{equation}
    \boxed{
    \text{define at fixed }\Lambda
    \;\longrightarrow\;
    \text{solve for }W_{n,\Lambda}
    \;\longrightarrow\;
    \text{renormalize physical observables}.
    }
    \label{eq:UVOrderOfOperationsMain}
\end{equation}
The surface geodesic wire has a second role: it is an internal cut
of the primary stationary surface.  This makes the differentiated
sewing identity exact at fixed cutoff, as proved in the next section
and in Supplementary Material, Sec.~S7.

\section{Finite Stokes zero modes and the exact dressing theorem}
\label{sec:StokesFunctionalsLeibniz}
\label{sec:ZeroModeDressing}

This section separates two statements.  The first is the single-loop
finite-Stokes algebra, which controls every local left-hand side of the
hierarchy.  The second is the exact sewing of the differentiated
ultraviolet contact.  The sewing is implemented by drawing the contact
wire as a geodesic cut on the selected primary surface.  Cutting
one local surface action along this wire gives two daughter actions
whose sum is exactly the parent action.  The complete BV, graded
Leibniz, geodesic-wire, and differentiated-contact calculations are
given in Supplementary Material, Secs.~S5--S8 and S10.5.

\subsection{Finite Stokes class and POOC identities}

\begin{definition}[Finite Stokes functional]
\label{def:FiniteStokesFunctional}
An even reparametrization-invariant scalar functional \(S[C]\) is a
finite Stokes functional if its first variation is
\begin{equation}
    \boxed{
    \delta S[C]
    =\int ds\,\xi^A(s)E^B(s)\Omega_{BA}[C;s],
    \qquad
    \Omega_{AB}=\delta^\Sigma_{AB}S,
    }
    \label{eq:FiniteStokesVariationMain}
\end{equation}
with a finite local graded-antisymmetric two-form \(\Omega_{AB}\).
\end{definition}
A local perimeter functional is not of this form and is therefore not
an independent finite-Stokes zero mode.

\begin{proposition}[BV and POOC chain rules]
\label{prop:MainFiniteStokesChainRules}
For every even finite Stokes functional and every ordinary function
\(f\),
\begin{equation}
    \mathfrak D_A S=0,
    \label{eq:PointDerivativeFiniteStokesZeroMain}
\end{equation}
\begin{equation}
    \delta^\Sigma_{AB}f(S)
    =f'(S)\delta^\Sigma_{AB}S,
    \label{eq:AreaDerivativeLeibnizMain}
\end{equation}
\begin{equation}
    \widehat{\mathbb L}_\pm f(S)
    =f'(S)\widehat{\mathbb L}_\pm S.
    \label{eq:SMMLeibnizMain}
\end{equation}
\end{proposition}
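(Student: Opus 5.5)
We prove the three identities in order, each building on the previous one; all we use is the finite Stokes variation \eqref{eq:FiniteStokesVariationMain}, the tangent-variation formula \eqref{eq:VariationOfInvariantTangent}, and the definitions of the POOC operators.

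\textbf{Step 1: the BV point derivative vanishes.} Take a covariant deformation \(\xi^A\) supported on a shrinking interval \((s-\epsilon,s+\epsilon)\). By \eqref{eq:VariationOfInvariantTangent}, \(\delta E^A=\partial_s\xi^A+E^B\xi^CT_{CB}{}^A\). Integrating the derivative term by parts against \(\delta S/\delta E^A\) gives two endpoint traces. For a profile \(\xi\) that is constant on the interior and ramps up and down at its edges, these are
\[
\xi^A\Bigl[\tfrac{\delta S}{\delta E^A(s-0)}-\tfrac{\delta S}{\delta E^A(s+0)}\Bigr]
=\xi^A\,\mathfrak D_A(s)S .
\]
The remaining contributions are the torsion term and the bulk variation. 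On the other side of the equation, the finite Stokes form \(\int\xi^AE^B\Omega_{BA}\) has a bounded BV integrand \(E^B\Omega_{BA}\), so it is \(O(\epsilon)\); the torsion and bulk terms are likewise \(O(\epsilon)\) interval integrals of bounded functions. Letting \(\epsilon\downarrow0\) leaves \(\xi^A\mathfrak D_A S=0\) for arbitrary \(\xi^A\), which is \eqref{eq:PointDerivativeFiniteStokesZeroMain}. This is the shrinking-interval argument the paper defers to Supplementary Material, Sec.~S5.

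\textbf{Step 2: chain rule for the area derivative.} Since \(S\) is even, no grading signs arise. The first-order piece \(T_{AB}{}^C\delta_C\) is a single derivation, so \(\delta_C f(S)=f'(S)\delta_CS\). In the ordered commutator, apply the Leibniz rule to \(\delta_A(s-\epsilon)\delta_B(s+\epsilon)f(S)\). This produces
\[
f'(S)\,\delta_A(s-\epsilon)\delta_B(s+\epsilon)S+f''(S)\,\delta_A(s-\epsilon)S\,\delta_B(s+\epsilon)S .
\]
The graded antisymmetrization keeps the first term as \(f'[\delta_A,\delta_B\}_oS\). The second term becomes
\[
f''\bigl[\delta_AS(s-0)\,\delta_BS(s+0)-(-1)^{|A||B|}\delta_BS(s-0)\,\delta_AS(s+0)\bigr].
\]
By Step 1, \(\delta_AS(s-0)=\delta_AS(s+0)\). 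The bracket is therefore the graded commutator of two first derivatives at one point, which vanishes by graded antisymmetry because the two factors commute with the appropriate sign. This gives \eqref{eq:AreaDerivativeLeibnizMain}.

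\textbf{Step 3: chain rule for \(\widehat{\mathbb L}_\pm\).} By \eqref{eq:ProjectedChiralTripleCommutator}, \(\widehat{\mathbb L}_+=\tfrac18\epsilon\bar\sigma\,\mathfrak D^{o\,\alpha}\delta^\Sigma_{a\dot\alpha}\). Using Step 2, write \(\delta^\Sigma_{a\dot\alpha}f(S)=f'(S)\Omega_{a\dot\alpha}\). The ordered adjoint derivative places \(\delta^\alpha\) immediately before and immediately after the marked point, so the Leibniz rule gives two pieces. The first is \(f'(S)\,\mathfrak D^{o\alpha}\Omega_{a\dot\alpha}\), which reproduces \(f'\widehat{\mathbb L}_+S\). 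The second is
\[
f''(S)\bigl[\delta^\alpha S(s-0)-\delta^\alpha S(s+0)\bigr]\Omega_{a\dot\alpha}=f''(S)\,\mathfrak D^\alpha S\,\Omega_{a\dot\alpha},
\]
which vanishes by Step 1. The anti-chiral operator is handled identically, with dotted and undotted indices exchanged.

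\textbf{Main obstacle.} The delicate point is Step 3: we must justify that the ordered-slot limit commutes with the Leibniz expansion. A cross term such as \(\delta^\alpha(s\mp0)\) acting on \(S\) inside \(\Omega\) must not generate an extra contact term. I would handle this by working at finite slot separations \(\epsilon_1<\epsilon_2\). At those separations \(\delta S/\delta E\) is a BV function with bounded traces, so every product term converges separately. One then shows that the coefficient of \(f''\) is exactly a trace difference \(\mathfrak D S\), rather than merely a difference that becomes small in the limit. The grading sign \((-1)^{|A||X|}\), with \(X=\Omega\) fermionic for mixed indices, must be tracked so that the two trace pieces genuinely combine into \(\mathfrak D_A S\) instead of a sum.
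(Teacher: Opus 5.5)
Your proposal is correct and follows essentially the same route as the paper. The shrinking-interval BV argument gives $\mathfrak D_A S=0$; the $f''$ terms in the ordered graded commutator then cancel by graded antisymmetry once the two one-sided traces of $\delta_A S$ agree; and the leftover mixed term in $\widehat{\mathbb L}_\pm f(S)$ is proportional to the vanishing scalar endpoint jump, so the grading bookkeeping you flag indeed makes the two slots combine into a difference (a jump) rather than a sum.
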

The first identity is intraloop: the two tangent derivatives act on the
two sides of the same marked cut of the same closed loop.  It follows
from the shrinking-interval BV argument.  The \(f''\)-terms cancel in
the ordered graded commutator, and the remaining mixed terms vanish
because the scalar endpoint jump is zero.  For a finite-Stokes common
zero mode, \(\mathcal Z_\rho=e^{\rho S}\),
\begin{equation}
    \boxed{
    \widehat{\mathbb L}_\pm(\mathcal Z_\rho W)
    =\mathcal Z_\rho\widehat{\mathbb L}_\pm W.
    }
    \label{eq:SMMProductRuleZeroModeMain}
\end{equation}
The detailed proof is in Supplementary Material, Sec.~S6.

\subsection{Geodesic-wire sewing of differentiated contacts}
\label{subsec:MainSurfaceGeodesicSewing}
\label{subsec:MainWhiteBridgeSewing}

At a same-loop contact \(z(s)=z(t)\), the four BV germs
\begin{equation}
    E_s^-,\ E_s^+,\ E_t^-,\ E_t^+
    \label{eq:SewingFourTangentGerms}
\end{equation}
remain independent and are redistributed as
\begin{equation}
    \boxed{
    C_{st}^{\gamma}:(E_s^+,E_t^-),
    \qquad
    C_{ts}^{\gamma}:(E_t^+,E_s^-).
    }
    \label{eq:SewingRedistributionOfGerms}
\end{equation}
The two contours contain the same geodesic wire with opposite
orientations.  It closes the cut arcs but imposes no equality on the
four original germs.

\begin{figure}[t]
    \centering
    \includegraphics[width=0.72\textwidth]{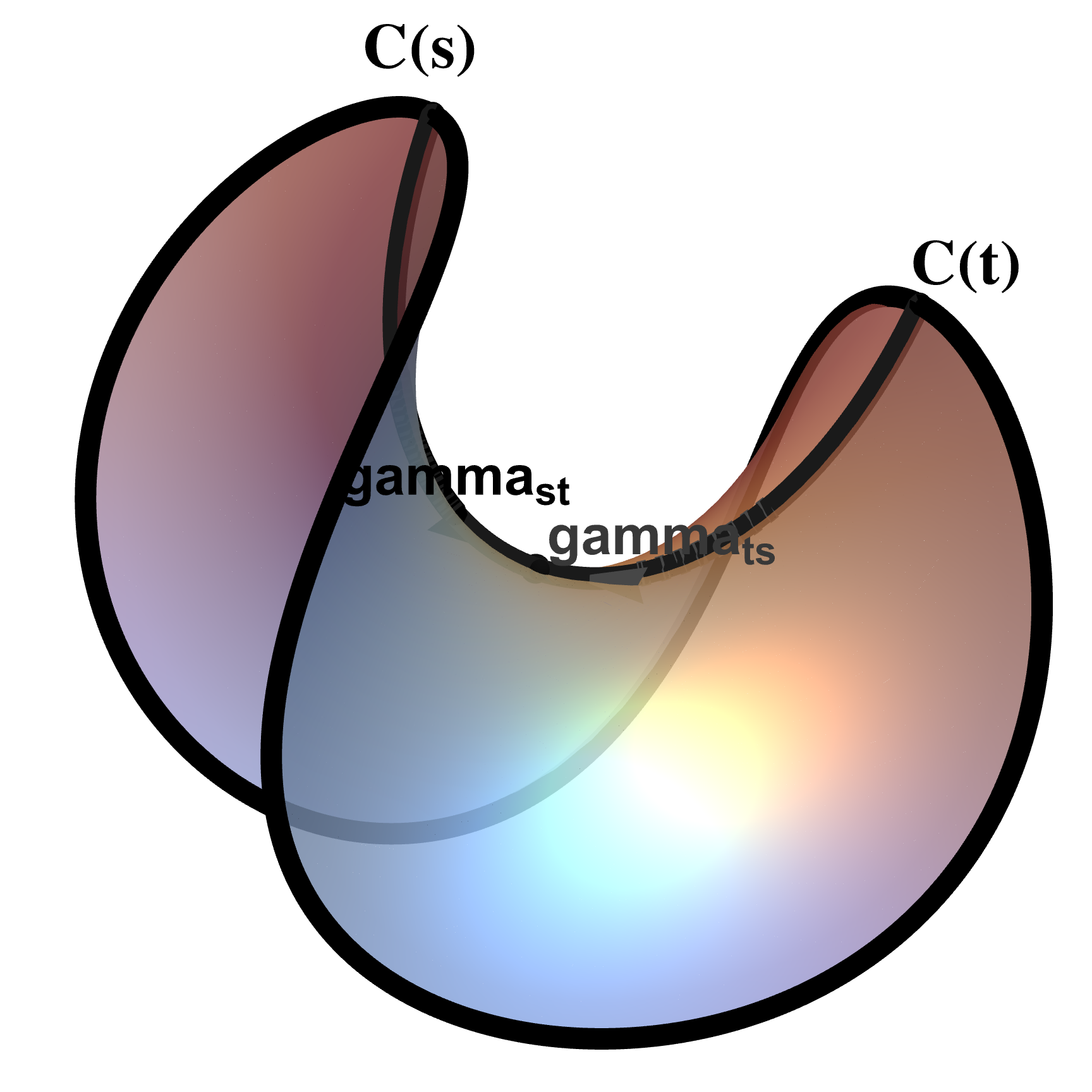}
    \caption{Geodesic-wire realization of the ultraviolet sewing
    wire.  A simply connected nonplanar minimal disk is divided by one
    geodesic joining the boundary points \(C(s)\) and \(C(t)\).  The
    same geodesic line occurs with the two opposite orientations
    \(\gamma_{st}\) and \(\gamma_{ts}=\gamma_{st}^{-1}\) in the
    daughter boundaries.  Both daughters are restrictions of the same
    parent Weierstrass map; their surface actions therefore add exactly,
    while their oriented Stokes currents on the common wire cancel
    pointwise.  The Enneper patch used in the illustration is only a
    convenient classical model of this general construction.}
    \label{fig:SurfaceGeodesicUVWire}
\end{figure}

Let \(S[C]\) be the stationary value of a local surface action on a
selected branch \(Y_C:\mathcal D_C\to\mathcal M\).  Choose a simple
geodesic cut on the parent surface \(\widehat\gamma_{st}\subset\mathcal D_C\)
whose image is the gauge wire \(\gamma_{st}\).  It partitions the
parameter disk,
\begin{equation}
    \mathcal D_C
    =\mathcal D_{st}^{\gamma}
     \cup_{\widehat\gamma_{st}}
     \mathcal D_{ts}^{\gamma},
    \label{eq:MainSurfaceDomainPartition}
\end{equation}
and defines the daughter surfaces by restriction,
\begin{equation}
    Y_{st}=Y_C|_{\mathcal D_{st}^{\gamma}},
    \qquad
    Y_{ts}=Y_C|_{\mathcal D_{ts}^{\gamma}}.
    \label{eq:MainDaughterSurfaceRestrictions}
\end{equation}
No new Dirichlet problem is solved.  The same analytic fields are used
on both domains.

At fixed ultraviolet scale define the splitting and joining defects by
\begin{equation}
    \Delta_{S,\Lambda}^{\rm split}
    =S[C_{st}^{\gamma}]
     +S[C_{ts}^{\gamma}]-S[C],
    \label{eq:MainSplittingDefect}
\end{equation}
\begin{equation}
    \Delta_{S,\Lambda}^{\rm join}
    =S[C_i\#_{s,t}^{\gamma}C_j]-S[C_i]-S[C_j].
    \label{eq:MainJoiningDefect}
\end{equation}

\begin{proposition}[Exact geodesic-wire sewing]
\label{prop:MainSurfaceGeodesicWireSewing}
Assume that the same-loop daughter surfaces are the restrictions
\eqref{eq:MainDaughterSurfaceRestrictions} of one primary stationary
surface, and that the different-loop joined branch is the disjoint
union of the two primary surfaces with the same connector traversed
once in each direction.  For either
\(
\Delta_S=\Delta_{S,\Lambda}^{\rm split}
\)
or
\(
\Delta_S=\Delta_{S,\Lambda}^{\rm join}
\), the following three conditions hold:
\begin{equation}
    \boxed{
    \Delta_S=0.
    }
    \label{eq:ExactSurfaceGeodesicDefectsMain}
\end{equation}
\begin{equation}
    \boxed{
    \mathfrak D_A(s)\Delta_S
    =\mathfrak D_A(t)\Delta_S
    =0.
    }
    \label{eq:MainSurfaceGeodesicPointDerivativeConditions}
\end{equation}
\begin{equation}
    \boxed{
    P_\alpha^{(s)}Q_{(t)}^\alpha\Delta_S=0.
    }
    \label{eq:MainProjectedSewingCondition}
\end{equation}
The anti-chiral kernel obeys the conjugate projected identity.  Hence,
for every even contact factor \(\Phi\) and every even constant \(\rho\),
\begin{equation}
    \boxed{
    K_{\pm,\Lambda}\bigl(e^{\rho\Delta_S}\Phi\bigr)
    =K_{\pm,\Lambda}\Phi.
    }
    \label{eq:ContactKernelSewingIdentityMain}
\end{equation}
\end{proposition}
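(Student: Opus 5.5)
The plan is to prove the three conditions in order of increasing differential order. Each rests on one fact: the defect vanishes identically as a functional of the endpoint data. It does not merely vanish on the contact set. The last claim, the sewing identity, then follows by expanding the exponential.

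\textbf{Step 1: $\Delta_S=0$.} Write $S[C]=\int_{\mathcal D_C}d^2\xi\,\mathcal L(\partial Y_C)$ for a local density evaluated on the stationary branch.

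For splitting, the restrictions \eqref{eq:MainDaughterSurfaceRestrictions} are stationary surfaces bounded by $C^{\gamma}_{st}$ and $C^{\gamma}_{ts}$. They are stationary because the bulk Euler--Lagrange equations are local and are inherited by any subdomain. We then take them as the selected branches of the daughter loops; this is the assumption of the proposition. Additivity of the integral over the partition \eqref{eq:MainSurfaceDomainPartition} gives $S[C_{st}^\gamma]+S[C_{ts}^\gamma]=S[C]$. The cut $\widehat\gamma_{st}$ contributes nothing because it has zero two-dimensional measure.

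For joining, the joined surface is the disjoint union of the two primary surfaces together with a doubly traversed wire. The wire again has zero area, so $\Delta^{\rm join}_S=0$.

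Reparametrization invariance (\eqref{eq:IntroductionPOOCParametricInvariance}) ensures that placing each daughter on its own standard circle changes nothing.

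\textbf{Step 2: point derivatives.} Since $\Delta_S\equiv0$ for all admissible endpoint data, its first variation vanishes identically. In particular it vanishes under variations of the four independent germs \eqref{eq:SewingFourTangentGerms}.

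To make this explicit rather than formal, I would compute the Stokes boundary terms directly. Each daughter is a finite Stokes functional in the sense of Definition~\ref{def:FiniteStokesFunctional}, with two-form $\Omega$ equal to the restriction of the parent's. Proposition~\ref{prop:MainFiniteStokesChainRules} then gives $\mathfrak D_A S=0$ on each daughter at every interior point.

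At the wire endpoints $s$ and $t$, the germs are redistributed as in \eqref{eq:SewingRedistributionOfGerms}. The jump of the conjugate variation on one daughter is therefore matched by the opposite contribution on the other daughter. The wire Stokes currents, $\int_\gamma\xi^AE^B\Omega_{BA}$, appear with opposite orientations and cancel pointwise. Together these give $\mathfrak D_A(s)\Delta_S=\mathfrak D_A(t)\Delta_S=0$.

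\textbf{Step 3: the mixed second derivative $P^{(s)}_\alpha Q^\alpha_{(t)}\Delta_S$. This is the main obstacle.} Here the wire itself moves with the endpoints, so the variations of the wire shape and the restricted domains must be controlled.

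The plan is to use stationarity of the geodesic, together with the fact that the parent surface is independent of the cut. Varying $(s,t)$ moves $\widehat\gamma_{st}$ inside the fixed parent domain. The daughter actions change by $\pm\int\mathcal L$ over the swept sliver, and these changes cancel exactly in the sum.

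Hence $\Delta_S$ is a constant (zero) function of the pair $(z(s),z(t))$ jointly. Any bi-local differential operator therefore annihilates it, including the projected $P^{(s)}Q_{(t)}$, and likewise its anti-chiral conjugate. The care needed is in checking that the Grassmann components of the variation are also captured. They are, because the superloop dependence enters only through the invariant boundary curve $\mathcal C^a$ and the pullback $\Pi^a|_C$.

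\textbf{Step 4: the sewing identity.} Finally, $K_{\pm,\Lambda}$ acts on $e^{\rho\Delta_S}\Phi$ as at most a first-order operator in each of the points $s$ and $t$, followed by contraction. By the graded Leibniz rule, every term either contains $\Delta_S$, one of its first derivatives, or the mixed derivative $P^{(s)}Q_{(t)}\Delta_S$, or else it acts on $\Phi$ alone. All terms of the first kind vanish by Steps 1--3, and $e^{\rho\cdot0}=1$. This leaves $K_{\pm,\Lambda}\Phi$, which proves \eqref{eq:ContactKernelSewingIdentityMain}.
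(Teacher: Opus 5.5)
Your proposal is correct and follows essentially the same route as the paper. Locality of the surface integral over the geodesic partition gives $\Delta_S=0$, and for joining the zero-measure double wire does the same. The oppositely oriented wire Stokes currents cancel pointwise, so the daughter sum is the same finite-Stokes functional as the parent; hence the point derivatives vanish, and so does the projected mixed derivative $P_\alpha^{(s)}Q_{(t)}^\alpha\Delta_S$, as a functional identity. The graded Leibniz expansion of $e^{\rho\Delta_S}\Phi$ then leaves only $K_{\pm,\Lambda}\Phi$.

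The one imprecision is in Step 3. Your sliver argument covers only moving the cut on a fixed parent, whereas the loop deformations generated by $P_\alpha^{(s)}$ and $Q_{(t)}^\alpha$ also change the parent surface. Your conclusion still holds, because additivity holds for every parent loop, as you state at the start of Step 2, and that is exactly the functional identity the paper invokes.
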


\begin{proof}
The parent action is a local integral.  Equation
\eqref{eq:MainSurfaceDomainPartition} therefore gives the exact domain
identity
\begin{equation}
    S[C]
    =S[C_{st}^{\gamma}]+S[C_{ts}^{\gamma}],
    \label{eq:SameLoopAdditivityCondition}
\end{equation}
with no bridge-area correction: the common wire is one-dimensional and
is counted only as an internal cut.  For two distinct loops the selected
joined branch is the disjoint union of the two surfaces; the double wire
has zero two-dimensional measure and gives
\begin{equation}
    S[C_i\#_{s,t}^{\gamma}C_j]
    =S[C_i]+S[C_j].
    \label{eq:DifferentLoopAdditivityCondition}
\end{equation}
This proves \(\Delta_S=0\).

Scalar additivity is not by itself the differentiated-contact statement.
The Euler--Lagrange boundary variation of the stationary parent action has
the finite-Stokes form
\begin{equation}
    \delta S[C]
    =\oint_C ds\,\xi^A(s)E^B(s)\Omega_{BA}[C;s].
    \label{eq:MainSewingParentStokesVariation}
\end{equation}
The two daughter restrictions carry the same area derivative on the
common wire,
\begin{equation}
    \left.\Omega^{st}_{AB}\right|_{\gamma_{ts}}
    =\left.\Omega^{ts}_{AB}\right|_{\gamma_{st}}
    =\left.\Omega^{C}_{AB}\right|_{\gamma},
    \label{eq:MainWhiteBridgeOmegaMatching}
\end{equation}
and their wire orientations are opposite.  Their wire Stokes currents
therefore cancel pointwise.  Consequently the sum of the two daughter
actions is not merely equal in scalar value to the parent action; it is
the same finite-Stokes functional of the endpoint-dependent boundary
data.  Proposition~\ref{prop:MainFiniteStokesChainRules} gives
\begin{equation}
\begin{aligned}
    \mathfrak D_A(s)
    \bigl(S[C_{st}^{\gamma}]+S[C_{ts}^{\gamma}]\bigr)
    &=\mathfrak D_A(s)S[C]=0,
    \\
    \mathfrak D_A(t)
    \bigl(S[C_{st}^{\gamma}]+S[C_{ts}^{\gamma}]\bigr)
    &=\mathfrak D_A(t)S[C]=0.
\end{aligned}
    \label{eq:MainDaughterSumPointDerivativeConditions}
\end{equation}
Subtracting the parent contribution proves
\(
\mathfrak D_A(s)\Delta_S=\mathfrak D_A(t)\Delta_S=0
\).  The same argument applies to the disjoint-union joining identity.
In particular, the spinor components and the line-operator combination
selected by the chiral kernel satisfy
\(
P_\alpha^{(s)}\Delta_S=Q_{(t)}^\alpha\Delta_S=0
\).  Because these point-derivative equalities hold as functional
identities on the geodesic-regulated branch, applying the complementary
endpoint operator gives
\(
P_\alpha^{(s)}Q_{(t)}^\alpha\Delta_S=0
\), and similarly in the anti-chiral sector.

For completeness, if \(B_\rho=e^{\rho\Delta_S}\), the exact graded
Leibniz expansion is
\begin{align}
 &P_\alpha Q^\alpha(B_\rho\Phi)
  -B_\rho P_\alpha Q^\alpha\Phi
 \nonumber\\
 &\quad=B_\rho\Big[
   \rho(P_\alpha Q^\alpha\Delta_S)\Phi
   +\rho^2(P_\alpha\Delta_S)(Q^\alpha\Delta_S)\Phi
 \nonumber\\
 &\hspace{35mm}
   -\rho(Q^\alpha\Delta_S)P_\alpha\Phi
   +\rho(P_\alpha\Delta_S)Q^\alpha\Phi
   \Big].
    \label{eq:MainExactDifferentiatedContactCorrection}
\end{align}
The first condition sets the dressing ratio to one, the point-derivative
conditions remove the quadratic and mixed terms, and the projected
condition removes the remaining linear term.  Derivatives acting on the
geodesic length, its normalization, or the gauge transporter occur equally
in the dressed and undressed contact pairings and do not alter the
identity.
\end{proof}

\subsection{Exact finite-\texorpdfstring{\(N\)}{N} theorem}

For \(\mathbf C=(C_1,\ldots,C_n)\), define
\begin{equation}
    \mathcal Z_{n,\rho}(\mathbf C)
    =\exp\!\left[\rho\sum_{j=1}^nS[C_j]\right].
    \label{eq:MultiloopDressingFactor}
\end{equation}

\begin{theorem}[Finite-\texorpdfstring{\(N\)}{N} zero-mode dressing]
\label{thm:ProjectedFiniteNDressing}
Let \(W_{n,\Lambda}(\mathbf C)\) solve the complete finite-\(N\)
Itoyama--Takashino hierarchy at fixed ultraviolet scale.  Let \(S[C]\)
be an even finite Stokes functional satisfying
\begin{equation}
    \boxed{
    \widehat{\mathbb L}_+S
    =\widehat{\mathbb L}_-S=0
    }
    \label{eq:ProjectedZeroModeConditions}
\end{equation}
and suppose that its selected surface branch obeys Proposition
\ref{prop:MainSurfaceGeodesicWireSewing} for the same-loop split and
different-loop join.  Then
\begin{equation}
    \boxed{
    \widetilde W_{n,\Lambda}(\mathbf C)
    =\mathcal Z_{n,\rho}(\mathbf C)W_{n,\Lambda}(\mathbf C)
    }
    \label{eq:DressedRegularizedMultiloopCorrelator}
\end{equation}
solves the same chiral and anti-chiral hierarchy for every even
constant \(\rho\).
\end{theorem}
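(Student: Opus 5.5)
The plan is to substitute \(\widetilde W_{n,\Lambda}=\mathcal Z_{n,\rho}W_{n,\Lambda}\) into each equation of the hierarchy \eqref{eq:FiniteNMultiloopSMM}, with the marked point on a loop \(C_j\). We then check that both sides acquire the same overall factor \(\mathcal Z_{n,\rho}(\mathbf C)\), so that the dressed equation reduces to the undressed one.

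\emph{Left-hand side.} The operator \(\widehat{\mathbb L}_{\pm,j}(s)\) acts only on the tangent data of \(C_j\). The factors \(e^{\rho S[C_k]}\) with \(k\neq j\) are therefore constants and pass through it. For the remaining factor we use the product rule \eqref{eq:SMMProductRuleZeroModeMain} of Proposition~\ref{prop:MainFiniteStokesChainRules}. It applies because \(S\) is an even finite-Stokes functional and the zero-mode conditions \eqref{eq:ProjectedZeroModeConditions} hold, so \(\mathfrak D_AS=0\). The \(f''\) cross terms between \(e^{\rho S}\) and \(W\) therefore cancel in the ordered triple commutator, and the \(f'\) term is proportional to \(\widehat{\mathbb L}_\pm S=0\). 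The result is \(\widehat{\mathbb L}_{\pm,j}\widetilde W_n=\mathcal Z_{n,\rho}\widehat{\mathbb L}_{\pm,j}W_n\). Because \(\rho\) is even, no grading signs appear.

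\emph{Right-hand side.} We treat it term by term. The subtraction terms \(-N^{-2}\widetilde W_n\) and \(-\widetilde W_n\) already carry the factor \(\mathcal Z_{n,\rho}(\mathbf C)\). For the splitting term we write
\[
\mathcal Z_{n+1,\rho}\bigl(\mathbf C^{j\to(st,ts)}\bigr)
=\mathcal Z_{n,\rho}(\mathbf C)\,e^{\rho\Delta^{\rm split}_{S,\Lambda}},
\]
and for the joining term
\[
\mathcal Z_{n-1,\rho}\bigl(\mathbf C^{jk\to j\#k}\bigr)
=\mathcal Z_{n,\rho}(\mathbf C)\,e^{\rho\Delta^{\rm join}_{S,\Lambda}}.
\]
Here the daughter and joined loops are built with the same geodesic wire that defines \(K_{\pm,\Lambda}\). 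The prefactor \(\mathcal Z_{n,\rho}(\mathbf C)\) depends on the marked data \((s,t)\) only through \(S[C_j]\) (and \(S[C_k]\) in the joining case). Its endpoint derivatives \(\mathfrak D_A(s)\) and \(\mathfrak D_A(t)\) vanish by \eqref{eq:PointDerivativeFiniteStokesZeroMain}, so the spinor derivatives in \(K_{\pm,\Lambda}\) act on it as on a constant, and it can be pulled outside the \(t\)-integral. What remains inside is \(K_{\pm,\Lambda}(e^{\rho\Delta_S}\Phi)\), with \(\Phi=W_{n\pm1,\Lambda}\). The sewing identity \eqref{eq:ContactKernelSewingIdentityMain} of Proposition~\ref{prop:MainSurfaceGeodesicWireSewing} replaces this by \(K_{\pm,\Lambda}\Phi\). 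That proposition applies for both the split and the join, and for each chirality.

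\emph{Main obstacle.} The delicate step is the claim that \(\mathcal Z_{n,\rho}(\mathbf C)\) commutes with the differentiated kernel. The derivatives \(D_\alpha^{(t)}\) and the line operator \(D_f^\alpha(t)\) act, after integration by parts, on all endpoint dependence, and a single derivative at \(t\) is not obviously a jump-type point derivative \(\mathfrak D_A(t)\). The first thing to try is to express these kernel derivatives through the endpoint operators \(P_\alpha^{(s)}\) and \(Q_{(t)}^\alpha\) used in Proposition~\ref{prop:MainSurfaceGeodesicWireSewing}. One then applies the graded Leibniz expansion \eqref{eq:MainExactDifferentiatedContactCorrection} to the product \(e^{\rho(S[C_j]+\Delta_S)}\Phi\) as a whole. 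All correction terms then involve \(P\) or \(Q\) acting on \(S[C_j]+\Delta_S\). These vanish by the finite-Stokes point-derivative identity \eqref{eq:MainDaughterSumPointDerivativeConditions} together with the projected condition \eqref{eq:MainProjectedSewingCondition}.

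With this step in place, dividing both sides by \(\mathcal Z_{n,\rho}(\mathbf C)\) recovers exactly the undressed equation satisfied by \(W_{n,\Lambda}\). The argument holds for every \(n\), every marked loop, and all finite-\(N\) subtraction terms.
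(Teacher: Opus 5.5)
Your proposal is correct and follows the paper's own proof. The product rule \eqref{eq:SMMProductRuleZeroModeMain} gives the common factor on every local left-hand side, Proposition~\ref{prop:MainSurfaceGeodesicWireSewing} lets the split and join ratios $e^{\rho\Delta_S}$ pass through $K_{\pm,\Lambda}$, and the $SU(N)$ subtraction terms keep the original contour list, so $\mathcal Z_{n,\rho}(\mathbf C)$ cancels. Your extra step of commuting $\mathcal Z_{n,\rho}(\mathbf C)$ itself through the differentiated kernel, which the paper absorbs into ``the same scalar factor,'' is settled by the same finite-Stokes point-derivative identities the paper uses inside that proposition, so it is a more explicit version of the same argument rather than a different route.
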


\begin{proof}
Equation~\eqref{eq:SMMProductRuleZeroModeMain} supplies the same factor
on every local left-hand side.  Proposition
\ref{prop:MainSurfaceGeodesicWireSewing} says that the dressing ratio is
identically one for every regulated split and join and that the
differentiated kernels pass through it exactly.  The \(SU(N)\)
reconnection terms preserve the original contour list.  Hence every
term in every equation acquires the same scalar factor, which cancels.
\end{proof}

In the planar limit,
\begin{equation}
    \boxed{
    W_{\rho,\Lambda}[C]
    =W_{0,\Lambda}[C]e^{\rho S[C]}.
    }
    \label{eq:ProjectedPlanarDressedSolution}
\end{equation}
For several independent common zero modes one may use
\begin{equation}
    \boxed{
    \exp\!\left[\sum_r\rho_rS_r[C]\right].
    }
    \label{eq:MultipleZeroModeDressing}
\end{equation}
The theorem is exact at fixed ultraviolet scale and imposes no reality
condition.  The Lorentzian coefficients are chosen later so that a
long timelike rectangle obeys
\begin{equation}
    e^{\rho S[C_{T,L}]}
    =e^{-iTE(L)},
    \qquad E(L)=\sigma L>0.
    \label{eq:LorentzianStaticEnergyConditionDressing}
\end{equation}

\section{Lorentzian supersymmetric Hodge-dual surface and the Super--Weierstrass representation}
\label{sec:SuperHodgeDualSurface}

We now construct the geometric finite-Stokes zero modes to which
Theorem~\ref{thm:ProjectedFiniteNDressing} will be applied.  The
supersymmetrization problem is fixed by the loop equation itself.  We
do not need to postulate a fundamental Green--Schwarz world sheet, a
world-sheet superspace, or an independent chiral
``completion'' of a bosonic current.  What is required is an even
geometric functional of the physical superloop whose
four-dimensional boundary area derivative belongs to one Lorentzian
Hodge sector,
\begin{equation}
    *_4\Omega^{(\chi)}
    =-i\chi\,\Omega^{(\chi)},
    \qquad
    \chi=\pm1,
    \label{eq:SHDSupersymmetrizationCriterion}
\end{equation}
and whose stationary conformal branch admits a Dirichlet and
Weierstrass representation.  Equation
\eqref{eq:SHDSupersymmetrizationCriterion}, together with the ordered
Jacobi identity established above, is the geometric input needed by
the supersymmetric loop equation.

The minimal supersymmetric extension is obtained by allowing the
auxiliary surface coordinates to take values in the even part of the
Grassmann algebra generated by the boundary superloop and by replacing
the ordinary boundary differential by the invariant superspace
one-form.  The bulk parameter domain remains an ordinary
2-dimensional disk.  Thus, in this section a \emph{surface
superfield} means an ordinary world-sheet field with values in the even Grassmann
algebra.  Target-space supersymmetry enters through
the boundary data; no independent odd world-sheet coordinates are
required.

The construction is related to the generalized super-Gauss-map,
doubly supersymmetric, and conformally parametrized supersurface
formalisms
\cite{ViswanathanParthasarathy1992SuperGauss,
BandosEtAl1995DoublySupersymmetric,
BertrandGrundlandHariton2015SuperSurfaces}.  We use the same
twistor factorization of null conformal currents, but the
present surface is the auxiliary
$\mathbb R^3\otimes\mathbb R^{1,3}$ Hodge surface required by loop
space rather than a fundamental superstring.

\subsection{Invariant boundary curve of the superloop}
\label{subsec:InvariantBoundaryCurveAndBosonicHDInput}

Let the physical boundary be the closed superloop
\begin{equation}
    C:\quad
    s\longmapsto
    z^M(s)
    =
    \left(
        x^a(s),
        \eta^\alpha(s),
        \bar\eta^{\dot\alpha}(s)
    \right),
    \qquad
    z^M(s+2\pi)=z^M(s).
    \label{eq:PhysicalSuperloopSHD}
\end{equation}
In the conventions of Section~\ref{sec:LoopCalculus}, the invariant
vector one-form is
\begin{equation}
    \Pi^a
    \equiv e^a
    =
    dx^a
    -i\,d\eta\,\sigma^a\bar\eta
    +i\,\eta\sigma^a d\bar\eta .
    \label{eq:InvariantOneFormSHD}
\end{equation}
Its contour pullback is
\begin{equation}
    \Pi^a\big|_C
    =ds\,E^a(s),
    \qquad
    E^a
    =
    \dot x^a
    -i\dot\eta\sigma^a\bar\eta
    +i\eta\sigma^a\dot{\bar\eta}.
    \label{eq:InvariantVectorTangentSHD}
\end{equation}
We retain the BV loop class of
\eqref{eq:BVSuperloopClass}: the physical superloop is continuous and
closed, whereas \(E^a(s)\) is a periodic BV symbol with independent
one-sided traces at a marked cut.

The even boundary curve seen by the auxiliary surface is obtained by
integrating the invariant tangent:
\begin{equation}
    \mathcal C^a(s)
    =
    \mathcal C^a(s_0)
    +\int_{s_0}^{s}du\,E^a(u),
    \qquad
    d\mathcal C^a=\Pi^a\big|_C.
    \label{eq:InvariantDevelopedCurve}
\end{equation}
The additive constant is immaterial.  Since $\Pi^a$ is not an exact
one-form on target superspace, periodicity of $z^M(s)$ does not by
itself imply closure of this boundary curve.  We therefore work with superloops satisfying the invariant closure condition,
\begin{equation}
    \oint_C\Pi^a
    =\int_0^{2\pi}ds\,E^a(s)=0.
    \label{eq:InvariantClosureCondition}
\end{equation}

Let $\mathscr G$ be the Grassmann algebra generated by the fermionic
boundary data.  Then
$\mathcal C^a,E^a\in\mathscr G_{\bar0}$.  A variation of the physical
superloop induces the corresponding boundary-curve variation
\begin{equation}
    \zeta^a(s)=\delta\mathcal C^a(s),
    \qquad
    \partial_s\zeta^a=\delta E^a,
    \label{eq:DevelopedCurveVariation}
\end{equation}
with
\begin{equation}
    \zeta^a(2\pi)=\zeta^a(0),
    \qquad
    \int_0^{2\pi}ds\,\partial_s\zeta^a=0.
    \label{eq:DevelopedVariationClosure}
\end{equation}
Its derivative \(\partial_s\zeta^a\) may nevertheless have different
one-sided traces at the marked cut.  The supersymmetric boundary rule
is therefore
\begin{equation}
    dC^a\ \longrightarrow\ \Pi^a\big|_C.
    \label{eq:BoundarySupersymmetrizationRule}
\end{equation}

\subsection{Lorentzian Hodge frame and the off-shell area form}
\label{subsec:LorentzianHodgeFrame}
\label{subsec:HodgeDualExtremalSurface}

For a complex spacetime two-form $B_{ab}$, define
\begin{equation}
    (*_4B)_{ab}
    =\frac12\epsilon_{ab}{}^{cd}B_{cd},
    \qquad
    \eta_{ab}=\operatorname{diag}(-1,+1,+1,+1),
    \qquad
    \epsilon_{0123}=+1.
    \label{eq:LorentzianHodgeStarDefinition}
\end{equation}
Then $(*_4)^2=-1$ on two-forms.  The two complex Hodge projectors are
\begin{equation}
    P_\chi
    =\frac12\left(1+i\chi *_4\right),
    \qquad
    P_\chi^2=P_\chi,
    \qquad
    *_4P_\chi=-i\chi P_\chi.
    \label{eq:HodgeProjector}
\end{equation}

Choose the explicit Lorentzian continuation of the $'t$~Hooft frame
\begin{equation}
    \mathfrak e_{0j}^{\chi,i}
    =i\,\delta^i{}_j,
    \qquad
    \mathfrak e_{jk}^{\chi,i}
    =\chi\,\epsilon^i{}_{jk},
    \qquad
    \mathfrak e_{ab}^{\chi,i}
    =-\mathfrak e_{ba}^{\chi,i},
    \label{eq:ExplicitLorentzianTHooftBasis}
\end{equation}
where $i,j,k=1,2,3$.  It satisfies
\begin{equation}
    *_4\mathfrak e^{\chi,i}
    =-i\chi\,\mathfrak e^{\chi,i}
    \label{eq:LorentzianTHooftHodgeEigenvalue}
\end{equation}
and the crossed Fierz identity
\begin{equation}
\begin{aligned}
    \sum_{i=1}^{3}
    \mathfrak e_{a\rho}^{\chi,i}
    \mathfrak e_{b\sigma}^{\chi,i}
    ={}&
    \eta_{ab}\eta_{\rho\sigma}
    -\eta_{a\sigma}\eta_{b\rho}
    -i\chi\epsilon_{ab\rho\sigma}.
\end{aligned}
    \label{eq:LorentzianHodgeFrameProjectionIdentity}
\end{equation}
For a general antisymmetric tensor $n^{ab}$, the right-hand side of
\eqref{eq:LorentzianHodgeFrameProjectionIdentity} gives
$n_{\rho\sigma}-2i\chi(*_4n)_{\rho\sigma}$; it is not by itself a
Hodge projector.  If, however,
$*_4n=-i\chi n$, then
\begin{equation}
    \sum_i
    \mathfrak e_{a\rho}^{\chi,i}
    \mathfrak e_{b\sigma}^{\chi,i}n^{ab}
    =-n_{\rho\sigma}.
    \label{eq:CrossedIdentityOnHodgeSector}
\end{equation}
This elementary distinction is essential when the conformal solution
is continued off shell.

For each $\chi=\pm1$, introduce twelve independent even fields
\begin{equation}
    X_a^{\chi,i}(\xi)
    \in\mathscr G_{\bar0},
    \qquad
    a=0,1,2,3,
    \qquad
    i=1,2,3,
    \qquad
    \xi=(\xi^1,\xi^2)\in\mathcal D.
    \label{eq:HDEmbedding}
\end{equation}
Their raw extended two-form is
\begin{equation}
    \Xi_{ab}[X_\chi]
    =
    \epsilon^{mn}
    \sum_{i=1}^{3}
    \partial_mX_a^{\chi,i}
    \partial_nX_b^{\chi,i}.
    \label{eq:HDExtendedTangentTwoForm}
\end{equation}
The Hodge area form entering the off-shell functional is
\begin{equation}
    \Sigma_{ab}^{(\chi)}[X]
    =
    (P_\chi\Xi[X_\chi])_{ab},
    \qquad
    *_4\Sigma^{(\chi)}=-i\chi\Sigma^{(\chi)}.
    \label{eq:HDProjectedAreaForm}
\end{equation}
On the original constrained Hodge branch
$P_\chi\Xi=\Xi$, this definition is identical to the bosonic
square-root functional.  Writing the projector explicitly is the
minimal off-shell continuation that preserves the one-Hodge-sector
boundary flux away from that branch.

Define
\begin{equation}
    \boxed{
    S_\chi^{\rm HD}[X]
    =
    -\nu_{\rm HD}
    \int_{\mathcal D}d^2\xi\,
    \sqrt{
        -\frac12
        \Sigma_{ab}^{(\chi)}[X]
        \Sigma^{(\chi)ab}[X]
    } .
    }
    \label{eq:HDAreaFunctional}
\end{equation}
If the argument of the square root is
$A=A_{\rm body}+A_{\rm nil}$ with $A_{\rm body}\ne0$, the chosen
Lorentzian branch is defined by the finite nilpotent expansion of
$\sqrt{A_{\rm body}+A_{\rm nil}}$.

Supersymmetry is imposed through the boundary locking
\begin{equation}
    \boxed{
    dX_a^{\chi,i}\big|_{\partial\mathcal D}
    =
    \mathfrak e_{ab}^{\chi,i}\,
    \Pi^b\big|_C .
    }
    \label{eq:HDBoundaryLockingDifferential}
\end{equation}
Equivalently, after fixing an irrelevant integration constant,
\begin{equation}
    X_a^{\chi,i}\big|_{\partial\mathcal D}
    =
    \mathfrak e_{ab}^{\chi,i}\mathcal C^b.
    \label{eq:HDBoundaryLocking}
\end{equation}
No equation
$dX_a^{\chi,i}=\mathfrak e_{ab}^{\chi,i}\Pi^b$ is imposed in the
interior.  Such an extension would be inconsistent because $d^2X=0$
whereas $d\Pi^a$ is the nonzero flat-superspace torsion two-form.

Set
\begin{equation}
    \mathcal L_\chi
    =
    \sqrt{-\Sigma^{(\chi)}_{ab}\Sigma^{(\chi)ab}/2},
    \qquad
    n_{(\chi)}^{ab}
    =\frac{\Sigma^{(\chi)ab}}{\mathcal L_\chi}.
    \label{eq:NormalizedHDBoundaryFlux}
\end{equation}
Since $P_\chi$ is self-adjoint on two-forms and
$n_{(\chi)}$ already lies in its image, independent variation of all
twelve fields gives
\begin{equation}
    \epsilon^{mn}\partial_m
    \left(
        n_{(\chi)}^{ab}
        \partial_nX_b^{\chi,i}
    \right)=0,
    \qquad i=1,2,3.
    \label{eq:HDEulerLagrangeEquation}
\end{equation}
The surface is therefore determined by a genuine bulk variational
problem rather than prescribed by the boundary superloop.

\subsection{Boundary variation and Hodge duality of the area derivative}
\label{subsec:BoundaryVariationAndHodgeAreaDerivative}

On a stationary surface the bulk term in the first variation vanishes,
and the remaining boundary flux is
\begin{equation}
    \delta S_\chi^{\rm HD}
    =
    \nu_{\rm HD}
    \oint_{\partial\mathcal D}ds\,
    \delta X_a^{\chi,i}
    n_{(\chi)}^{ab}
    \partial_sX_b^{\chi,i},
    \label{eq:HDBoundaryVariation}
\end{equation}
up to the common orientation convention.  At the boundary,
\begin{equation}
    \delta X_a^{\chi,i}
    =\mathfrak e_{a\rho}^{\chi,i}\zeta^\rho,
    \qquad
    \partial_sX_b^{\chi,i}
    =\mathfrak e_{b\sigma}^{\chi,i}E^\sigma.
    \label{eq:HDBoundaryVariationLocking}
\end{equation}
Because $n_{(\chi)}$ is a Hodge eigenform,
\eqref{eq:CrossedIdentityOnHodgeSector} gives
\begin{equation}
    \delta S_\chi^{\rm HD}[C]
    =
    \oint_Cds\,
    \zeta^\rho(s)E^\sigma(s)
    \Omega_{\sigma\rho}^{(\chi)}[C;s],
    \label{eq:HDBosonicStokesVariation}
\end{equation}
where, after absorbing the orientation sign into $\nu_{\rm HD}$,
\begin{equation}
    \Omega_{ab}^{(\chi)}
    =\nu_{\rm HD}\,n_{ab}^{(\chi)}.
    \label{eq:HDBoundaryAreaDerivative}
\end{equation}
Consequently,
\begin{equation}
    \boxed{
    *_4\Omega^{(\chi)}
    =-i\chi\,\Omega^{(\chi)}.
    }
    \label{eq:HodgeDualAreaDerivative}
\end{equation}
The Hodge star acts only on the four Lorentz-vector indices and is
blind to the even Grassmann coefficients.  Hence the complete
supersymmetric dependence may enter through $\Pi^a$ without changing
the four-dimensional Hodge argument.

Equation~\eqref{eq:HodgeDualAreaDerivative} has been derived before
any conformal, Dirichlet, or Weierstrass reduction.  This order is
essential: the loop-space area derivative must be taken from the
unreduced Hodge functional, not from its final scalar Dirichlet value.

\subsection{An off-shell lift of the conformal branch}
\label{subsec:SHDVirasoroReduction}

To exhibit a solvable family of stationary surfaces, introduce an
even four-vector field $Y^a(\xi)$ and an even world-sheet density
$\Lambda_{\chi,i}^{a}(\xi)$.  The parent functional
\begin{equation}
\begin{aligned}
    \mathscr S_\chi[X,Y,\Lambda]
    ={}&
    S_\chi^{\rm HD}[X]
    +\int_{\mathcal D}d^2\xi\,
    \Lambda_{\chi,i}^{a}
    \left(
        X_a^{\chi,i}
        -\mathfrak e_{ab}^{\chi,i}Y^b
    \right)
\end{aligned}
    \label{eq:SHDParentFunctional}
\end{equation}
provides an off-shell lift of the conformal branch.  The multiplier
term contains no derivatives and therefore contributes no boundary
flux.  Its equations impose
\begin{equation}
    X_a^{\chi,i}
    =\mathfrak e_{ab}^{\chi,i}Y^b.
    \label{eq:OnShellHodgeParametrization}
\end{equation}
The reduced field obeys the Dirichlet boundary condition
\begin{equation}
    Y^a\big|_{\partial\mathcal D}=\mathcal C^a.
    \label{eq:HDBoundaryConditionY}
\end{equation}
This is a construction of one explicitly integrable stationary family;
it is not an assertion that every solution of
\eqref{eq:HDEulerLagrangeEquation} has this form.

Let $\mathcal E_{\chi,i}^{a}[X]$ denote the Euler--Lagrange derivative
of $S_\chi^{\rm HD}[X]$.  The remaining parent equations are
\begin{equation}
    \Lambda_{\chi,i}^{a}
    =-\mathcal E_{\chi,i}^{a}[X],
    \qquad
    \sum_i
    \mathfrak e_{ab}^{\chi,i}
    \Lambda_{\chi,i}^{a}=0.
    \label{eq:SHDParentEquations}
\end{equation}
The second equation is exactly the Euler--Lagrange equation obtained
by varying the reduced functional
$S_\chi^{\rm HD}[\mathfrak eY]$ with respect to $Y$.  Hence every
stationary $Y$ of the reduced problem lifts to a stationary point of
the parent problem by choosing
$\Lambda=-\mathcal E[\mathfrak eY]$.  The twelve-field boundary
momentum, and therefore the Hodge area derivative, is retained.

Define the simple two-form of $Y$ by
\begin{equation}
    F_{ab}[Y]
    =\epsilon^{mn}
    \partial_mY_a\partial_nY_b.
    \label{eq:SHDSimpleYBivector}
\end{equation}
The crossed identity gives the exact Lorentzian algebra
\begin{equation}
    \Xi_{ab}[\mathfrak eY]
    =F_{ab}-2i\chi(*_4F)_{ab},
    \qquad
    \Sigma_{ab}^{(\chi)}[\mathfrak eY]
    =-(P_\chi F)_{ab}.
    \label{eq:HDProjectedBranchAlgebra}
\end{equation}
Since $F$ is simple,
\begin{equation}
    F_{ab}(*_4F)^{ab}=0,
    \qquad
    (*_4F)_{ab}(*_4F)^{ab}=-F_{ab}F^{ab}.
    \label{eq:SimpleBivectorHodgeIdentities}
\end{equation}
It follows that
\begin{equation}
    \Sigma_{ab}^{(\chi)}
    \Sigma^{(\chi)ab}
    =\frac12F_{ab}F^{ab}.
    \label{eq:HDDiscriminantReduction}
\end{equation}
Thus the projected square root is precisely the ordinary
2-dimensional area discriminant of $Y$, up to a fixed normalization.

Let $z$ and $\widetilde z$ be independent coordinates on the
complexified world sheet and set
\begin{equation}
    p^a=\partial_zY^a,
    \qquad
    q^a=\partial_{\widetilde z}Y^a.
    \label{eq:SHDConformalTangents}
\end{equation}
Equation~\eqref{eq:HDDiscriminantReduction} gives
\begin{equation}
    S_\chi^{\rm HD}[\mathfrak eY]
    =
    T_{\rm NG}
    \int dz\,d\widetilde z\,
    \sqrt{(p\cdot q)^2-p^2q^2},
    \label{eq:SHDReducedNambuGotoFunctional}
\end{equation}
where $T_{\rm NG}$ includes the constant normalization, coordinate
Jacobian, and Lorentzian square-root phase.  The result is independent
of $\chi$.

On every regular patch, 2-dimensional conformal flatness permits the
Virasoro gauge
\begin{equation}
    \boxed{
    p^2=0,
    \qquad
    q^2=0.
    }
    \label{eq:SHDVirasoroConstraints}
\end{equation}
For an even Grassmann-valued metric, conformal coordinates for the
body extend recursively through the finite nilpotent ideal.  On this
gauge slice,
\begin{equation}
    \boxed{
    S_\chi^{\rm HD}[Y]\big|_{\rm Vir}
    =
    T_{\rm D}
    \int_{\mathcal D}dz\,d\widetilde z\,
    \partial_zY^a\partial_{\widetilde z}Y_a,
    }
    \label{eq:SHDDirichletFunctional}
\end{equation}
with the orientation and square-root branch absorbed into $T_{\rm D}$.
The Euler--Lagrange equation is
\begin{equation}
    \partial_z\partial_{\widetilde z}Y^a=0.
    \label{eq:SHDDirichletEquation}
\end{equation}
Equations~\eqref{eq:SHDParentEquations} show that this Dirichlet
solution is a stationary solution of the parent variational problem,
not merely of an unrelated reduced functional.

\subsection{The Super--Weierstrass family and supertwistor variables}
\label{subsec:SHDDirichletReduction}
\label{subsec:SHDSuperWeierstrass}

The local solution of \eqref{eq:SHDDirichletEquation} is
\begin{equation}
    \boxed{
    Y^a(z,\widetilde z)
    =f^a(z)+\widetilde f^{\,a}(\widetilde z),
    }
    \label{eq:SHDWeierstrassSolution}
\end{equation}
where $f^a$ and $\widetilde f^{\,a}$ are even
Grassmann-valued analytic functions.  The Virasoro constraints become
\begin{equation}
    f'{}^2(z)=0,
    \qquad
    \widetilde f'{}^2(\widetilde z)=0.
    \label{eq:SHDNullWeierstrassCurrents}
\end{equation}
Thus the two conformal currents are complex lightlike vectors in
four-dimensional Minkowski space with even supernumber coefficients.

In bispinor notation, a regular null $2\times2$ matrix has rank one.
Over the local even Grassmann algebra this gives
\begin{equation}
    \boxed{
    f'_{\alpha\dot\alpha}(z)
    =\lambda_\alpha(z)
    \widetilde\lambda_{\dot\alpha}(z),
    \qquad
    \widetilde f'_{\alpha\dot\alpha}(\widetilde z)
    =\rho_\alpha(\widetilde z)
    \widetilde\rho_{\dot\alpha}(\widetilde z).
    }
    \label{eq:SHDSuperTwistorFactorization}
\end{equation}
The four spinors are commuting even superfields.  The factorization
exists locally whenever the body of the null current has rank one.
The only local redundancy is
\begin{equation}
\begin{aligned}
    \lambda&\longmapsto g_+(z)\lambda,
    &\qquad
    \widetilde\lambda&\longmapsto
    g_+^{-1}(z)\widetilde\lambda,
    \\
    \rho&\longmapsto g_-(\widetilde z)\rho,
    &
    \widetilde\rho&\longmapsto
    g_-^{-1}(\widetilde z)\widetilde\rho,
\end{aligned}
    \label{eq:SHDTwistorRescalings}
\end{equation}
where $g_\pm$ are invertible even functions.

\paragraph{Free BV twistor traces.}
The analytic currents in
\eqref{eq:SHDSuperTwistorFactorization} are used through their Hardy
boundary values, understood as BV symbols on the oriented circle.
After choosing the marked cut at \(\theta=0\), the one-sided values
\begin{equation}
\begin{gathered}
    \lambda(0^+),\quad \lambda(2\pi^-),\qquad
    \widetilde\lambda(0^+),\quad
    \widetilde\lambda(2\pi^-),
    \\
    \rho(0^+),\quad \rho(2\pi^-),\qquad
    \widetilde\rho(0^+),\quad
    \widetilde\rho(2\pi^-)
\end{gathered}
    \label{eq:FreeTwistorBoundaryTraces}
\end{equation}
are independent free traces, modulo the local projective rescalings
\eqref{eq:SHDTwistorRescalings}.  In particular, no condition of the
form
\begin{equation}
    \lambda(2\pi^-)=\pm\lambda(0^+)
    \label{eq:NoTwistorParityCondition}
\end{equation}
is imposed, and similarly for the other twistor factors.  Their
integer Hardy--Fourier series are Fourier representations of periodic
BV symbols; at a jump the series converges to the average of the two
traces and therefore does not identify them.  Holomorphy constrains
the Hardy content of the null currents, while closure imposes one
global bilinear zero-mode condition.  Neither condition creates a
periodic or anti-periodic twistor sector.

Along the boundary, introduce the chiral and anti-chiral incidence
coordinates
\begin{equation}
    x_+^{\alpha\dot\alpha}
    =x^{\alpha\dot\alpha}
    +i\eta^\alpha\bar\eta^{\dot\alpha},
    \qquad
    x_-^{\alpha\dot\alpha}
    =x^{\alpha\dot\alpha}
    -i\eta^\alpha\bar\eta^{\dot\alpha}.
    \label{eq:SHDChiralIncidenceCoordinates}
\end{equation}
For the first null current, define
\begin{equation}
\begin{aligned}
    \mu^{\dot\alpha}
    &=x_+^{\beta\dot\alpha}\lambda_\beta,
    &\qquad
    \vartheta&=\eta^\beta\lambda_\beta,
    \\
    \widetilde\mu^{\alpha}
    &=-x_-^{\alpha\dot\beta}
      \widetilde\lambda_{\dot\beta},
    &
    \widetilde\vartheta
    &=\bar\eta^{\dot\beta}
      \widetilde\lambda_{\dot\beta}.
\end{aligned}
    \label{eq:SHDSupertwistorIncidence}
\end{equation}
Together with the projective rescaling
\eqref{eq:SHDTwistorRescalings}, these are the $\mathcal N=1$
superambitwistor incidence data of the null tangent.  The second
current is treated identically with
$(\lambda,\widetilde\lambda)$ replaced by
$(\rho,\widetilde\rho)$.  The odd coordinates are therefore not an
independent superfield strength appended to a bosonic surface; they
are the incidence data of the same physical superloop whose invariant
one-form appears in the boundary condition.

For a disk boundary, take
\begin{equation}
    z=e^{i\theta},
    \qquad
    \widetilde z=e^{-i\theta}.
    \label{eq:SHDBoundaryConformalCoordinates}
\end{equation}
Then the invariant boundary tangent is generated directly by the two
null currents:
\begin{equation}
    \boxed{
    \frac{d\mathcal C_{\alpha\dot\alpha}}{d\theta}
    =
    iz\,\lambda_\alpha(z)
    \widetilde\lambda_{\dot\alpha}(z)
    -i\widetilde z\,
    \rho_\alpha(\widetilde z)
    \widetilde\rho_{\dot\alpha}(\widetilde z).
    }
    \label{eq:SHDTwistorGeneratedBoundaryTangent}
\end{equation}
Integration gives $\mathcal C(\theta)$.  The physical bosonic
coordinate is recovered from
\begin{equation}
    dx^a
    =d\mathcal C^a
    +i\,d\eta\sigma^a\bar\eta
    -i\,\eta\sigma^a d\bar\eta.
    \label{eq:SHDRecoverPhysicalBosonicLoop}
\end{equation}
Closure imposes
\begin{equation}
    \int_0^{2\pi}d\theta\,
    \frac{d\mathcal C^a}{d\theta}=0,
    \label{eq:SHDTwistorClosureCondition}
\end{equation}
together with the required Lorentzian reality conditions.  For an
integer Fourier representation
\begin{equation}
\begin{aligned}
    \lambda(\theta)&=\sum_r\lambda_r e^{ir\theta},
    &\qquad
    \widetilde\lambda(\theta)
      &=\sum_r\widetilde\lambda_r e^{ir\theta},
    \\
    \rho(\theta)&=\sum_r\rho_r e^{ir\theta},
    &
    \widetilde\rho(\theta)
      &=\sum_r\widetilde\rho_r e^{ir\theta},
\end{aligned}
    \label{eq:TwistorBVFourierExpansions}
\end{equation}
this is the bilinear zero-mode constraint
\begin{equation}
    \boxed{
    \sum_r
    \lambda_{\alpha,r}
    \widetilde\lambda_{\dot\alpha,-r-1}
    =
    \sum_r
    \rho_{\alpha,r}
    \widetilde\rho_{\dot\alpha,1-r}.
    }
    \label{eq:TwistorBilinearClosureConstraint}
\end{equation}
The mode ranges in
\eqref{eq:TwistorBVFourierExpansions} are restricted by the chosen
Hardy charts and projective gauge, but the closure relation is always
the vanishing of the zero mode of the bilinear current, with the sums
understood in the BV/Hardy convolution sense.

There is no boundary reparametrization $\tau(\phi)$ to determine.  We
are not given an unparametrized geometric curve and asked to recover
its conformal parameter by minimizing a Douglas functional.  The
direction of construction is
\begin{equation}
    \boxed{
    \text{supertwistor data}
    \longrightarrow
    f',\widetilde f'
    \longrightarrow
    \mathcal C'(\theta)
    \longrightarrow
    (\mathcal C,\eta,\bar\eta)
    \longrightarrow C.
    }
    \label{eq:SHDForwardTwistorConstruction}
\end{equation}
We parametrize a family of conformal stationary surfaces together with
their boundary superloops.  The inverse Douglas problem is bypassed
rather than solved.

\subsection{Fourier--Hilbert representation of the planar superarea}
\label{subsec:SHDPlanarSuperarea}

On the Virasoro--Weierstrass branch the reduction to the Dirichlet
functional is exact.  Its stationary value is therefore evaluated
without expanding the square root of the parent functional.  For the
invariant boundary curve \(\mathcal C^a(\theta)\), with
\(E^a=\dot{\mathcal C}^{\,a}\) and
\(\int_0^{2\pi}E^a d\theta=0\), one obtains
\begin{equation}
    \boxed{
    \begin{aligned}
    D[\mathcal C]
    &=\frac12\int_0^{2\pi}d\theta\,
      \mathcal C^a\mathsf H_sE_a
      =\frac12\int_0^{2\pi}d\theta\,
      E^a|\partial_\theta|^{-1}E_a\\
    &=2\pi\sum_{n\ge1}n\,
      \mathcal C_n^a\mathcal C_{-n,a}.
    \end{aligned}
    }
    \label{eq:ExactBoundaryDirichletFunctional}
\end{equation}
Here \(\mathsf H_s\) is the circular Hilbert transform acting
coefficientwise on the even Grassmann algebra, and the Fourier series
is understood in the periodic-BV sense.  The harmonic extension,
Hardy projections, Fourier Virasoro equations, and Green-identity
derivation are given in Supplementary Material,
Sec.~S10.

If the bosonic body is planar, write
\begin{equation}
    E^a=\dot x^a+\Upsilon^a,
    \qquad
    \Upsilon^a
    =-i\dot\eta\sigma^a\bar\eta
      +i\eta\sigma^a\dot{\bar\eta}.
    \label{eq:ExactFermionicDevelopment}
\end{equation}
Then the exact quadratic boundary functional terminates at fourth
Grassmann order,
\begin{equation}
    \boxed{
    D[\mathcal C]
    =A_{\rm pl}[x]
     +\Delta_\eta^{(2)}
     +\Delta_\eta^{(4)},
    }
    \label{eq:ExactPlanarBodySuperareaDecomposition}
\end{equation}
with
\begin{equation}
\begin{aligned}
    \Delta_\eta^{(2)}
    &=\int_0^{2\pi}d\theta\,
      x^a\mathsf H_s\Upsilon_a,\\
    \Delta_\eta^{(4)}
    &=\frac12\int_0^{2\pi}d\theta\,
      \Upsilon^a|\partial_\theta|^{-1}\Upsilon_a.
\end{aligned}
    \label{eq:ExactPlanarBodyFermionicCorrections}
\end{equation}
No higher terms occur because \(D\) is exactly quadratic and
\(\Upsilon\) is already fermion-bilinear.

With the normalization inherited from the bosonic Hodge surface,
\begin{equation}
    \boxed{
    S_+[C_{\rm pl}]=S_-[C_{\rm pl}]
    =2\sqrt2\,D[\mathcal C].
    }
    \label{eq:ExactPlanarSHDSuperareaValue}
\end{equation}
This equality concerns stationary scalar values only; the area
derivative is still obtained from the unreduced twelve-field parent
functional.  For a Euclidean rectangular body,
\begin{equation}
    S_+[C_{T_E,L}]=S_-[C_{T_E,L}]
    =2\sqrt2\left(LT_E+\Delta_\eta^{(2)}
                         +\Delta_\eta^{(4)}\right),
    \label{eq:ExactSuperRectangleArea}
\end{equation}
and for vanishing or constant boundary fermions
\begin{equation}
    \boxed{\sigma=2\sqrt2\,\kappa.}
    \label{eq:ExactPlanarStringTensionNormalization}
\end{equation}
Nonconstant boundary fermions multiply the bosonic rectangle by the
finite nilpotent factor
\begin{equation}
    \exp\!\left[-2\sqrt2\,\kappa
    \left(\Delta_\eta^{(2)}+\Delta_\eta^{(4)}\right)\right].
    \label{eq:ExactNilpotentRectangleFactor}
\end{equation}

\subsection{The unreduced functional and the hidden Hodge variables}
\label{subsec:NoReducedYSurface}
\label{subsec:SHDHiddenVariables}

The Dirichlet representation
\eqref{eq:SHDDirichletFunctional} is used to solve the stationary
surface, but it is not the definition of the loop functional for the
purpose of taking area derivatives.  The definition remains the
stationary value of the parent functional
\eqref{eq:SHDParentFunctional}, with twelve independently varied
$X_a^{\chi,i}$ fields, the projected square root
\eqref{eq:HDAreaFunctional}, and the Hodge-locked boundary condition
\eqref{eq:HDBoundaryLocking}.  The multiplier term vanishes on shell
and has no boundary derivative, but it provides a legitimate off-shell
lift of the Weierstrass branch.

The two Hodge chiralities have the same reduced Dirichlet equations and
the same stationary scalar value.  Nevertheless, their boundary
conjugate momenta differ because the map from the physical boundary
variation to the twelve hidden variables contains
$\mathfrak e^{\chi,i}$.  Their area derivatives therefore belong to
opposite Hodge sectors.  Reducing first to the scalar Dirichlet value
and differentiating only that value would discard precisely the
boundary Hessian data required by the loop equation.  The correct
order is
\begin{equation}
    \boxed{
    \begin{gathered}
    \text{vary the full Hodge-projected twelve-field parent
    functional,}\\
    \text{then evaluate the answer on the
    Virasoro--Weierstrass branch.}
    \end{gathered}
    }
    \label{eq:SHDCorrectOrderOfOperations}
\end{equation}

\subsection{Finite Stokes functionals and Lorentzian reality}
\label{subsec:SHDStokesIntegrability}

On the regular Super--Weierstrass branch constructed in
Subsections~\ref{subsec:SHDVirasoroReduction} and
\ref{subsec:SHDSuperWeierstrass}, let $X_\chi[C]$, $Y_\chi[C]$, and
$\Lambda_\chi[C]$ denote the corresponding stationary parent
solution with boundary data \eqref{eq:HDBoundaryLocking}.  Its
stationary value defines
\begin{equation}
    S_\chi[C]
    =\mathscr S_\chi
    [X_\chi[C],Y_\chi[C],\Lambda_\chi[C]].
    \label{eq:SHDStationaryLoopFunctional}
\end{equation}
On every regular branch, the on-shell boundary variation and
\eqref{eq:HDBosonicStokesVariation} give
\begin{equation}
    \delta S_\chi[C]
    =
    \oint_Cds\,
    \delta\mathcal C^a E^b
    \Omega_{ba}^{(\chi)}.
    \label{eq:SHDFullStokesVariation}
\end{equation}
The current is finite wherever the body of the stationary surface is
nondegenerate.  Thus $S_\chi$ is an even finite Stokes functional in
the invariant boundary-curve variables.  Because $S_\chi[C]$ is
defined as the stationary value of the parent functional,
\eqref{eq:SHDFullStokesVariation} is its exact first variation by the on-shell boundary variation; no independent loop-space curl or
``functional-integrability'' condition is imposed.  Its superspace
dependence is already contained in $d\mathcal C=\Pi|_C$.

On a physical Lorentzian real slice, choose conjugate Hodge frames,
stationary branches, and additive constants so that
\begin{equation}
    S_-[C]
    =\bigl(S_+[C]\bigr)^\dagger.
    \label{eq:SHDRealityCondition}
\end{equation}
The two complex functionals are the conjugate Hodge components from
which the real Lorentzian Wilson-loop phase is formed.

\subsection{Hodge--Jacobi identity and the common zero modes}
\label{subsec:SHDProjectedZeroModes}

The path-ordered graded Jacobi identity of
Subsection~\ref{subsec:OrderedAdjointDerivativeAndJacobi} is the
loop-space Bianchi identity.  The construction of Sections
\ref{sec:LoopCalculus}--\ref{sec:StokesFunctionalsLeibniz} shows that,
after the chiral or anti-chiral Itoyama--Takashino projection is
applied, a finite Stokes functional whose four-dimensional boundary
area derivative lies in one Lorentzian Hodge sector is annihilated
by the homogeneous local operators.  Applying this result to
\eqref{eq:HodgeDualAreaDerivative} gives
\begin{equation}
    \boxed{
    \widehat{\mathbb L}_\pm S_+[C]
    =
    \widehat{\mathbb L}_\pm S_-[C]
    =0.
    }
    \label{eq:BothBranchesProjectedZeroModes}
\end{equation}
No separately postulated chiral field strength, superinstanton, or
independent superspace two-form is required.  The supersymmetric
dependence comes from the invariant boundary one-form $\Pi^a$, while
the vanishing follows from four-dimensional Hodge duality of the area
derivative and the ordered Jacobi identity.

By the chain rule of Section~\ref{sec:StokesFunctionalsLeibniz}, every
ordinary function of $S_+$ and $S_-$, in particular their exponential,
is again a common homogeneous zero mode.

\subsection{Exact additive branch, moving the cut on the BV circle, and geodesic-wire sewing}
\label{subsec:SHDAdditivityAndSewing}

The SHD sewing construction is internal to the explicit
Super--Weierstrass solution.  We do not invoke a separate existence or
bridge theorem for the Plateau problem.  The regulator uses a geodesic
wire drawn on the selected primary surface, and the daughter surfaces
are its two restrictions.

We first record the reparametrization statement used to place each
daughter boundary on its own standard circle.

\begin{lemma}[Relocation of a BV boundary period]
\label{lem:MainBVPeriodRelocation}
Let \(\Gamma\simeq S^1\) be a fixed oriented closed boundary and let
\(g\) be a BV trace on \(\Gamma\).  Choose any ordered interval
\(I_{t,s}=[t,s]\) as one traversal of \(\Gamma\), and let
\(g(\theta\mid s,t)\) denote the corresponding interval
representation.  For every orientation-preserving,
reparametrization-invariant functional \(\mathcal A[g]\),
\begin{equation}
    \boxed{
    \mathcal A_{[t,s]}[g(\,\cdot\mid s,t)]
    =\mathcal A_{S^1}[g].
    }
    \label{eq:MainBVPeriodFunctionalInvariance}
\end{equation}
The distributional closure condition is likewise independent of the
chosen period,
\begin{equation}
    \boxed{
    \int_t^s dg(\theta\mid s,t)=0.
    }
    \label{eq:MainBVPeriodClosure}
\end{equation}
The one-sided traces at \(t+0\) and \(s-0\) remain independent BV data;
the jump at the marked cut is included in the distributional derivative.
\end{lemma}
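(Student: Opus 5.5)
The plan is to reduce the lemma to the reparametrization invariance already built into the definition of the physical loop space \(\mathcal L_{\rm BV}=\mathcal P_{\rm BV}/\operatorname{Diff}^{+}(S^1)\), together with translation invariance on the circle.

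\emph{Step 1: the relocated interval is a rigid rotation followed by a BV reparametrization.} Given the ordered interval \(I_{t,s}=[t,s]\) covering \(\Gamma\) once, define the affine map
\[
\varphi_{s,t}(u)=t+\frac{(s-t)}{2\pi}\,u,\qquad u\in[0,2\pi],
\]
and extend it by \(\varphi(u+2\pi)=\varphi(u)+2\pi\) in the lifted coordinate. The length of the traversal is \(s-t=2\pi\) modulo the chosen lift, so \(\varphi_{s,t}\) is a rotation composed with a constant rescaling. It therefore belongs to \(\operatorname{Diff}^{+}_{\rm BV}(S^1)\): it is bi-Lipschitz, with \(\varphi'\) constant and positive. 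If a more general interval representation \(g(\cdot\mid s,t)\) is intended, with a nonuniform parameter, I would compose with an element of \(\operatorname{Diff}^{+}_{\rm BV}\). In that case \(\varphi'\) is BV and bounded between \(m\) and \(M\), exactly as in \eqref{eq:IntroductionBVReparametrizationGroup}. By construction, \(g(\cdot\mid s,t)=g\circ\varphi_{s,t}\) on the open interval. The weight-one transformation \eqref{eq:IntroductionTangentReparametrization} is used if \(g\) is a tangent; for a scalar trace it is plain composition.

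\emph{Step 2: apply invariance.} The value \(\mathcal A_{[t,s]}\) of an orientation-preserving, reparametrization-invariant functional on the interval representation is, by definition, its value on the representative \(g\circ\varphi_{s,t}\) of the same orbit. Invariance \eqref{eq:IntroductionWilsonReparametrization}, extended to the BV completion by the density argument stated after \eqref{eq:IntroductionBVReparametrizationGroup}, then gives \eqref{eq:MainBVPeriodFunctionalInvariance}. Orientation preservation is needed because \(\varphi'>0\) keeps the cyclic order, and it maps the one-sided traces as \(u\pm0\mapsto\varphi(u)\pm0\).

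\emph{Step 3: closure and traces.} For closure, I would write the distributional derivative on the cut circle as the absolutely continuous part plus the atomic jump at the cut,
\[
dg=g'_{\rm ac}\,d\theta+\bigl[g(t+0)-g(s-0)\bigr]\delta_{\rm cut}.
\]
Here any Cantor part is excluded by the BV-derivative regularity assumed for tangents. The absolutely continuous part integrates to \(g(s-0)-g(t+0)\), which cancels the atom, giving \eqref{eq:MainBVPeriodClosure}. This is the same cancellation invoked after \eqref{eq:ITGaugeUnfixingTotalDerivativeMain}. Equivalently, the total mass of \(dg\) on \(S^1\) is zero by periodicity of the BV primitive, and it is rotation invariant. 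The traces at \(t+0\) and \(s-0\) are never identified, since \(\varphi\) only relabels them.

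\emph{Main obstacle.} I expect the hardest step to be the continuity extension from smooth \(\operatorname{Diff}^{+}\) to BV reparametrizations when \(g\) has jumps. I would handle it by approximating \(\varphi\) in \(W^{1,1}\) with uniformly bounded \(\varphi'\), which gives strict BV convergence of \(g\circ\varphi_n\). I would then argue that \(\mathcal A\), being continuous for the strict topology in the class considered, passes to the limit.
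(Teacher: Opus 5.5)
Your proposal is correct and follows essentially the same route as the paper. Relocating the cut is an orientation-preserving affine relabeling of the same BV orbit, so reparametrization invariance gives the first identity. Closure follows because the endpoint difference over the open interval is cancelled by the atom of the distributional derivative at the cut, which is the same as saying the total mass of $dg$ on the circle vanishes.

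Two small remarks. The relocation map is affine and hence smooth, so the BV-density extension you flag as the main obstacle, together with the extra strict-topology continuity you would impose on $\mathcal A$, is not needed. In Step~3 the cancellation should be stated for the full restriction of $dg$ to $(t,s)$, interior jumps included, rather than only its absolutely continuous part; your total-mass argument already covers this.
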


For the invariant boundary curve \(\mathcal C=f+\widetilde f\), the daughter
closure condition is
\begin{equation}
    \int_t^s d\theta\,
    \partial_\theta\mathcal C(\theta\mid s,t)=0.
    \label{eq:MainDaughterClosure}
\end{equation}
Thus the numerical endpoints used as a period carry no geometric
information.

Let
\begin{equation}
    Y_C=f(z)+\widetilde f(\widetilde z)
    \label{eq:MainCommonWhiteBridgeMap}
\end{equation}
be the selected parent Super--Weierstrass surface on the disk
\(\mathcal D_C\).  For an ordered pair \(s,t\), choose the geodesic
wire \(\gamma_{st}\) used in the contact kernel and a simple lift
\(\widehat\gamma_{st}\subset\mathcal D_C\).  The lift divides the disk
into two domains,
\begin{equation}
    \mathcal D_C
    =\mathcal D_{st}^{\gamma}
     \cup_{\widehat\gamma_{st}}
     \mathcal D_{ts}^{\gamma}.
    \label{eq:SHDParentDomainPartition}
\end{equation}
The daughter maps are the restrictions of the same analytic solution.
After mapping each daughter domain conformally to the unit disk by
\(h_{st},h_{ts}:\mathbb D\to\mathcal D_{st,ts}^{\gamma}\),
\begin{equation}
\begin{aligned}
    f_{st}&=f\circ h_{st},
    &\qquad
    \widetilde f_{st}&=\widetilde f\circ\widetilde h_{st},
    \\
    f_{ts}&=f\circ h_{ts},
    &
    \widetilde f_{ts}&=\widetilde f\circ\widetilde h_{ts}.
\end{aligned}
    \label{eq:MainDaughterPullbackMaps}
\end{equation}
No Hilbert transform is recomputed and no holomorphic function is
perturbed.  The nonlocal Dirichlet reconstruction would arise only if
one independently prescribed two new daughter boundaries.  Here one
already has the parent analytic map and merely restricts its domain.

Because the parent action is a local integral, the partition
\eqref{eq:SHDParentDomainPartition} gives, branch by branch,
\begin{equation}
    \boxed{
    S_\pm[C]
    =S_\pm[C_{st}^{\gamma}]
     +S_\pm[C_{ts}^{\gamma}].
    }
    \label{eq:SHDSplittingAdditivity}
\end{equation}
For two distinct loops, the selected joined branch is the disjoint
union of their two parent surfaces with the connector traversed in the
two opposite directions.  The wire has zero two-dimensional measure,
so
\begin{equation}
    \boxed{
    S_\pm[C_i\#_{s,t}^{\gamma}C_j]
    =S_\pm[C_i]+S_\pm[C_j].
    }
    \label{eq:SHDMergingAdditivity}
\end{equation}
These are exact identities at fixed \(\Lambda\).

The four BV twistor germs
\begin{equation}
    \mathfrak t_s^\pm
    =(\lambda,\widetilde\lambda,\rho,\widetilde\rho)(s\pm0),
    \qquad
    \mathfrak t_t^\pm
    =(\lambda,\widetilde\lambda,\rho,\widetilde\rho)(t\pm0)
    \label{eq:FourTwistorContactGerms}
\end{equation}
are redistributed according to
\begin{equation}
    \boxed{
    C_{st}^{\gamma}:(\mathfrak t_s^+,\mathfrak t_t^-),
    \qquad
    C_{ts}^{\gamma}:(\mathfrak t_t^+,\mathfrak t_s^-).
    }
    \label{eq:TwistorGermRedistribution}
\end{equation}
There is no equality between the beginning and ending twistors of a
daughter loop.  Each daughter remains in the same free-trace
Super--Weierstrass class and satisfies its own bilinear closure
condition.

The local Stokes argument is equally direct.  On the conformal branch,
set
\begin{equation}
    p^a=f'{}^a(z),
    \qquad
    q^a=\widetilde f'{}^a(\widetilde z),
    \qquad
    F_{ab}=p_aq_b-p_bq_a.
    \label{eq:SHDBridgeSimpleBivector}
\end{equation}
After varying the full parent functional first, its area derivative on
the Weierstrass solution is
\begin{equation}
    \boxed{
    \Omega_{ab}^{(\chi)}
    =-\nu_{\rm HD}
      \frac{(P_\chi F)_{ab}}
      {\sqrt{-\frac12(P_\chi F)_{cd}(P_\chi F)^{cd}}}.
    }
    \label{eq:SHDWeierstrassAreaDerivative}
\end{equation}
The two daughter values on the geodesic wire are restrictions of this
one tensor.  Therefore
\begin{equation}
    \boxed{
    \left.\Omega_{ab}^{(\chi),st}\right|_{\gamma_{ts}}
    =
    \left.\Omega_{ab}^{(\chi),ts}\right|_{\gamma_{st}}
    =
    \left.\Omega_{ab}^{(\chi),C}\right|_{\gamma}.
    }
    \label{eq:SHDBridgeOmegaMatching}
\end{equation}
The boundary orientations are opposite, and the two Stokes currents
cancel pointwise.  The same is true for a common variation of the
parent map,
\begin{equation}
    \boxed{
    \left.\delta\Omega_{ab}^{(\chi),st}\right|_{\gamma_{ts}}
    =
    \left.\delta\Omega_{ab}^{(\chi),ts}\right|_{\gamma_{st}}.
    }
    \label{eq:SHDBridgeLinearizedMatching}
\end{equation}
These equalities are consequences of restriction, not independent
continuity assumptions.

In superspace,
\begin{equation}
    p_{\alpha\dot\alpha}
    =\lambda_\alpha\widetilde\lambda_{\dot\alpha},
    \qquad
    q_{\alpha\dot\alpha}
    =\rho_\alpha\widetilde\rho_{\dot\alpha}.
    \label{eq:MainBridgeTwistorCurrents}
\end{equation}
The individual twistors may differ between daughter charts by
projective transition functions, but the bilinear one-forms are
restrictions of the same parent currents.  Equations
\eqref{eq:SHDBridgeOmegaMatching} and
\eqref{eq:SHDBridgeLinearizedMatching} therefore hold coefficient by
coefficient in the finite Grassmann expansion.

For either Hodge branch, define
\begin{equation}
\begin{aligned}
    \Delta_\chi^{\rm split}
    &=S_\chi[C_{st}^{\gamma}]
      +S_\chi[C_{ts}^{\gamma}]-S_\chi[C],
    \\
    \Delta_\chi^{\rm join}
    &=S_\chi[C_i\#_{s,t}^{\gamma}C_j]
      -S_\chi[C_i]-S_\chi[C_j].
\end{aligned}
    \label{eq:SHDGeodesicDefects}
\end{equation}
Exact restriction additivity and the finite-Stokes property give, for
\(\Delta_\chi=\Delta_\chi^{\rm split}\) or
\(\Delta_\chi^{\rm join}\),
\begin{equation}
    \boxed{
    \Delta_\chi=0.
    }
    \label{eq:SHDDefectValueCondition}
\end{equation}
\begin{equation}
    \boxed{
    \mathfrak D_A(s)\Delta_\chi
    =\mathfrak D_A(t)\Delta_\chi
    =0.
    }
    \label{eq:SHDDefectPointDerivativeConditions}
\end{equation}
\begin{equation}
    \boxed{
    P_\alpha^{(s)}Q_{(t)}^\alpha\Delta_\chi=0.
    }
    \label{eq:SHDDefectProjectedCondition}
\end{equation}
The anti-chiral kernel obeys the conjugate projected identity.  This
verifies Proposition~\ref{prop:MainSurfaceGeodesicWireSewing} for each
Hodge branch.  The detailed geodesic contact profile, BV period
relocation, domain partition, and differentiated-kernel proof are given
in Supplementary Material, Secs.~S7, S8, and S10.5.

\subsection{Lorentzian confining phase and static energy}
\label{subsec:SHDDressingApplication}

Equations~\eqref{eq:BothBranchesProjectedZeroModes},
\eqref{eq:SHDDefectValueCondition}--\eqref{eq:SHDDefectProjectedCondition}
verify the hypotheses of Theorem~\ref{thm:ProjectedFiniteNDressing} on
the selected branch.

For pure $SU(N)$, $4d,\mathcal N=1$ SYM, define
\begin{equation}
    \Lambda_{\mathcal N=1}^{3N}
    =
    \mu^{3N}
    \exp\!\left[
        -\frac{8\pi^2}{g^2(\mu)}
        +i\theta_{\rm YM}
    \right],
    \label{eq:HolomorphicStrongScale}
\end{equation}
\begin{equation}
    \Lambda_k^3
    =
    \left|\Lambda_{\mathcal N=1}\right|^3
    \exp\!\left[
        \frac{i(\theta_{\rm YM}+2\pi k)}{N}
    \right],
    \qquad
    k=0,\ldots,N-1,
    \label{eq:SYMVacuumBranches}
\end{equation}
and
\begin{equation}
    \Lambda_k^2
    =\left(\Lambda_k^3\right)^{2/3},
    \label{eq:DimensionTwoVacuumParameter}
\end{equation}
where the fractional-power branch is part of the vacuum data.  The
real Lorentzian phase is
\begin{equation}
    \Phi_{\rm SHD}^{(k)}[C]
    =
    \frac{c}{2}
    \left(
        \Lambda_k^2S_+[C]
        +
        \bar\Lambda_k^2S_-[C]
    \right)
    =
    c\,\Re\!\left[
        \Lambda_k^2S_+[C]
    \right],
    \label{eq:PhysicalSHDAreaExponent}
\end{equation}
where $c$ is a real dimensionless vacuum coefficient.  The dressed
solution is
\begin{equation}
    W_{\Lambda,k}[C]
    =
    W_{\Lambda,0}[C]
    \exp\!\left[
        i\Phi_{\rm SHD}^{(k)}[C]
    \right].
    \label{eq:SHDDressedWilsonLoop}
\end{equation}

The exact Fourier--Hilbert result of
Section~\ref{subsec:SHDPlanarSuperarea} fixes the relation between the
coefficient of the SHD functional and the physical string tension.  On
the Euclidean continuation, define the positive coefficient
$\kappa_k$ of the common planar stationary value by
\begin{equation}
    \Phi_{E,\rm SHD}^{(k)}[C_{\rm pl}]
    \equiv
    \frac{c}{2}
    \left(
        \Lambda_k^2S_+[C_{\rm pl}]
        +\bar\Lambda_k^2S_-[C_{\rm pl}]
    \right)
    =\kappa_k S_{\rm pl}[C_{\rm pl}],
    \qquad
    \kappa_k>0,
    \label{eq:EffectivePlanarKappaVacuum}
\end{equation}
where
\begin{equation}
    S_{\rm pl}[C]
    =2\sqrt2\,D[\mathcal C].
    \label{eq:CommonPlanarSHDValueStatic}
\end{equation}
On a Euclidean real branch with $S_+=S_-=S_{\rm pl}$, one has
$\kappa_k=c\,\Re(\Lambda_k^2)$ after selecting the physical
fractional-power branch.

For a rectangular bosonic body of Euclidean temporal length $T_E$ and
spatial width $L$, Eq.~\eqref{eq:ExactSuperRectangleArea} gives
\begin{equation}
    D[\mathcal C]
    =LT_E+\Delta^{(2)}_\eta+\Delta^{(4)}_\eta.
    \label{eq:EuclideanRectangleExactSHDArea}
\end{equation}
For the ordinary static-source loop the boundary fermions are set to
zero.  Consequently,
\begin{equation}
    W_{E,k}[C_{T_E,L}]
    \sim
    W_{E,0}[C_{T_E,L}]
    \exp[-\sigma_kLT_E],
    \qquad
    \boxed{
    \sigma_k=2\sqrt2\,\kappa_k.
    }
    \label{eq:ExactSYMPlanarStringTension}
\end{equation}
Analytic continuation $T_E\to iT$ gives
\begin{equation}
    W_{\Lambda,k}[C_{T,L}]
    \sim
    W_{\Lambda,0}[C_{T,L}]
    \exp[-i\sigma_kLT],
    \qquad
    E_k(L)=\sigma_kL>0.
    \label{eq:SHDPositiveStaticEnergy}
\end{equation}
For nonconstant boundary fermions the Wilson factor is multiplied by
the finite nilpotent factor in
Eq.~\eqref{eq:ExactNilpotentRectangleFactor}; the body of the static
energy remains $\sigma_kL$.

The loop hierarchy fixes the admissible Hodge-zero-mode structure and
the additive contact continuation.  It does not determine the vacuum
coefficient, the fractional-power branch, the supersymmetric vacuum
label, or the numerical value of the string tension.

\section{Loop-space fixed points of the SYM gradient flow}
\label{sec:SpontaneousStochasticization}

The exact results of Sections~\ref{sec:LoopCalculus}--
\ref{sec:SuperHodgeDualSurface} are statements about the Lorentzian
Itoyama--Takashino hierarchy and do not depend on a gradient-flow
interpretation.  The present section gives an additional dynamical
interpretation.  Since a downward Yang--Mills gradient flow and
Parisi--Wu stochastic quantization are naturally formulated for the
Euclidean action, one analytic continuation is required in this
section.  No second spinor convention is introduced: dotted and
undotted spinors remain independent complex variables during the
continuation, and all POOC identities are continued algebraically.

For a physical timelike contour, write the Lorentzian SHD factor as
\begin{equation}
    \mathcal Z_{M,k}[C]
    =
    \exp\!\left[
        i\Phi_{\rm SHD}^{(k)}[C]
    \right].
    \label{eq:LorentzianSHDFactorFlowSection}
\end{equation}
Under
\begin{equation}
    x^0=-ix_E^4,
    \qquad
    S_E=-iS_M,
    \label{eq:GradientFlowWickRotation}
\end{equation}
define the analytically continued SHD functional by
\begin{equation}
    \Phi_{E,\rm SHD}^{(k)}[C_E]
    =
    -i\,
    \Phi_{\rm SHD}^{(k)}[C_M]
    \big|_{x^0=-ix_E^4}.
    \label{eq:EuclideanSHDFunctionalFlowSection}
\end{equation}
Then
\begin{equation}
    \mathcal Z_{M,k}[C_M]
    \longrightarrow
    \mathcal Z_{E,k}[C_E]
    =
    \exp\!\left[
        -\Phi_{E,\rm SHD}^{(k)}[C_E]
    \right].
    \label{eq:EuclideanSHDFactorFlowSection}
\end{equation}
For the rectangular contour of
\eqref{eq:SHDPositiveStaticEnergy},
\begin{equation}
    \Phi_{E,\rm SHD}^{(k)}[C_{T_E,L}]
    =
    \sigma_kLT_E,
    \qquad
    \sigma_k>0,
    \label{eq:EuclideanRectangleDamping}
\end{equation}
so the same branch which gives the positive Lorentzian static energy
produces a decaying Euclidean weight.

\subsection{Euclidean supersymmetric gradient flow}
\label{subsec:EuclideanSYMGradientFlow}

The ordinary Yang--Mills gradient flow, also called the Wilson flow,
is a standard tool in continuum and lattice gauge theory
\cite{Luscher2010WilsonFlow,Luscher2013Chiral,
DelDebbioPatellaRago2013}.  After Euclidean continuation, its
stationary points include self-dual and anti-self-dual instantons.
The mathematical theory of the bosonic Yang--Mills heat flow includes
existence, convergence, and bubbling results
\cite{Rade1992YangMillsHeat,Struwe1994YangMillsFlow,
Donaldson1985ASD}.

The supersymmetric flow is more delicate because the superspace
connection is constrained and the flow must preserve supergauge
covariance.  Generalized gradient flows for
\(4d,\mathcal N=1\) SYM and their Wess--Zumino-gauge component forms
have been constructed in
Refs.~\cite{KikuchiOnogi2014,KadohUkita2018}.  We need only this form
here.

Let
\begin{equation}
    \phi^I
    =
    \left(
        v_a,\lambda_\alpha,
        \bar\lambda_{\dot\alpha},D
    \right)
    \label{eq:GradientFlowWZFields}
\end{equation}
denote the analytically continued Wess--Zumino component fields.  A
generalized downward flow may be written schematically as
\begin{equation}
    \partial_\tau\phi^I
    =
    -\mathcal G^{IJ}(\phi)
    \frac{\delta S_E}{\delta\phi^J}
    +\mathcal R_\omega^I(\phi),
    \label{eq:SYMGradientFlow}
\end{equation}
where \(\tau\) is the fictitious flow time,
\(\mathcal G^{IJ}\) is the graded field-space metric, and
\(\mathcal R_\omega^I\) is a gauge-damping vector field.  The precise
component form depends on the chosen supersymmetric flow convention.
The damping term changes the representative on a supergauge orbit but
has no effect on a closed Wilson loop.

Adding the corresponding graded Gaussian noise gives
\begin{equation}
    \partial_\tau\phi^I
    =
    -\mathcal G^{IJ}
    \frac{\delta S_E}{\delta\phi^J}
    +\mathcal R_\omega^I
    +\sqrt{\varepsilon}\,\xi^I,
    \label{eq:StochasticSYMGradientFlow}
\end{equation}
where \(\varepsilon\) measures the external noise strength.  With the
appropriate fluctuation--dissipation normalization, the stationary
measure is the Euclidean gauge-theory measure of Parisi--Wu stochastic
quantization
\cite{ParisiWu1981,DamgaardHueffel1987}.  The fermionic components of
\eqref{eq:StochasticSYMGradientFlow} are understood in the standard
graded stochastic sense; no positivity statement is assigned to the
fermionic block of \(\mathcal G^{IJ}\).

\subsection{Exact induced evolution of closed super Wilson loops}
\label{subsec:InducedLoopSpaceGradientFlow}

Let
\begin{equation}
    W_{S,\tau}[C]
    =
    \frac1N
    \Tr P\exp
    \oint_C e^A A_A(z,\tau)
    \label{eq:FlowedWilsonLoop}
\end{equation}
be the super Wilson loop evaluated on the flowed connection.  The
exact flow-time derivative is the marked insertion
\begin{equation}
\begin{aligned}
    \partial_\tau W_{S,\tau}[C]
    ={}&
    \frac1N
    \oint_C ds\,
    \Tr\!\left[
        U_\tau(s+2\pi,s)
        E^A(s)
        \partial_\tau A_A\bigl(z(s),\tau\bigr)
    \right].
\end{aligned}
    \label{eq:FlowDerivativeWilsonLoop}
\end{equation}
A gauge-damping contribution of the form
\(\mathcal D_A\omega\) becomes an endpoint commutator in
\eqref{eq:FlowDerivativeWilsonLoop}; its closed color trace vanishes.
Thus the deterministic connection-space flow induces a homogeneous,
gauge-invariant first-order evolution operator on loop functionals:
\begin{equation}
    \partial_\tau W_{S,\tau}[C]
    =
    -\mathcal H_{\rm loop}^{(0)}
    W_{S,\tau}[C].
    \label{eq:LoopSpaceGradientFlow}
\end{equation}

The component Euler derivatives
\(\delta S_E/\delta\phi^I\) are precisely the finite components of the
covariant chiral and anti-chiral equation-of-motion superfields
\begin{equation}
    \mathcal E_+
    \propto
    \mathcal D^\alpha W_\alpha,
    \qquad
    \mathcal E_-
    \propto
    \bar{\mathcal D}_{\dot\alpha}\bar W^{\dot\alpha}.
    \label{eq:SYMGradientEOMSuperfields}
\end{equation}
This component-to-superfield assembly is the same one used in the
Itoyama--Takashino Schwinger--Dyson derivation.  Acting with the graded
field-space metric, applying the exact chain rule
\(\phi^I\mapsto A_A[\phi]\), and inserting the result into
\eqref{eq:FlowDerivativeWilsonLoop} therefore gives the exact
loop-space decomposition
\begin{equation}
    \boxed{
    \mathcal H_{\rm loop}^{(0)}
    =
    \oint_C ds\,
    \left[
        \mathcal T_+(s)\circ
        \widehat{\mathbb L}_{+,E}(s)
        +
        \mathcal T_-(s)\circ
        \widehat{\mathbb L}_{-,E}(s)
    \right].
    }
    \label{eq:ProjectedLoopFlowGenerator}
\end{equation}
Here \(\mathcal T_\pm(s)\) contain the tangent contraction, the local
chain rule from the Wess--Zumino fields to the constrained connection,
and the chosen graded field-space metric.  Equation
\eqref{eq:ProjectedLoopFlowGenerator} is the loop-space image of the
gradient-flow equation itself.  It is not an additional factorization
hypothesis and is unrelated to large-\(N\) factorization.

At nonzero noise, It\^o differentiation of products of Wilson lines
produces a second-variation term.  In loop variables this term has the
same splitting-and-joining geometry as the differentiated contact
kernels of the I--T hierarchy.  At stochastic equilibrium the
Schwinger--Dyson identities of the Euclidean measure are therefore the
analytic continuation of the finite-cutoff hierarchy of
Section~\ref{sec:SchwingerDysonContactTerm}.  The exact matching of a
particular supersymmetric noise covariance to the normalization of
\(K_{\pm,\Lambda}\) depends on the chosen gradient-flow convention and
is not required for the fixed-point identity below.

\subsection{Exact stationarity of the SHD factor}
\label{subsec:SHDGradientFlowStationarity}

The local operator identity proved in
Section~\ref{subsec:SHDProjectedZeroModes} is
\begin{equation}
    \widehat{\mathbb L}_\pm(s) S_+[C]
    =
    \widehat{\mathbb L}_\pm(s) S_-[C]
    =0
    \qquad
    \text{for every marked point }s.
    \label{eq:SHDZeroModeGradientFlowSection}
\end{equation}
Consequently, the real Lorentzian phase
\eqref{eq:PhysicalSHDAreaExponent} satisfies
\begin{equation}
    \widehat{\mathbb L}_\pm(s)
    \Phi_{\rm SHD}^{(k)}[C]
    =0,
    \label{eq:SHDAreaExponentZeroMode}
\end{equation}
and the finite-Stokes chain rule gives
\begin{equation}
    \widehat{\mathbb L}_\pm(s)
    \exp\!\left[
        i\Phi_{\rm SHD}^{(k)}[C]
    \right]
    =0.
    \label{eq:SHDExponentialZeroModeGradient}
\end{equation}
The same identity holds on the regular analytically continued branch:
\begin{equation}
    \widehat{\mathbb L}_{\pm,E}(s)
    \mathcal Z_{E,k}[C_E]
    =0.
    \label{eq:EuclideanSHDZeroModeGradient}
\end{equation}
Substitution into the exact decomposition
\eqref{eq:ProjectedLoopFlowGenerator} yields
\begin{equation}
    \boxed{
    \mathcal H_{\rm loop}^{(0)}
    \mathcal Z_{E,k}[C_E]
    =0,
    \qquad
    \partial_\tau\mathcal Z_{E,k}[C_E]=0.
    }
    \label{eq:SHDStationaryGradientFlow}
\end{equation}
Thus the analytically continued SHD factor is an exact fixed point of
the zero-noise SYM loop flow.  No additional
factorization assumption is required.  Its Lorentzian boundary value
is the confining phase \(\exp[i\Phi_{\rm SHD}^{(k)}]\), whose
rectangular-loop behavior gives the positive static energy
\eqref{eq:SHDPositiveStaticEnergy}.

Exact stationarity does not by itself imply dynamical selection.  The
remaining open question is the stability of the SHD fixed point.  Let
\(\delta W_\tau\) be a physical gauge-invariant perturbation transverse
to the family of SHD vacuum zero modes.  Denote by
\(\mathcal M_k\) the corresponding linearized loop-flow generator,
\begin{equation}
    \partial_\tau\delta W_\tau
    =
    -\mathcal M_k\,\delta W_\tau.
    \label{eq:LinearizedSHDLoopFlow}
\end{equation}
The dynamical stability assumption is that, after quotienting
supergauge directions and the neutral directions tangent to the
vacuum family,
\begin{equation}
    \Re\operatorname{spec}'\mathcal M_k>0,
    \qquad
    \lim_{\tau\to\infty}
    \lVert\delta W_\tau\rVert=0.
    \label{eq:SHDStabilityAssumption}
\end{equation}
Equivalently, the classical flow is attractive in the
physical directions transverse to the SHD fixed-point manifold.  This
stability statement is not proved in the present paper.

\subsection{Spontaneous stochasticization}
\label{subsec:SpontaneousStochasticizationMechanism}

At finite noise, stochastic quantization evolves a probability
functional on field space toward the Euclidean quantum measure.  A
naive zero-noise limit might be expected to collapse onto individual
classical trajectories of the classical gradient flow.  A
nontrivial statistical state can nevertheless survive if the
infinite-time, infinite-volume, or large-\(N\) limit is taken before
the external noise is removed.

We call this possibility \emph{spontaneous stochasticization}, or
spontaneous quantization.  The proposition that the external noise
driving the stochastic quantization of quantum field theories
(encompassing both gauge and gravity theories) might emerge
spontaneously from the chaotic nonlinear dynamics of internal degrees
of freedom was originally proposed by the author in 1986
\cite{Migdal_1986_Stochastic}.  Concrete constructions now exist in
classical fluid dynamics and in the Yang--Mills gradient flow.  The
present theory supplies an exact supersymmetric loop-space fixed
point; its interpretation as a dynamically selected stochastic state
requires the stability condition
\eqref{eq:SHDStabilityAssumption}.

The physical analogy is with spontaneous magnetization: an
infinitesimal external source selects a phase, and the order parameter
remains after the source is removed in the thermodynamic limit.  A
closer dynamical analogy is decaying turbulence, where thermal or
external microscopic noise can select a nontrivial statistical
solution whose macroscopic fluctuations remain finite even as the
selecting noise tends to zero.  The corresponding Wilson-loop
construction for Navier--Stokes flow was studied in
Refs.~\cite{migdal2023exact,migdal2024quantum,ReviewPaperAM}.

The analogous proposal for gauge theory was formulated in
Refs.~\cite{migdal2025SQYMflow,migdal2025geometric}.  In the present
setting, two statements must be separated.  First, the Euclidean SHD
factor is an exact fixed point of the zero-noise SYM loop flow by
\eqref{eq:SHDStationaryGradientFlow}.  Second, if this fixed point is
stable and selected by an arbitrarily weak external noise in the
appropriate infinite-time and large-volume limits, then its
Lorentzian continuation
\begin{equation}
    W_{\Lambda,k}[C]
    =
    W_{\Lambda,0}[C]
    \exp\!\left[
        i\Phi_{\rm SHD}^{(k)}[C]
    \right]
    \label{eq:SpontaneouslyQuantizedWilsonLoop}
\end{equation}
survives the zero-noise limit with a nonzero confining string tension.
Under this stability assumption, the SHD fixed point realizes
spontaneous stochasticization.

The present paper does not prove the stability condition
\eqref{eq:SHDStabilityAssumption}.  Such a proof requires the spectrum
of the linearized loop-flow generator, control of its supergauge and
vacuum zero modes, and a precise specification of the order of the
zero-noise, infinite-time, and large-volume limits.  In the case of
fluid turbulence, a related stability analysis was recently performed
in Ref.~\cite{migdal2026EulerStability} under self-consistent
compactness assumptions for the momentum-loop dynamics.

\subsection{Connection-space fixed points and loop-space zero modes}
\label{subsec:ConnectionAndLoopFixedPoints}

The loop-space fixed point must not be confused with an individual
fixed connection.  After Euclidean continuation, a connection-space
fixed point satisfies
\begin{equation}
    \frac{\delta S_E}{\delta\phi^I}=0
    \label{eq:ConnectionSpaceFixedPoint}
\end{equation}
modulo gauge transformations.  Self-dual and anti-self-dual
instantons, and their supersymmetric extensions, are examples
\cite{ADHM,Osborn1979,DoreyKhozeMattis2002,
WessBagger,GatesGrisaruRocekSiegel}.

For every point \(m\) of an on-shell superinstanton moduli space and
every topological sector \(Q\), the corresponding classical Wilson
functional obeys the homogeneous local equations
\begin{equation}
    \widehat{\mathbb L}_{\pm,E}
    W_{Q,m}[C]
    =0.
    \label{eq:ModuliWilsonFunctionalZeroMode}
\end{equation}
The relevant linearity is the linearity of the homogeneous loop
operators acting on Wilson functionals, not a linearity of gauge
potentials.  A formal moduli-space average
\begin{equation}
    W[C]
    =
    \sum_Q
    \int_{\mathcal M_Q}
    d\mu_Q(m)\,
    \rho_Q(m)\,
    W_{Q,m}[C]
    \label{eq:ModuliSpaceAverage}
\end{equation}
therefore remains a homogeneous loop zero mode, provided the measure
has no boundary anomaly under the local loop operations.  The weights
\(\rho_Q\) are vacuum data and are not fixed by the homogeneous
equations.

The SHD construction does not reconstruct such a connection-space
measure.  It produces the gauge-invariant zero mode directly from the
invariant boundary curve, the stationary Hodge-dual surface, and
its Hodge-resolved Stokes area derivative.  The statistical
superinstanton picture is therefore an interpretation of the
loop-space solution, not an additional step in its proof.

\subsection{Scope of the gradient-flow interpretation}
\label{subsec:GradientFlowScope}

The logical content of the paper may be separated as follows:
\begin{equation}
\begin{gathered}
    \text{I--T + POOC + SHD}
    \Longrightarrow
    \text{exact finite-}\!N\text{ zero mode},
    \\
    \text{Euclidean continuation}
    \Longrightarrow
    \text{exact SHD loop-flow fixed point},
    \\
    \text{stability + zero-noise order of limits}
    \Longrightarrow
    \text{spontaneous stochasticization}.
\end{gathered}
    \label{eq:GradientFlowLogicalSeparation}
\end{equation}
The first two implications are established in this paper.  The third
is the remaining dynamical hypothesis.

Accordingly, no separate loop-flow factorization assumption is
required for stationarity.  What is not established here is the
asymptotic stability or attractiveness of the SHD fixed point, a
unique invariant probability measure on the full space of constrained
connections, or the limiting procedure by which an infinitesimal
external noise selects one member of the SHD vacuum family.  The exact
result is the existence of the gauge-invariant confining fixed point;
its dynamical selection as the zero-noise equilibrium is conditional
only on the stability and limit assumptions stated above.

\section{Discussion: the Lorentzian SHD surface as a loop-space master field}
\label{sec:SHDAsWittenMasterField}
\label{subsec:SHDAsWittenMasterField}

The exact Itoyama--Takashino hierarchy is the loop-space form of the
Schwinger--Dyson equations of pure \(4d,\mathcal N=1\) SYM.  Like an
ordinary equation of motion, it constrains the admissible states but
does not by itself select a vacuum.  Section~\ref{sec:StokesFunctionalsLeibniz}
shows how this vacuum freedom is realized geometrically: multiplication by
the exponential of a finite-Stokes zero mode that is additive under the
geodesic-wire split and join maps every solution of the complete finite-\(N\),
ultraviolet-defined hierarchy to another solution.

For the Lorentzian SHD pair constructed in
Section~\ref{sec:SuperHodgeDualSurface}, the corresponding vacuum
factor is
\begin{equation}
    \boxed{
    \mathcal Z_{{\rm SHD},k}[C]
    =
    \exp\!\left[
        i\Phi_{\rm SHD}^{(k)}[C]
    \right],
    }
    \label{eq:SHDMasterAreaLawFactor}
\end{equation}
where \(\Phi_{\rm SHD}^{(k)}\) is the real Lorentzian phase
\eqref{eq:PhysicalSHDAreaExponent}.  The overall coefficient, the
fractional-power branch in \(\Lambda_k^2\), and the supersymmetric
vacuum label \(k\) are not fixed by the homogeneous loop equations.
They are nonperturbative vacuum data.  The physical branch is selected
by the static-energy condition
\eqref{eq:SHDPositiveStaticEnergy},
\begin{equation}
    W[C_{T,L}]
    \sim
    \exp[-i\sigma_kLT],
    \qquad
    E_k(L)=\sigma_kL,
    \qquad
    \sigma_k>0.
    \label{eq:MasterSurfaceStaticEnergy}
\end{equation}
Thus the two complex Lorentzian Hodge sectors combine into a real
phase whose long-rectangle limit yields a real positive Hamiltonian
energy.

\subsection{The loop-to-surface map}
\label{subsec:MasterSurfaceMapDiscussion}

There is a useful large-\(N\) interpretation of this geometric vacuum
factor.  It is a gauge-invariant loop-space version of Witten's master field
\cite{Witten:1980ez}.  Large-\(N\) factorization makes the planar theory
classical at the level of gauge-invariant observables, but this does not mean
that the master variable must be an ordinary colored gauge potential.  Here it
is a fixed assignment of geometric data to every admissible superloop.

There is no single surface independent of the observable contour.  The
construction is the map
\begin{equation}
    \boxed{
    C
    \longmapsto
    \mathcal C_C
    \longmapsto
    Y_C
    \longmapsto
    \bigl(S_+[C],S_-[C]\bigr)
    \longmapsto
    \Phi_{\rm SHD}^{(k)}[C].
    }
    \label{eq:SHDMasterSurfaceMap}
\end{equation}
The first arrow reconstructs the invariant boundary curve from
Subsection~\ref{subsec:InvariantBoundaryCurveAndBosonicHDInput}:
\begin{equation}
    d\mathcal C^a
    =
    e^a\big|_C,
    \qquad
    \oint_C e^a=0.
    \label{eq:MasterSurfaceInvariantBoundary}
\end{equation}
The second arrow selects a stationary embedding with the Dirichlet
condition
\begin{equation}
    Y_C^a\big|_{\partial\mathcal D}
    =
    \mathcal C_C^a,
    \label{eq:MasterSurfaceDirichletBoundary}
\end{equation}
and the two Hodge-locked auxiliary representations are
\begin{equation}
    X_{C,a}^{\chi,i}
    =
    \mathfrak e_{ab}^{\chi,i}Y_C^b,
    \qquad
    X_{C,a}^{\chi,i}\big|_{\partial\mathcal D}
    =
    \mathfrak e_{ab}^{\chi,i}\mathcal C_C^b,
    \qquad
    \chi=\pm1.
    \label{eq:MasterSurfaceProjectedBoundary}
\end{equation}
These are precisely the boundary relations
\eqref{eq:HDBoundaryConditionY} and
\eqref{eq:HDBoundaryLocking}.

The index \(i=1,2,3\) labels a basis of the three-dimensional complex Lorentzian Hodge
sector.  A complex orthogonal change of
this basis changes the auxiliary coordinates \(X_{C,a}^{\chi,i}\), but
leaves the projector \(P_\chi\), the stationary functional, and its
Stokes variation invariant.  This is only the freedom to choose a basis in the Hodge representation, not
an additional physical gauge field.  On a
physical Lorentzian real slice, the two Hodge frames and the two Stokes
functionals are chosen as conjugate pairs according to
\eqref{eq:SHDRealityCondition}.

\subsection{Gauge dynamics encoded by path-ordered operators}
\label{subsec:MasterSurfacePOOCDiscussion}

The master-surface construction parallels the loop-space
representation of the gauge theory.  The covariant derivative
\begin{equation}
    \nabla_A
    =
    D_A-A_A
    \label{eq:CovariantDerivativeMasterFieldDiscussion}
\end{equation}
is not itself a gauge-invariant observable, but Section~\ref{sec:LoopCalculus}
represents its action directly on a Wilson transporter.  The POOC
dictionary is
\begin{equation}
\begin{aligned}
    \delta_A
    &\longleftrightarrow
    \nabla_A,
    \\
    \delta^\Sigma_{AB}
    =
    T_{AB}{}^C\delta_C
    -[\delta_A,\delta_B\}_o
    &\longleftrightarrow
    F_{AB},
    \\
    \mathfrak D_A^{\,o}\delta^\Sigma_{BC}
    &\longleftrightarrow
    \mathcal D_AF_{BC}.
\end{aligned}
    \label{eq:MasterSurfacePOOCDictionary}
\end{equation}
Path ordering is the direct operator definition of coincident
loop differentiation; it is not an ultraviolet smoothing.  The only
physical ultraviolet scale enters the contact bi-kernel and short
connectors of
Subsection~\ref{subsec:GaugeCovariantRegularization}.

The exact closed hierarchy of Section~\ref{sec:SchwingerDysonContactTerm}
then supplies the dynamical equations for Wilson-loop functionals.
The path-ordered graded Jacobi identity gives the torsionful loop-space
Bianchi identity. The parent surface supplies finite Stokes area
derivatives in one Lorentzian Hodge sector, without an independently
postulated superspace two-form completion. After the chiral or anti-chiral
Itoyama--Takashino projection, four-dimensional Hodge duality converts
the Jacobi--Bianchi identity directly into the local equations.
Consequently,
\begin{equation}
    \boxed{
    \widehat{\mathbb L}_\pm S_+[C]
    =
    \widehat{\mathbb L}_\pm S_-[C]
    =0,
    }
    \label{eq:MasterSurfaceProjectedZeroModes}
\end{equation}
as proved in
Subsection~\ref{subsec:SHDProjectedZeroModes}.

This relation is stronger than a formal equation for the reduced
scalar area. The complete supersymmetric dependence is carried by the
invariant superspace one-form $\Pi^a|_C$ and hence by the boundary
curve. Subsection~\ref{subsec:SHDStokesIntegrability} defines each
$S_\chi[C]$ as the stationary value of the parent functional and derives
its Hodge-resolved Stokes current from the on-shell boundary variation. The
complete supersymmetric surface data are therefore
\begin{equation}
    \boxed{
    \mathfrak M_C
    =
    \left(
        Y_C,
        \Omega^{(+)}[C],
        \Omega^{(-)}[C]
    \right),
    }
    \label{eq:SupersymmetricMasterSurfacePair}
\end{equation}
rather than the bosonic embedding alone.

\subsection{Planar master geometry and the finite-
\texorpdfstring{\(N\)}{N} theorem}
\label{subsec:MasterSurfacePlanarFiniteN}

At large \(N\), normalized traces factorize and the exact multiloop
hierarchy \eqref{eq:FiniteNMultiloopSMM} closes on the one-loop
functional through \eqref{eq:PlanarFullSuperspaceSMM}.  The SHD factor
is then a nonfluctuating geometric zero mode of the planar equation.
In this limited but precise sense, the loop-to-surface assignment
\eqref{eq:SHDMasterSurfaceMap} is a master geometry for the confining
vacuum.

The dressing theorem is, however, stronger than the master-field
interpretation.  Theorem~\ref{thm:ProjectedFiniteNDressing} holds for
the complete finite-\(N\) hierarchy at fixed ultraviolet scale.  It
includes same-loop splitting, different-loop joining, and the
\(SU(N)\) traceless-generator terms.  The free-BV twistor factorization and the geodesic-wire theorem
establish the corresponding exact defect identities
in Subsection~\ref{subsec:SHDAdditivityAndSewing}.  Therefore
\begin{equation}
    \boxed{
    W_{\Lambda,k}[C]
    =
    W_{\Lambda,0}[C]
    \exp\!\left[
        i\Phi_{\rm SHD}^{(k)}[C]
    \right]
    }
    \label{eq:MasterSurfaceDressedWilsonLoop}
\end{equation}
solves the same ultraviolet-defined hierarchy as
\(W_{\Lambda,0}[C]\), exactly as stated in
\eqref{eq:SHDDressedWilsonLoop}.

The hierarchy does not determine the vacuum coefficient, the
fractional-power branch, or the supersymmetric vacuum label.  The
stationary branch used here is the explicit parent
Virasoro--Weierstrass solution, while its additive touching-loop continuation is fixed by restriction of the same
Super--Weierstrass surface along the geodesic UV wire.  POOC, the
finite-Stokes Leibniz identities, and the sewing theorem then
show that the resulting phase is compatible with every local loop
operation in the hierarchy.

\subsection{What the master surface does and does not encode}
\label{subsec:MasterSurfaceScope}

The stationary SHD surface is not a replacement for the ordinary
Wilson holonomy and is not a sum over arbitrary Polyakov worldsheets.
For each boundary it selects one stationary Hodge-dual branch, with the
exact additive touching-loop continuation defined by the geodesic cut on the parent surface.  The
undressed factor \(W_{\Lambda,0}[C]\) contains the remaining gauge
dynamics, while the SHD phase supplies the nonperturbative confining
vacuum factor.  The master surface therefore represents the geometric
vacuum sector carried by the exact zero mode; it does not by itself
encode the complete spectrum of planar excitations.

The stationary surface is rigid, but the quantum theory still has rich
dynamics.  The surface fixes the confining background geometry and the positive
static string tension.  Glueball and gluinoball excitations require
additional degrees of freedom propagating on, or coupled to, this
background.  In the nonsupersymmetric Geometric QCD construction
\cite{Migdal2026GeometricQCDII,Migdal2026GeometricQCDIII}, analogous
spectral information is carried by internal fermionic variables.  The
supersymmetric extension of those variables, their supertwistor
representation, and the associated path integral belong to subsequent
parts of the construction.

Section~\ref{sec:SpontaneousStochasticization} gives a complementary
dynamical interpretation.  The analytically continued SHD factor is an exact
fixed point of the homogeneous zero-noise loop flow because the induced
generator \eqref{eq:ProjectedLoopFlowGenerator} is the exact local
equation-of-motion combination of the chiral and anti-chiral I--T/POOC
operators, each of which annihilates the SHD factor at every marked point.
No separate loop-flow factorization assumption is involved.  This
fixed-point result is not needed for the master-surface theorem: the exact
zero-mode and finite-\(N\) dressing statements follow already from
Sections~\ref{sec:LoopCalculus}--\ref{sec:SuperHodgeDualSurface}.  What
remains conditional is the asymptotic stability and zero-noise selection of
the fixed point, as summarized in Subsection~\ref{subsec:GradientFlowScope}.

The relation to Witten's proposal may therefore be summarized as
\begin{equation}
    \boxed{
    \text{Witten master field}
    \quad\longrightarrow\quad
    \text{Lorentzian SHD master-surface functional in loop space}.
    }
    \label{eq:WittenMasterFieldToSHDMasterSurface}
\end{equation}
This is an analogy at the level of large-\(N\) gauge-invariant
observables, not an identification with a single colored spacetime
configuration.  The construction is the loop-to-surface map
\eqref{eq:SHDMasterSurfaceMap}.  Its auxiliary coordinates are not
observables, but its invariant boundary curve, Hodge-resolved Stokes
area derivative, real Lorentzian phase, and positive rectangular-loop
energy are gauge-invariant data.  Their compatibility with both planar
factorization and the complete finite-\(N\), gauge-covariantly
ultraviolet-defined Itoyama--Takashino hierarchy is the precise content
of the master-surface interpretation.

\section{Conclusions}
\label{sec:Conclusions}

We have done three things in this paper.  We have written the exact
Itoyama--Takashino hierarchy in a finite path-ordered operator calculus,
proved the finite-\(N\) multiplicative dressing theorem, and constructed
the Lorentzian supersymmetric Hodge-dual Stokes functionals which provide
the confining zero modes of pure \(4d,\mathcal N=1\) super Yang--Mills
theory.

Section~\ref{sec:LoopCalculus} replaced formal equal-point differentiation
by finite ordered loop operations in the native Lorentzian conventions of
Itoyama and Takashino. The
invariant basis pulls back to the contour as
\begin{equation}
    e^A\big|_C
    =ds\,E^A(s),
    \qquad
    E^A(s)D_A
    =\frac{d}{ds}.
\end{equation}
Because the lifted transporter contains \(-E^A\nabla_A\), the tangent
operator is
\begin{equation}
    \delta_A(s)
    =-
    \frac{\delta}{\delta E^A(s)},
\end{equation}
so that it inserts the gauge-covariant superspace derivative
\(\nabla_A=D_A-A_A\).  We replace singular equal-point differentiation by finite path-ordered
products in distinct contour slots.  Their ordered commutator gives the
torsion-subtracted area derivative,
\begin{equation}
    \boxed{
    \delta^\Sigma_{AB}
    =
    T_{AB}{}^C\delta_C
    -[\delta_A,\delta_B\}_o,
    }
\end{equation}
which inserts the gauge supercurvature \(F_{AB}\) exactly.  A subsequent
ordered adjoint operation inserts its covariant derivative \(\mathcal D_CF_{AB}\),
and the three-point graded Jacobi identity gives the torsionful superspace
Bianchi identity.  The ordered separation is part of the operator definition;
it is not an ultraviolet regulator and leaves no auxiliary scale.

Section~\ref{sec:SchwingerDysonContactTerm} rewrote the exact
Itoyama--Takashino hierarchy in this finite operator language. Their
Wess--Zumino-gauge derivation, summarized geometrically in the main text and
reconstructed in detail in Supplementary Material, Sec.~S4,
uses an even field-space vector field whose contraction with the action
gradient yields the complete covariant equation-of-motion superfield. Its
triangular field dependence ensures its local graded functional divergence vanishes,
while its action on the Wilson line generates a nontrivial chain-rule contact
projector. The explicit component fields in that projector are rewritten as endpoint
and mixed-area operations. Gauge-unfixing then replaces
the Wess--Zumino contact expression with the manifestly supersymmetric kernel; their difference is a total derivative in the integrated contour point and
vanishes by the restricted-supergauge Ward identity.  The marked point remains
arbitrary throughout.

The local chiral and anti-chiral operators are therefore the POOC expressions
\begin{equation}
    \widehat{\mathbb L}_+
    =
    \frac18
    \epsilon_{\alpha\beta}
    \bar\sigma^{a\dot\alpha\beta}
    \mathfrak D^{o\,\alpha}
    \delta^\Sigma_{a\dot\alpha},
    \qquad
    \widehat{\mathbb L}_-
    =
    \frac18
    \epsilon_{\dot\alpha\dot\beta}
    \sigma^{a\alpha\dot\beta}
    \mathfrak D^{o\,\dot\alpha}
    \delta^\Sigma_{a\alpha}.
\end{equation}
Equivalently, because the mixed vector--spinor torsion vanishes, each
acts as a triple path-ordered graded commutator with the sign fixed by the
Itoyama--Takashino conventions. These index-contracted operators are not one-component scalar equations: their action produces Grassmann-even
superfield equations at the marked point, whose finite expansions encode the
component Schwinger--Dyson equations of the entire gauge supermultiplet.

The right-hand side is governed by the differentiated superspace contact
kernels \(K_+\) and \(K_-\), alongside exact color splitting and
joining operations. For \(SU(N)\), the complete finite-\(N\)
hierarchy contains same-loop splitting, different-loop joining, and
the traceless-generator subtraction terms displayed in
\eqref{eq:FiniteNMultiloopSMM}. Planar factorization reduces this
hierarchy to the one-loop equation \eqref{eq:PlanarFullSuperspaceSMM}, but the
dressing theorem imposes no planar restriction.

The path-ordered definition must be distinguished from the genuine
short-distance contact in the quantum hierarchy. The latter is physically
defined at a fixed ultraviolet scale by a smooth superspace bi-kernel and short
gauge transporters joining the two separated contour points, retaining the
complete near-diagonal region. The hierarchy is solved at this
fixed ultraviolet scale; only after its exact solution is inserted into the
worldline, worldsheet, proper-time, or other path integrals defining physical
observables is renormalization imposed. No independent perimeter or cusp term is added to the finite-Stokes solution
space.

Section~\ref{sec:StokesFunctionalsLeibniz} identified the even finite Stokes
functionals.  Their first variation is a contour integral of a finite local
super two-form, and their bounded-variation point derivative vanishes.  The
torsion-subtracted area derivative therefore obeys the graded chain rule.
Consequently, the local Itoyama--Takashino operators satisfy
\begin{equation}
    \widehat{\mathbb L}_\pm f(S)
    =
    f'(S)\widehat{\mathbb L}_\pm S
\end{equation}
for every ordinary function of an even finite Stokes functional.
This gives the product rule: a finite-Stokes zero mode factors through every local left-hand side of the
hierarchy.

The differentiated contact kernels require more than a scalar equality
imposed only after the endpoints coincide.  Section
\ref{sec:ZeroModeDressing} uses the geodesic wire of the finite-width
contact as an internal cut of the selected primary surface.  The two
daughter surfaces are restrictions of one parent map, so locality of the
surface action gives the exact additive identity
\[
    S[C_{st}^{\gamma}]+S[C_{ts}^{\gamma}]=S[C].
\]
The Euler--Lagrange boundary variation makes this common total area a
finite Stokes functional.  Hence the relevant defect obeys
\[
    \Delta_S=0,
    \qquad
    \mathfrak D_A(s)\Delta_S
    =\mathfrak D_A(t)\Delta_S=0,
    \qquad
    P_\alpha^{(s)}Q_{(t)}^\alpha\Delta_S=0,
\]
and the joining defect obeys the same three conditions.  Therefore no
quadratic dressing term and no anomalous linear prefactor can be
generated.

These geometric identities give the exact finite-\(N\) theorem
\ref{thm:ProjectedFiniteNDressing}. If \(W_{n,\Lambda}(C_1,\ldots,C_n)\) solves
the complete ultraviolet-defined Itoyama--Takashino hierarchy and \(S[C]\) is an
even finite-Stokes zero mode whose selected surface branch obeys the exact
geodesic-wire split and join identities, then
\begin{equation}
    \widetilde W_{n,\Lambda}(C_1,\ldots,C_n)
    =
    W_{n,\Lambda}(C_1,\ldots,C_n)
    \exp\!\left[
        \rho\sum_{j=1}^{n}S[C_j]
    \right]
\end{equation}
solves exactly the same hierarchy for every even constant \(\rho\), real or
complex. This includes same-loop splitting, different-loop joining, and all
finite-\(N\) \(SU(N)\) subtraction terms, as an identity at every fixed
ultraviolet scale.

Section~\ref{sec:SuperHodgeDualSurface} constructed the Lorentzian
supersymmetric Hodge-dual surface.  The physical superloop first determines the
invariant boundary curve obtained by integrating its tangent,
\begin{equation}
    d\mathcal C^a=e^a\big|_C,
    \qquad
    \oint_Ce^a=0.
\end{equation}
The master surface problem is then posed for a stationary complexified
Minkowski embedding with this boundary curve as its Dirichlet
boundary. Since the Lorentzian Hodge star satisfies \((*_4)^2=-1\) on two-forms,
the two sectors are complex and are selected by
\begin{equation}
    P_\chi
    =
    \frac12(1+i\chi *_4),
    \qquad
    *_4P_\chi=-i\chi P_\chi,
    \qquad
    \chi=\pm1.
\end{equation}
The associated Hodge-locked auxiliary coordinates merely serve as a
representation of these projectors; the physical data lie in the stationary boundary fluxes and their Stokes
variations.

The stationary parent surface directly supplies finite Stokes area
 derivatives $\Omega_{ab}^{(\chi)}$ in the two Lorentzian Hodge
 sectors. Their supersymmetric dependence is already carried by the
 invariant superspace one-form; no independent chiral field
 strength, superinstanton, or superspace two-form completion is needed.
 On the regular Super--Weierstrass branch, $S_\chi[C]$ is the stationary
 value of the parent functional.  Since the bulk equations hold, its first
 variation is the boundary Stokes term.  Thus it is already a well-defined
 loop functional and no separate integrability assumption is needed. On a
 physical Lorentzian slice, the two branches are chosen as conjugates,
\begin{equation}
    S_-[C]=\bigl(S_+[C]\bigr)^\dagger.
\end{equation}

The local identity is the ordered Jacobi identity.  Torsion subtraction turns it
into the loop-space Bianchi identity.  The Itoyama--Takashino chiral or anti-chiral
 projection, four-dimensional Hodge duality converts this identity
 directly into the corresponding homogeneous loop equation. Therefore the operator algebra gives the required loop equation for the
surface:
\begin{equation}
    \boxed{
    \widehat{\mathbb L}_\pm S_+[C]
    =
    \widehat{\mathbb L}_\pm S_-[C]
    =0.
    }
\end{equation}
For the free-trace BV twistor continuation, the four one-sided germs
are redistributed between the two daughter loops.  The geodesic
cut partitions one parent Super--Weierstrass surface, so the two
daughter actions add exactly to the parent action.  Exact additivity makes the splitting defect vanish, while the
finite-Stokes point identity gives its two endpoint conditions and their
projected derivative.  The joining branch is treated by the analogous
disjoint-union construction.  Hence the SHD functionals satisfy every
condition of the finite-\(N\) theorem by construction.

The two complex branches combine into the real Lorentzian phase
\begin{equation}
    \Phi_{\rm SHD}^{(k)}[C]
    =
    \frac{c}{2}
    \left(
        \Lambda_k^2S_+[C]
        +
        \bar\Lambda_k^2S_-[C]
    \right)
    =
    c\,\Re\!\left[
        \Lambda_k^2S_+[C]
    \right],
\end{equation}
where \(c\), the branch of \(\Lambda_k^2=(\Lambda_k^3)^{2/3}\), and the supersymmetric vacuum
label \(k\) are nonperturbative vacuum data. The exact dressed
solution is therefore
\begin{equation}
    \boxed{
    W_{\Lambda,k}[C]
    =
    W_{\Lambda,0}[C]
    \exp\!\left[
        i\Phi_{\rm SHD}^{(k)}[C]
    \right].
    }
\end{equation}
The hierarchy determines the allowed geometric zero mode, but it does not
determine its coefficient or the selected vacuum branch.

For planar contours this gives a direct physical result without an additional
large-area assumption.  On the explicitly constructed
Virasoro--Weierstrass branch, the Fourier--Hilbert evaluation gives the exact
stationary value
\begin{equation}
    S_+[C_{\rm pl}]=S_-[C_{\rm pl}]
    =2\sqrt2\left(
        A_{\rm pl}[x]
        +\Delta_\eta^{(2)}
        +\Delta_\eta^{(4)}
    \right).
\end{equation}
Thus the scalar value of each Hodge branch is the geometric planar area,
with only the finite nilpotent corrections generated by the boundary
fermions.  For the ordinary static-source rectangle the boundary fermions
are set to zero, so that \(A_{\rm pl}=LT_E\) and
\begin{equation}
    \Phi_{E,\rm SHD}^{(k)}[C_{T_E,L}]
    =\sigma_kLT_E,
    \qquad
    \sigma_k=2\sqrt2\,\kappa_k>0.
\end{equation}
After analytic continuation \(T_E\to iT\),
\begin{equation}
    \Phi_{\rm SHD}^{(k)}[C_{T,L}]
    =-\sigma_kLT,
    \qquad
    W_{\Lambda,k}[C_{T,L}]
    =W_{\Lambda,0}[C_{T,L}]\exp[-i\sigma_kLT].
\end{equation}
The SHD zero mode therefore contributes the exact positive linear static
energy
\begin{equation}
    E_{{\rm SHD},k}(L)=\sigma_kL>0.
\end{equation}
For a nonconstant fermionic boundary, the finite nilpotent factor multiplies
this bosonic area-law contribution without changing its body.  The planar
area law is therefore derived from the exact stationary value of the SHD
zero mode, rather than postulated as a Plateau asymptotic.  What remains
undetermined by the hierarchy is the vacuum coefficient \(\kappa_k\), the
fractional-power branch, and the supersymmetric vacuum label.  The
additive touching-loop continuation and its differentiated sewing are fixed
exactly by the geodesic partition of the parent twistor surface, while
exact compatibility with the finite-\(N\) hierarchy follows from the
zero-mode and dressing theorems.

Section~\ref{sec:SpontaneousStochasticization} established an exact
dynamical corollary.  Because the Euclidean SYM gradient flow is generated
by the action gradient, its homogeneous loop-space image is the exact local
equation-of-motion combination
\begin{equation}
    \mathcal H_{\rm loop}^{(0)}
    =
    \oint_C ds\,
    \left[
        \mathcal T_+(s)\circ\widehat{\mathbb L}_{+,E}(s)
        +
        \mathcal T_-(s)\circ\widehat{\mathbb L}_{-,E}(s)
    \right].
\end{equation}
The analytically continued SHD factor is annihilated by each local
operator at every marked point.  Hence
\begin{equation}
    \mathcal H_{\rm loop}^{(0)}\mathcal Z_{E,k}=0,
    \qquad
    \partial_\tau\mathcal Z_{E,k}=0,
\end{equation}
without an additional factorization assumption.  The SHD state is therefore
an exact fixed point of the zero-noise SYM loop flow.  What is not proved
is its asymptotic stability and dynamical selection.  Assuming that this
fixed point is stable in the physical sector, after gauge zero modes have
been removed and with the long-flow-time and large-volume limits taken before
the external noise is sent to zero, the surviving confining state realizes
\emph{spontaneous stochasticization}.  Thus exact stationarity is a theorem;
stability and selection are the remaining dynamical problem.

At large \(N\), Section~\ref{sec:SHDAsWittenMasterField} interpreted the
same object as a loop-space master-surface functional.  It is neither a fluctuating
colored gauge field nor a single contour-independent surface.  It is the map
\begin{equation}
    C
    \longmapsto
    \mathcal C_C
    \longmapsto
    Y_C
    \longmapsto
    \bigl(S_+[C],S_-[C]\bigr)
    \longmapsto
    \Phi_{\rm SHD}^{(k)}[C].
\end{equation}
This master surface describes the rigid confining vacuum geometry carried by
the exact zero mode.  The undressed factor \(W_{\Lambda,0}[C]\) contains the
remaining gauge dynamics.  Additional surface variables are needed to describe
the complete glueball and gluinoball spectrum.

POOC, the finite-\(N\) dressing theorem, and the Lorentzian SHD surface
fit together into one construction on the BV/supertwistor loop class used in
this paper.  The parent variational principle defines $S_\pm[C]$ as stationary scalar
functionals.  Their finite Stokes variations follow from the boundary term,
so no separate integrability assumption is needed.  At a contact, the free
BV twistor data reproduce two daughter loops with independent endpoint
traces and separate closure conditions; this gives the exact additive surface at the self-contact.  Supplementary
Material, Sec.~S7 proves the differentiated geodesic-wire sewing identity.  The Fourier--Hilbert calculation in
Subsection~\ref{subsec:SHDPlanarSuperarea} likewise gives the planar
superarea and rectangular area law exactly.

What remains undetermined by the loop hierarchy is the physical vacuum
data: the coefficient $\kappa_k$ (equivalently $c$), the branch of
$\Lambda_k^2$, the supersymmetric vacuum label $k$, and the undressed
solution $W_{\Lambda,0}$.  The gradient-flow fixed-point statement requires
no additional operator or factorization hypothesis: exact stationarity follows
from the equation-of-motion form of the SYM gradient flow and the local SHD
zero-mode identities.  The unproved dynamical issue is the asymptotic
stability and selection of this fixed point in the physical zero-noise limit.
Thus the Lorentzian supersymmetric Hodge-dual surface gives an exact
confining finite-Stokes dressing of the complete ultraviolet-defined planar
and finite-$N$ superloop hierarchies and, after Euclidean continuation, an
exact fixed point of the zero-noise SYM loop flow.  Vacuum selection
and fixed-point stability remain the physical inputs; no additional integrability assumption, external theorem about Plateau
surfaces or bridges, planar-area-law assumption, or loop-flow factorization
is required.

\appendix

\section{Relation of POOC to existing loop calculus}
\label{app:POOCExistingLoopCalculus}

The path-ordered operator calculus introduced in this paper has clear
predecessors in the mathematical theory of loops.  There is a substantial earlier theory of
loops, holonomies, iterated integrals, endpoint derivatives, and area
derivatives.  This Appendix explains that connection and what POOC adds for the
Makeenko--Migdal and Itoyama--Takashino equations.  A longer comparison is
given in Supplementary Material, Sec.~S11.

Chen's iterated integrals provide the natural algebraic language for
path-ordered information carried by a curve \cite{Chen1973}.  The generalized
loop calculus of Tavares uses this algebra to construct generalized loops and
to define endpoint, area, and variational derivatives, together with a
natural diffeomorphism action \cite{Tavares1994LoopCalculus}.  The extended
loop group of Di Bartolo, Gambini, and Griego gives nonparametric loop
coordinates transforming linearly under diffeomorphisms and enlarges the
ordinary loop group to a locally Lie-like infinite-dimensional object
\cite{DiBartoloGambiniGriego1993ExtendedLoopGroup,
DiBartoloGambiniGriego1995ExtendedRepresentation}.  Reiris and Spallanzani
developed a differentiable calculus on the path bundle and identified the
loop derivative as the curvature of a universal connection, so that its
Bianchi identity becomes an ordinary curvature Bianchi identity
\cite{ReirisSpallanzani1999LoopCalculus}.  The distinction between loops,
retracing classes, and holonomy classes was further clarified in
Ref.~\cite{Spallanzani2001LoopsHoops}.  Finally, the signature theory of
Hambly and Lyons gives a rigorous global algebra for bounded-variation paths
and identifies their iterated-integral data modulo tree-like equivalence
\cite{HamblyLyons2010Signature}.  These constructions, together with the
physics loop calculus reviewed in Ref.~\cite{GambiniPullin1996_LoopsKnots},
establish much of the global mathematical background of POOC.

The local problem posed by the loop equations is, however, more specific.
The physical argument of a Wilson functional is an oriented BV loop modulo
orientation-preserving reparametrization,
\begin{equation}
    \mathcal L_{\rm BV}
    =
    \mathcal P_{\rm BV}/\operatorname{Diff}^{+}_{\rm BV}(S^1),
    \label{eq:AppendixPhysicalBVLoopSpace}
\end{equation}
where the quotient removes only the numerical parametrization.  It retains
cyclic order, orientation, winding multiplicity, and the distinct preimages
of a self-intersection.  We do not quotient by all tree-like or holonomy
equivalences before the contact operation is defined, because the
Makeenko--Migdal hierarchy uses precisely these ordered preimages when it
splits or joins loops.

The new variable is the invariant tangent density.  If
\(s=\varphi(u)\), then
\begin{equation}
    \widetilde E^A(u)
    =\varphi'(u)E^A(\varphi(u)),
\end{equation}
whereas the functional delta-function supplies the inverse Jacobian.  Hence
\begin{equation}
    \boxed{
    \frac{\delta}{\delta\widetilde E^A(u)}
    =
    \frac{\delta}{\delta E^A(\varphi(u))}
    }
    \label{eq:AppendixWeightZeroDerivative}
\end{equation}
and the POOC tangent derivative has weight zero.  This differs from the
coordinate path derivative, which is a density of weight one.  The diagonal
of its second derivative is therefore a distribution with nonzero reparametrization weight rather
than a finite scalar on \(\mathcal L_{\rm BV}\).  This explains
why the old coordinate loop Laplacian mixed the genuine ultraviolet contact
with a kinematical equal-point singularity.

The second extension is the use of the two one-sided traces of a BV tangent
as ordered operator slots.  POOC forms
\begin{equation}
    [\delta_A,\delta_B\}_o(s)
    =
    \delta_A(s-0)\delta_B(s+0)
    -(-1)^{|A||B|}
     \delta_B(s-0)\delta_A(s+0)
    \label{eq:AppendixOrderedBVCommutator}
\end{equation}
before the ordered points are brought together.  Thus the operation is
finite and scale-free.  In superspace the full commutator also contains the
known flat torsion, which must be removed:
\begin{equation}
    \delta^\Sigma_{AB}
    =
    T_{AB}{}^C\delta_C
    -[\delta_A,\delta_B\}_o.
    \label{eq:AppendixTorsionSubtractedAreaDerivative}
\end{equation}
The operator lift then gives the exact dictionary
\begin{equation}
\begin{aligned}
    \delta_A
    &\longleftrightarrow \nabla_A,
    \\
    \delta^\Sigma_{AB}
    &\longleftrightarrow F_{AB},
    \\
    \mathfrak D_C^{\,o}\delta^\Sigma_{AB}
    &\longleftrightarrow \mathcal D_CF_{AB}.
\end{aligned}
    \label{eq:AppendixPOOCDictionary}
\end{equation}
The three-slot ordered Jacobi identity is consequently the torsionful
superspace Bianchi identity.  This is closely related to the earlier
identification of the loop derivative with curvature, but here it is
implemented directly in the local BV operator algebra required by the
superloop equation.

This extra local structure is what connects the Hodge surface to the gauge
theory.  The parent surface gives a finite Stokes area derivative
\(\Omega^{(\chi)}\) in one Lorentzian Hodge sector.  Hodge duality alone is
a statement about a surface.  To conclude that its exponential is a zero
mode of the loop equation, one must identify its area derivative with a
curvature insertion and its next point derivative with the covariant
Bianchi combination.  POOC supplies precisely these two identifications:
\begin{equation}
    *_4\Omega^{(\chi)}=-i\chi\Omega^{(\chi)}
    \quad+\quad
    \text{ordered Jacobi identity}
    \quad\Longrightarrow\quad
    \widehat{\mathbb L}_\pm S_\chi=0.
    \label{eq:AppendixHodgeToLoopEquation}
\end{equation}
Without this ordered curvature calculus there would remain a gap between
the geometric Hodge surface and the local differential operator of the
loop equation.

Finally, the quantum hierarchy contains a second structure which is not part
of the homogeneous curvature calculus: the physical ultraviolet contact.
POOC leaves that distribution intact and defines it with a finite-width
superspace kernel and short gauge wires.  The same geodesic wire is an internal cut of the selected primary
surface.  Locality of the parent
surface action then gives the exact domain identity
\begin{equation}
\begin{aligned}
    S[C_{st}^{\gamma}]+S[C_{ts}^{\gamma}]
    &=
    \int_{\mathcal D_{st}^{\gamma}}\!\mathcal L
    +
    \int_{\mathcal D_{ts}^{\gamma}}\!\mathcal L
    \\
    &=
    \int_{\mathcal D_C}\!\mathcal L
    =S[C].
\end{aligned}
    \label{eq:AppendixExactGeodesicAreaPartition}
\end{equation}
The common wire has zero two-dimensional measure and is traversed with
opposite orientations in the two daughter boundaries.  Equation
\eqref{eq:AppendixExactGeodesicAreaPartition} proves exact equality of the
three scalar surface values, and hence exact cancellation of their
exponential area factors.  By itself, however, scalar additivity is not yet
the statement required by the differentiated contact kernel.

The missing differential statement follows from the finite-Stokes property
of the \emph{total} parent area.  Subsection
\ref{subsec:BoundaryVariationAndHodgeAreaDerivative} derives the boundary
variation from the Euler--Lagrange equation,
\begin{equation}
    \delta S[C]
    =\oint_C ds\,\xi^A(s)E^B(s)\Omega_{BA}[C;s],
    \label{eq:AppendixParentAreaStokesVariation}
\end{equation}
and Subsection~\ref{subsec:SHDStokesIntegrability} identifies the stationary
SHD area as an even finite Stokes functional.  Proposition
\ref{prop:MainFiniteStokesChainRules} therefore applies at every marked
point of the parent boundary:
\begin{equation}
    \boxed{
    \mathfrak D_A(u)S[C]
    =
    \frac{\delta S[C]}{\delta E^A(u-0)}
    -
    \frac{\delta S[C]}{\delta E^A(u+0)}
    =0,
    \qquad u=s,t.
    }
    \label{eq:AppendixParentAreaPointDerivatives}
\end{equation}
Geometrically, a deformation supported at one boundary point sweeps zero
area in the shrinking-interval limit: the Stokes variation
\eqref{eq:AppendixParentAreaStokesVariation} is an integral of a finite
local density and vanishes with the interval, whereas a nonzero jump of the
conjugate tangent derivative would remain finite.  This is the BV endpoint
argument of Subsection~\ref{subsec:PointDerivativeAndClassicalSMMOperator}
and Supplementary Material, Sec.~S5.

Since Eq.~\eqref{eq:AppendixExactGeodesicAreaPartition} is an exact
functional identity for the endpoint-dependent geodesic cut, the point
derivatives appearing in the contact term act on the sum of the two daughter
areas as on the single parent Stokes functional:
\begin{equation}
\begin{aligned}
    \mathfrak D_A(s)
    \bigl(S[C_{st}^{\gamma}]+S[C_{ts}^{\gamma}]\bigr)
    &=\mathfrak D_A(s)S[C]=0,
    \\
    \mathfrak D_A(t)
    \bigl(S[C_{st}^{\gamma}]+S[C_{ts}^{\gamma}]\bigr)
    &=\mathfrak D_A(t)S[C]=0.
\end{aligned}
    \label{eq:AppendixDaughterSumPointDerivatives}
\end{equation}
Subtracting the parent contribution from
\eqref{eq:AppendixDaughterSumPointDerivatives} gives
\begin{equation}
    \mathfrak D_A(s)\Delta_S
    =\mathfrak D_A(t)\Delta_S
    =0.
    \label{eq:AppendixDefectPointDerivativeConditions}
\end{equation}
Because these are functional identities on the endpoint-dependent
geodesic branch, applying the complementary endpoint operator selected by
the differentiated contact gives
\begin{equation}
    P_\alpha^{(s)}Q_{(t)}^\alpha\Delta_S=0,
    \label{eq:AppendixDefectProjectedCondition}
\end{equation}
with the anti-chiral conjugate.  Together with \(\Delta_S=0\), these are
precisely the three conditions which make the quadratic and mixed terms in
the differentiated exponential vanish.  This finite-Stokes
argument---rather than scalar additivity alone---is the reason why the
geodesically partitioned area dressing passes through the ultraviolet
splitting and joining terms.  It is the additional ingredient needed to
extend a homogeneous zero mode to an exact multiplicative solution of the
complete finite-\(N\) hierarchy.

The new content should therefore be stated without erasing the earlier
mathematics.  Endpoint derivatives, area derivatives, generalized loops,
and curvature interpretations already existed.  POOC is the synthesis and
extension needed here: a reparametrization-invariant local calculus of
weight-zero tangent derivatives on BV loops, with circle ordering,
superspace torsion subtraction, and UV-wire sewing.  These additions make
the calculus suitable for deriving the SHD surface zero mode and proving its
compatibility with the exact loop equations.

\section*{Declaration of generative AI and AI-assisted technologies in the writing process}
GPT 5.6 Pro and GPT 5.6 Sol Codex were used for algebraic derivations,
code generation, and exact symbolic checks.  Gemini 3.1 Pro,
ChatGPT 5.5, and Claude Opus 4.8 were used for copy-editing,
presentation, and referee-style review. The author has
reviewed and edited the manuscript and takes full responsibility for its content.
\section*{Code and data availability}

No empirical data were generated or analyzed in this work. The
\Mathematica{} notebooks used to verify the algebraic identities and
component calculations are provided as ancillary files in the
Supplementary Material accompanying this manuscript.

\bibliographystyle{plainnat}
\bibliography{bibliography}

@article{MMEq79,
  author  = {Makeenko, Yu. M. and Migdal, A. A.},
  title   = {Exact equation for the loop average in multicolor QCD},
  journal = {Physics Letters B},
  volume  = {88},
  number  = {1},
  pages   = {135--137},
  year    = {1979},
  issn    = {0370-2693},
  doi     = {10.1016/0370-2693(79)90131-3},
  url     = {https://doi.org/10.1016/0370-2693(79)90131-3}
}

@article{MM1981NPB,
  author       = {Yuri M. Makeenko and Alexander A. Migdal},
  title        = {Quantum Chromodynamics as Dynamics of Loops},
  journal      = {Nuclear Physics B},
  volume       = {188},
  pages        = {269--316},
  year         = {1981},
  doi          = {10.1016/0550-3213(81)90105-2},
  publisher    = {Elsevier},
}

@article{migdal2025SQYMflow,
title = {Spontaneous quantization of the Yang–Mills gradient flow},
journal = {Nuclear Physics B},
volume = {1020},
pages = {117129},
year = {2025},
issn = {0550-3213},
doi = {https://doi.org/10.1016/j.nuclphysb.2025.117129},
url = {https://www.sciencedirect.com/science/article/pii/S0550321325003384},
author = {Alexander Migdal}
}

@article{Mig83,
  author  = {A. A. Migdal},
  title   = {Loop equations and $1/N$ expansion},
  journal = {Physics Reports},
  volume  = {102},
  number  = {4},
  pages   = {199--290},
  year    = {1983},
  doi     = {10.1016/0370-1573(83)90076-5}
}

@article{Mig98Hidden,
    author       = {Migdal, Alexander A.},
  title        = {Hidden symmetries of large {N} {QCD}},
  journal      = {Prog. Theor. Phys. Suppl.},
  volume       = {131},
  pages        = {269--307},
  year         = {1998},
  doi          = {10.1143/PTPS.131.269},
  eprint       = {hep-th/9610126},
  archivePrefix= {arXiv},
  primaryClass = {hep-th},
  note         = {Preprint arXiv:hep-th/9610126 (October 1996)}
}

@article{migdal2023exact, 
    title={To the Theory of Decaying Turbulence},
    volume={7}, 
    ISSN={2504-3110}, 
    url={http://dx.doi.org/10.3390/fractalfract7100754}, 
    DOI={10.3390/fractalfract7100754}, 
    number={10}, 
    journal={Fractal and Fractional}, 
    publisher={MDPI AG}, 
    author={Migdal, Alexander}, 
    year={2023}, 
    month={Oct}, 
    pages={754},
    eprint={2304.13719},
    archivePrefix={arXiv},
    primaryClass={physics.flu-dyn}
}

@article{migdal2024quantum,
  author       = {Alexander Migdal},
  title        = {Quantum solution of classical turbulence: Decaying energy spectrum},
  journal      = {Physics of Fluids},
  volume       = {36},
  number       = {9},
  pages        = {095161},
  year         = {2024},
  doi          = {10.1063/5.0228660},
  publisher    = {AIP Publishing}
}

@article{Migdal_1986_Stochastic,
doi = {10.1070/PU1986v029n05ABEH003373},
url = {https://dx.doi.org/10.1070/PU1986v029n05ABEH003373},
year = {1986},
month = {may},
publisher = {},
volume = {29},
number = {5},
pages = {389},
author = {A A Migdal},
title = {Stochastic quantization of field theory},
journal = {Soviet Physics Uspekhi}
}

@article{Chen1973,
  author  = {Chen, Kuo-Tsai},
  title   = {Iterated integrals of differential forms and loop space},
  journal = {Annals of Mathematics},
  year    = {1973},
  volume  = {97},
  number  = {2},
  pages   = {217--246}
}

@book{GambiniPullin1996_LoopsKnots,
  author    = {Gambini, Rodolfo and Pullin, Jorge},
  title     = {Loops, Knots, Gauge Theories and Quantum Gravity},
  publisher = {Cambridge University Press},
  series    = {Cambridge Monographs on Mathematical Physics},
  year      = {1996},
  address   = {Cambridge, UK},
  isbn      = {9780521473320}
}

@misc{ReviewPaperAM,
  author       = {Migdal, Alexander},
  title        = {Geometric Solution of Turbulence as Diffusion in Loop Space},
  year         = {2026},
  doi          = {10.1098/rsta.2025.0032},
  howpublished = {Forthcoming in Philosophical Transactions of the Royal Society A, Special Issue "Frontiers of Turbulence and Statistical Physics Meet"},
   note         = { Accepted},
  eprint       = {https://doi.org/10.48550/arXiv.2511.02165},
  archivePrefix = {arXiv},
  primaryClass = {physics.flu-dyn}
}

@article{Migdal2026GeometricQCDII,
title = {Geometric QCD II: The confining twistor string and meson spectrum},
author={Alexander Migdal},
journal = {Nuclear Physics B},
volume = {1026},
pages = {117424},
year = {2026},
issn = {0550-3213},
doi = {https://doi.org/10.1016/j.nuclphysb.2026.117424},
url = {https://www.sciencedirect.com/science/article/pii/S055032132600132X}
}

@article{Migdal2026GeometricQCDIII,
title = {Geometric QCD III: Exact transition amplitudes and the glueball spectrum},
author={Alexander Migdal},
 journal = {arXiv preprint},
  year = {2026},
  eprint = {2605.02373},
  archivePrefix = {arXiv},
  primaryClass = {physics.hep-th},
  year = {2026}
}

@misc{migdal2025geometric,
title = {Geometric QCD I: the Hodge-dual surface and quark confinement},
author = {Alexander Migdal},
journal = {Nuclear Physics B},
volume = {1025},
pages = {117380},
year = {2026},
issn = {0550-3213},
doi = {https://doi.org/10.1016/j.nuclphysb.2026.117380},
url = {https://www.sciencedirect.com/science/article/pii/S055032132600088X}
}

@inproceedings{Witten:1980ez,
    author = "Witten, Edward",
    title = "{The $1/N$ Expansion in Atomic and Particle Physics}",
    booktitle = "{Recent Developments in Gauge Theories. Proceedings, Nato Advanced Study Institute, Cargese, France, August 26 - September 8, 1979}",
    editor = "'t Hooft, G. and others",
    publisher = "Plenum Press",
    address = "New York",
    year = "1980",
    pages = "403--419",
    doi = "10.1007/978-1-4684-7571-5_21"
}

@misc{migdal2026EulerStability,
      title={Euler Ensemble as Decaying Turbulence Attractor: Universality, Stability and Parity Classes}, 
      author={Alexander Migdal},
      year={2026},
      eprint={2607.05745},
      archivePrefix={arXiv},
      primaryClass={nlin.CD},
      url={https://arxiv.org/abs/2607.05745}, 
}

@book{WessBagger,
  author    = {Wess, Julius and Bagger, Jonathan},
  title     = {Supersymmetry and Supergravity},
  edition   = {2},
  publisher = {Princeton University Press},
  address   = {Princeton, NJ},
  year      = {1992},
  isbn      = {9780691025308}
}

@book{GatesGrisaruRocekSiegel,
  author    = {Gates, S. James and Grisaru, Marcus T. and Rocek, Martin and Siegel, Warren},
  title     = {Superspace, or One Thousand and One Lessons in Supersymmetry},
  publisher = {Benjamin/Cummings},
  address   = {Reading, MA},
  year      = {1983},
  eprint    = {hep-th/0108200},
  archivePrefix = {arXiv},
  primaryClass  = {hep-th}
}

@article{ADHM,
  author  = {Atiyah, Michael F. and Drinfeld, Vladimir G. and Hitchin, Nigel J. and Manin, Yuri I.},
  title   = {Construction of Instantons},
  journal = {Physics Letters A},
  volume  = {65},
  number  = {3},
  pages   = {185--187},
  year    = {1978},
  doi     = {10.1016/0375-9601(78)90141-X}
}

@article{Osborn1979,
  author  = {Osborn, Hugh},
  title   = {Semiclassical Functional Integrals for Self-Dual Gauge Fields},
  journal = {Annals of Physics},
  volume  = {135},
  number  = {2},
  pages   = {373--408},
  year    = {1981},
  doi     = {10.1016/0003-4916(81)90123-7}
}

@article{DoreyKhozeMattis2002,
  author        = {Dorey, Nick and Hollowood, Timothy J. and Khoze, Valentin V. and Mattis, Michael P.},
  title         = {The Calculus of Many Instantons},
  journal       = {Physics Reports},
  volume        = {371},
  number        = {4--5},
  pages         = {231--459},
  year          = {2002},
  doi           = {10.1016/S0370-1573(02)00343-1},
  eprint        = {hep-th/0206063},
  archivePrefix = {arXiv},
  primaryClass  = {hep-th}
}

@article{KadohUkita2018,
  author        = {Kadoh, Daisuke and Ukita, Naoya},
  title         = {Supersymmetric Gradient Flow in \(N=1\) SYM},
  journal       = {European Physical Journal C},
  volume        = {79},
  pages         = {879},
  year          = {2019},
  doi           = {10.1140/epjc/s10052-019-7375-1},
  eprint        = {1812.02351},
  archivePrefix = {arXiv},
  primaryClass  = {hep-th}
}

@article{KikuchiOnogi2014,
  author        = {Kikuchi, Kenji and Onogi, Tetsuya},
  title         = {Generalized Gradient Flow Equation and Its Application to Super Yang--Mills Theory},
  journal       = {Journal of High Energy Physics},
  volume        = {2014},
  number        = {11},
  pages         = {094},
  year          = {2014},
  doi           = {10.1007/JHEP11(2014)094},
  eprint        = {1408.2185},
  archivePrefix = {arXiv},
  primaryClass  = {hep-th}
}

@article{DelDebbioPatellaRago2013,
  author        = {Del Debbio, Luigi and Patella, Agostino and Rago, Antonio},
  title         = {Space-Time Symmetries and the Yang--Mills Gradient Flow},
  journal       = {Journal of High Energy Physics},
  volume        = {2013},
  number        = {11},
  pages         = {212},
  year          = {2013},
  doi           = {10.1007/JHEP11(2013)212},
  eprint        = {1306.1173},
  archivePrefix = {arXiv},
  primaryClass  = {hep-th}
}

@article{Luscher2010WilsonFlow,
  author        = {L{\"u}scher, Martin},
  title         = {Properties and Uses of the Wilson Flow in Lattice QCD},
  journal       = {Journal of High Energy Physics},
  volume        = {2010},
  number        = {8},
  pages         = {071},
  year          = {2010},
  doi           = {10.1007/JHEP08(2010)071},
  eprint        = {1006.4518},
  archivePrefix = {arXiv},
  primaryClass  = {hep-lat}
}

@article{Luscher2013Chiral,
  author        = {L{\"u}scher, Martin},
  title         = {Chiral Symmetry and the Yang--Mills Gradient Flow},
  journal       = {Journal of High Energy Physics},
  volume        = {2013},
  number        = {4},
  pages         = {123},
  year          = {2013},
  doi           = {10.1007/JHEP04(2013)123},
  eprint        = {1302.5246},
  archivePrefix = {arXiv},
  primaryClass  = {hep-lat}
}

@article{ParisiWu1981,
  author  = {Parisi, Giorgio and Wu, Yong-Shi},
  title   = {Perturbation Theory Without Gauge Fixing},
  journal = {Scientia Sinica},
  volume  = {24},
  pages   = {483},
  year    = {1981}
}

@article{DamgaardHueffel1987,
  author  = {Damgaard, Poul H. and H{\"u}ffel, Helmuth},
  title   = {Stochastic Quantization},
  journal = {Physics Reports},
  volume  = {152},
  number  = {5--6},
  pages   = {227--398},
  year    = {1987},
  doi     = {10.1016/0370-1573(87)90144-X}
}

@article{Struwe1994YangMillsFlow,
  author  = {Struwe, Michael},
  title   = {The Yang--Mills Flow in Four Dimensions},
  journal = {Calculus of Variations and Partial Differential Equations},
  volume  = {2},
  pages   = {123--150},
  year    = {1994},
  doi     = {10.1007/BF01234317}
}

@article{Rade1992YangMillsHeat,
  author  = {R{\aa}de, Johan},
  title   = {On the Yang--Mills Heat Equation in Two and Three Dimensions},
  journal = {Journal f{\"u}r die reine und angewandte Mathematik},
  volume  = {431},
  pages   = {123--163},
  year    = {1992},
  doi     = {10.1515/crll.1992.431.123}
}

@article{Donaldson1985ASD,
  author  = {Donaldson, Simon K.},
  title   = {Anti Self-Dual Yang--Mills Connections over Complex Algebraic Surfaces and Stable Vector Bundles},
  journal = {Proceedings of the London Mathematical Society},
  volume  = {50},
  number  = {1},
  pages   = {1--26},
  year    = {1985},
  doi     = {10.1112/plms/s3-50.1.1}
}

@article{ItoyamaTakashino1997,
  author        = {Itoyama, Hiroshi and Takashino, Hiroyuki},
  title         = {Schwinger--Dyson Equation for Supersymmetric Yang--Mills Theory:
                   Manifestly Supersymmetric Form},
  journal       = {Progress of Theoretical Physics},
  volume        = {97},
  number        = {6},
  pages         = {963--1001},
  year          = {1997},
  doi           = {10.1143/PTP.97.963},
  eprint        = {hep-th/9703115},
  archivePrefix = {arXiv},
  primaryClass  = {hep-th},
  url           = {https://arxiv.org/abs/hep-th/9703115}
}

@article{ItoyamaTakashino1996,
  author        = {Itoyama, H. and Takashino, H.},
  title         = {Schwinger--Dyson and Large $N_c$ Loop Equation for
                   Supersymmetric Yang--Mills Theory},
  journal       = {Physics Letters B},
  volume        = {381},
  pages         = {163--168},
  year          = {1996},
  doi           = {10.1016/0370-2693(96)00579-5},
  eprint        = {hep-th/9604023},
  archivePrefix = {arXiv},
  primaryClass  = {hep-th}
}

@article{ViswanathanParthasarathy1992SuperGauss,
  author  = {Viswanathan, K. S. and Parthasarathy, R.},
  title   = {Extrinsic Geometry of World-Sheet Supersymmetry through Generalized Super-Gauss Maps},
  journal = {International Journal of Modern Physics A},
  volume  = {7},
  number  = {24},
  pages   = {5995--6011},
  year    = {1992},
  doi     = {10.1142/S0217751X92002714}
}

@article{BandosEtAl1995DoublySupersymmetric,
  author        = {Bandos, I. and Pasti, P. and Sorokin, D. and Tonin, M. and Volkov, D.},
  title         = {Superstrings and Supermembranes in the Doubly Supersymmetric Geometrical Approach},
  journal       = {Nuclear Physics B},
  volume        = {446},
  pages         = {79--118},
  year          = {1995},
  doi           = {10.1016/0550-3213(95)00267-V},
  eprint        = {hep-th/9501113},
  archivePrefix = {arXiv}
}

@article{BertrandGrundlandHariton2015SuperSurfaces,
  author        = {Bertrand, S. and Grundland, A. M. and Hariton, A. J.},
  title         = {Supersymmetric Version of the Equations of Conformally Parametrized Surfaces},
  journal       = {Journal of Physics A: Mathematical and Theoretical},
  volume        = {48},
  number        = {17},
  pages         = {175208},
  year          = {2015},
  doi           = {10.1088/1751-8113/48/17/175208},
  eprint        = {1412.4697},
  archivePrefix = {arXiv},
  primaryClass  = {math-ph}
}

@article{Tavares1994LoopCalculus,
  author        = {Tavares, J. N.},
  title         = {Chen Integrals, Generalized Loops and Loop Calculus},
  journal       = {International Journal of Modern Physics A},
  volume        = {9},
  number        = {26},
  pages         = {4511--4548},
  year          = {1994},
  doi           = {10.1142/S0217751X94001795},
  eprint        = {hep-th/9305173},
  archivePrefix = {arXiv}
}

@article{DiBartoloGambiniGriego1993ExtendedLoopGroup,
  author        = {Di Bartolo, Cayetano and Gambini, Rodolfo and Griego, Jorge},
  title         = {The Extended Loop Group: An Infinite Dimensional Manifold Associated with the Loop Space},
  journal       = {Communications in Mathematical Physics},
  volume        = {158},
  pages         = {217--240},
  year          = {1993},
  doi           = {10.1007/BF02108073},
  eprint        = {gr-qc/9303010},
  archivePrefix = {arXiv}
}

@article{DiBartoloGambiniGriego1995ExtendedRepresentation,
  author        = {Di Bartolo, Cayetano and Gambini, Rodolfo and Griego, Jorge},
  title         = {Extended Loop Representation of Quantum Gravity},
  journal       = {Physical Review D},
  volume        = {51},
  pages         = {502--512},
  year          = {1995},
  doi           = {10.1103/PhysRevD.51.502},
  eprint        = {gr-qc/9406039},
  archivePrefix = {arXiv}
}

@article{ReirisSpallanzani1999LoopCalculus,
  author        = {Reiris, Martin and Spallanzani, Pablo},
  title         = {A Calculus in Differentiable Spaces and Its Application to Loops},
  journal       = {Classical and Quantum Gravity},
  volume        = {16},
  number        = {8},
  pages         = {2697--2708},
  year          = {1999},
  doi           = {10.1088/0264-9381/16/8/309},
  eprint        = {math/9802080},
  archivePrefix = {arXiv},
  primaryClass  = {math.DG}
}

@article{Spallanzani2001LoopsHoops,
  author        = {Spallanzani, Pablo},
  title         = {Groups of Loops and Hoops},
  journal       = {Communications in Mathematical Physics},
  volume        = {216},
  pages         = {243--253},
  year          = {2001},
  doi           = {10.1007/s002200000284},
  eprint        = {math/9908098},
  archivePrefix = {arXiv},
  primaryClass  = {math.DG}
}

@article{HamblyLyons2010Signature,
  author        = {Hambly, Ben and Lyons, Terry},
  title         = {Uniqueness for the Signature of a Path of Bounded Variation and the Reduced Path Group},
  journal       = {Annals of Mathematics},
  volume        = {171},
  number        = {1},
  pages         = {109--167},
  year          = {2010},
  doi           = {10.4007/annals.2010.171.109},
  eprint        = {math/0507536},
  archivePrefix = {arXiv}
}
\end{document}